\newtheorem{thm}{Theorem}[section]
\newtheorem{cor}[thm]{Corollary}
\newtheorem{lem}[thm]{Lemma}
\theoremstyle{definition}
\newtheorem{defn}[thm]{Definition}
\theoremstyle{remark}
\newtheorem{rem}[thm]{Remark}
\numberwithin{equation}{section}
\newcommand{\RR}{\mathbb{R}}
\newcommand{\QQ}{\mathbb{Q}}
\newcommand{\NN}{\mathbb{N}}
\newcommand{\cA}{\mathcal{A}}
\newcommand{\cF}{\mathcal{F}}
\newcommand{\cG}{\mathcal{G}}
\newcommand{\pp}{\textup{\texttt{+}}}
\newcommand{\mm}{\textup{\texttt{-}}}
\newcommand{\pOne}{\textup{\texttt{+1}}}
\newcommand{\mOne}{\textup{\texttt{-1}}}
\newcommand{\pmOne}{\textup{\texttt{\textpm 1}}}
\newcommand{\tL}{\textup{\texttt{L}}}
\newcommand{\tR}{\textup{\texttt{R}}}
\newcommand{\wind}{\textrm{wind}}  
\newcommand{\Ind}[1]{\mathbf{1}_{\left\{#1\right\}}}
\newcommand{\Excond}[3]{\mathbb{E}^{#1}\left[\left.#2\right|#3\right]}  
\title[Term structure shapes in the two-factor Vasicek model]{State space decomposition and classification of term structure shapes in the two-factor Vasicek model}
\author{Martin Keller-Ressel, Felix Sachse}
\address{Department of Mathematics, TU Dresden, Germany}
\email{martin.keller-ressel@tu-dresden.de, felix.sachse@tu-dresden.de}
\begin{document}
\maketitle

\begin{abstract}
    Using the concept of envelopes we show how to divide the state space $\RR^2$ of the two-factor Vasicek model into regions of identical term-structure shape. We develop a formula for determining the shapes utilizing winding numbers and give a nearly complete classification of the parameter space regarding the occurring shapes.  
\end{abstract}
\tableofcontents
\newpage

\section{Introduction}\label{sec:intro}
The two-dimensional Vasicek model (see \cite{brigo2007interest}) is a stochastic model for the evolution of the short-term interest rate. From such a model, the whole term structure of forward rates and bond yields can be derived in terms of risk-neutral expectations. It extends the single-factor Vasicek model \cite{vasicek1977equilibrium} and uses an Ornstein-Uhlenbeck process evolving in the state space $\mathcal{Z} = \RR^2$ as driver of the short rate. An important motivation to extend the one-factor Vasicek model by an additional factor is that the two-factor model provides a richer set of attainable term structure shapes, see \cite{diez2020yield}. For example, it can produce dipped yield and forward curves, which is impossible in the single-factor case, cf.~\cite{diez2020yield}. As it can be seen in \cref{fig:euro}, market-implied forward and yield curves frequently follow even more complex shapes with multiple `humps' and `dips' that are far beyond the scope of time-homogeneous affine single-factor models. Here, we want to answer the apparently simple question: 
\begin{quote}`Which shape of the yield (or forward) curve is produced by the two-dimensional Vasicek model, conditional on a state $Z_t = z$'?
\end{quote}
In other words, we seek to partition the state space $\mathcal{Z}$ into regions $R_1, \dotsc, R_k$ such that each region can be associated to a certain shape of the yield (or forward) curve, such as \texttt{normal}, \texttt{dipped}, \texttt{humped}, \texttt{inverse} and more complex shapes with multiple local extrema. This simple question has a surprisingly complex geometric answer, with conditional shapes and associated regions described by a single curve $\eta$, the \emph{envelope}, which may display singularities and self-intersections. More precisely, the state-contingent shape given state $z \in \mathcal{Z}$ is shown to depend on the \emph{winding number} of the `augmented' envelope $\hat \eta$ around $z$, see \cref{thm:main}.\\
Being able to answer our initial question also enables us to answer several related questions of practical interest: For example, we can determine the probability of each shape to be observed under both the physical and the risk-neutral measure, and we can classify all possible transitions between different yield curve shapes. Moreover, we rederive and refine several results of \cite{keller2021classification} on the scope of shapes that are attainable over the whole state space $\mathcal{Z}$.

        \begin{figure}[bp]
            \centering
            \includegraphics[width=1\textwidth]{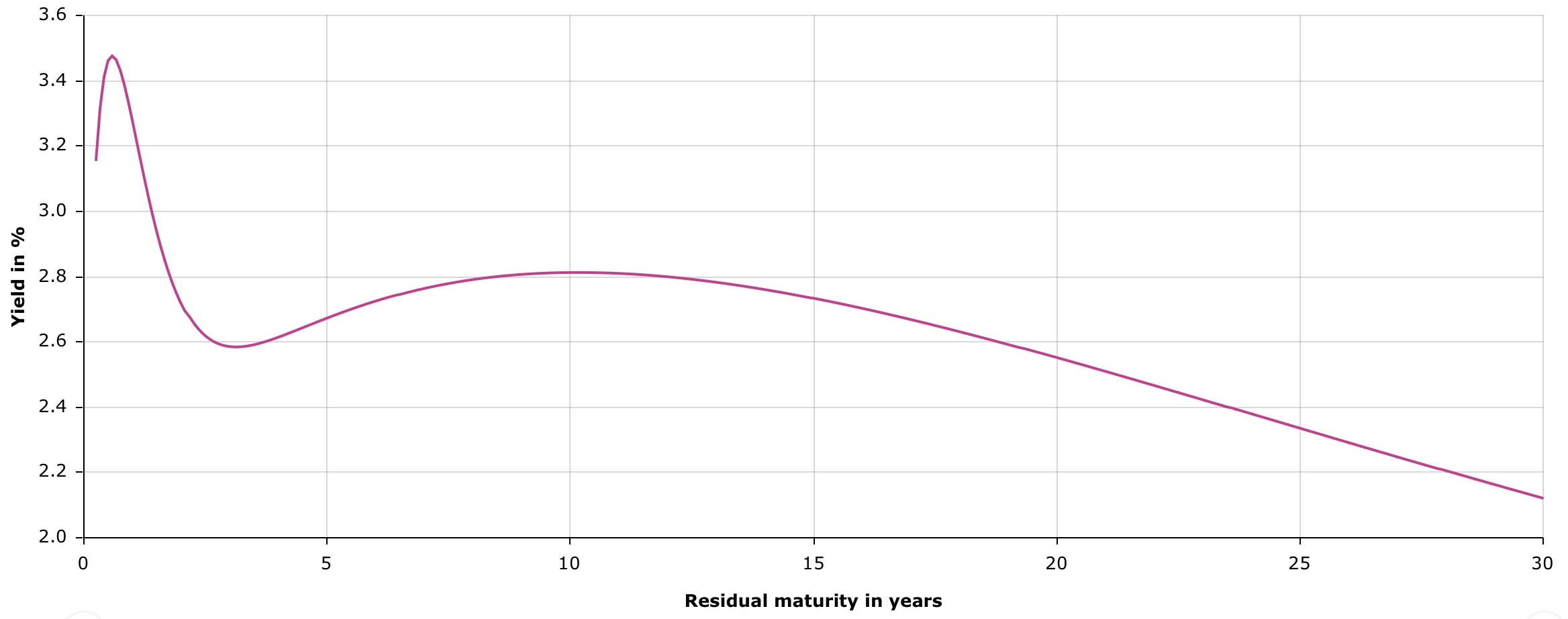}
            \caption{Euro area forward curve, as published by the European Central Bank on March 2nd, 2023. The shape of the forward curve features two humps (local maxima) and one dip (local minimum).\label{fig:euro}}
        \end{figure}

\subsection{Related Literature}
For the one-factor case, it has been shown already in \cite{vasicek1977equilibrium}
that the yield curve can only attain 3 different shapes, namely the \texttt{normal}, \texttt{inverse} and \texttt{humped} type. The shape is solely determined by the value of the short-rate with the associated bounds also given in \cite{vasicek1977equilibrium}. The same holds true for the forward curve and in fact for all one-dimensional affine models, see e.g. \cite{keller2008yield, keller2018correction}. More recently the question of attainable shapes in multi-factor models, in particular the two-factor model, has been considered. Motivated by long-term simulation of interest rates \cite{diez2020yield} show that \texttt{dipped} shapes can occur in a two-factor Vaiscek model and provide bounds on their probability. Using the theory of Descartes systems and total positivity \cite{keller2021classification} gives a full classification of all attainable shapes in the two-factor Vasicek model. Depending on the parameter regime (see \cref{def:regime} below) it is shown that 5, 7 or 9 shapes are attainable. However, the analysis of \cite{keller2021classification} is global over the state space, that is, it does provide only very limited knowledge of the influence of the state $Z_t = z$ on the attained shape and does not allow for a segmentation of the state space $\mathcal{Z}$ according to shape.

     \section{Background}\label{sec:bg}
     We introduce the two-factor Vasicek model, as considered in \cite{brigo2007interest, keller2021classification}, which extends the single-factor model of \cite{vasicek1977equilibrium}. Let $Z=(Z^1, Z^2)$ be a factor process with dynamics
    \begin{equation}\label{eq:SDE}
        \dif Z_t^i = -\lambda_i(Z_t^i - \theta_i)\dif t + \sigma_i\dif B_t^i,\quad i\in\{1,2\}
    \end{equation}
    under the risk-neutral measure $\QQ$, with $\dif B_t^1\dif B_t^2 = \rho\dif t$ and $0 < \lambda_1 < \lambda_2; \theta_1,\theta_2\in\RR; \sigma_1,\sigma_2 > 0$ and $\rho\in[-1,1]$. The short rate $r$ is defined\footnote{Possible scaling factors of $Z_t^1$ or $Z_t^2$ can be absorbed into the SDE \eqref{eq:SDE}, such that generality is not restricted by this definition.} by
    \begin{align*}
        r_t = \kappa +Z_t^1 + Z_t^2,
    \end{align*}
    where $\kappa \in \RR$.
The bond price is given by
    \begin{align*}
        P(t,t+x) = \Excond{\QQ}{\exp\left(-\int_t^{t+x} r_s \,ds\right)}{\cF_t} = \exp(A(x) + Z_t^TB(x))
    \end{align*}
    where
        \begin{align*}
        B(x) = \begin{pmatrix} f(\lambda_1, x) \\ f(\lambda_2, x) \end{pmatrix}, \qquad  f(\lambda, x) &= \frac{1-\text{e}^{-\lambda x}}{\lambda},
    \end{align*}
    and
    \[A(x) = - \frac{\sigma_1^2}{2\lambda_1^2} f(2 \lambda_1,x) - \frac{\sigma_2^2}{2\lambda_2^2} f(2 \lambda_2,x) - r f(\lambda_1 + \lambda_2,x) - u_1 f(\lambda_1,x) - u_2 f(\lambda_2,x), \]
    where
    \begin{align}\label{eq:coeff}
r &= \rho\frac{\sigma_1\sigma_2}{\lambda_1\lambda_2}, \qquad u_i = \left(\theta_i - \frac{\sigma_i^2}{\lambda_i^2} - \rho\frac{\sigma_1\sigma_2}{\lambda_1\lambda_2}\right)\notag
    \end{align}
    for $i \in \set{1,2}$.

Finally, the forward curve $f(x,Z_t)$ and the yield curve $y(x,Z_t)$, are given by 
\begin{equation}\label{eq:fy}
    \begin{split}
        f(x,Z_t) &= -\partial_x \log P(t,t+x) = A'(x) + B'(x)^T Z_t\\
        y(x,Z_t) &= -\frac{1}{x}\log P(t,t+x) = \frac{A(x)}{x} + \frac{B(x)^T}{x}Z_t.
    \end{split}
    \end{equation}
    As in \cite{keller2021classification}, we are interested in the global \emph{shape} of both functions, i.e., their sequence and number of local minima and maxima. In common market terminology a minimum is called a `dip' and a maximum a `hump'. A list of common shapes and their conventional names are given in \cref{tab:shape}. As both yield and forward curve are smooth, we can classify their shapes by analyzing the sign sequence of their derivative with respect to $x$. These functions are given by 
\begin{equation}\label{eq:fy_diff}
    \begin{split}
        \frac{\partial}{\partial x}f(x,Z_t) &= A''(x) + B''(x)^T Z_t\\
        &=: a_{\rm f}(x) + b_{\rm f}(x)Z_t^1 + c_{\rm f}(x)Z_t^2\\
        \frac{\partial}{\partial x} y(x,Z_t)
        &= \frac{A'(x)}{x} - \frac{A(x)}{x^2} + \left(\frac{B'(x)}{x} - \frac{B(x)}{x^2}\right)Z_t\\
        &=: a_{\rm y}(x) + b_{\rm y}(x)Z_t^1 + c_{\rm y}(x)Z_t^2.
    \end{split}
    \end{equation}
    We emphasize that both are \emph{affine} functions of the state $Z_t = (Z_t^1, Z_t^2)$, and denote their coefficients (including the constant term) by $(a(x), b(x), c(x))$ in the general case. It is easily verified that $(b(x), c(x)) \neq 0$ for all $x \in (0,\infty)$; hence these coefficients define a family of (non-degenerate) lines
   \begin{equation}
   \ell_x: \quad a(x) + z_1 b(x) + z_2 c(x) = 0
   \end{equation}
   in the $(z_1,z_2)$-plane. The interpretation of these lines is as follows: If $z \in \ell_x$, then the forward curve (or yield curve) has a local extremum at $x$, given the current state $Z_t = z$. It can be easily shown that also the limiting lines $\ell_0$ (as $x \to 0$) and $\ell_\infty$ (as $x \to \infty$) are well-defined. These lines split the $(z_1,z_2)$-plane into half-spaces $\ell_0^\pm$ and $\ell_\infty^\pm$, such that the forward (or yield) curve is initially increasing/decreasing in $\ell_0^+/\ell_0^-$, as well as terminally increasing/decreasing in $\ell_0^+/\ell_0^-$. Indeed, the complete information regarding all attainable shapes of the forward and yield curve is encoded in this family $\cF = (\ell_x)_{x \in [0,\infty]}$ of lines. To illustrate, we show in \cref{fig:Envelope} the family $\cF$ of lines for a certain parameterization of the two-dimensional Vasicek model.\\
\begin{table}[hbtp]
\begin{center}
\begin{tabular}{p{3cm}p{5cm}p{3cm}} 
\toprule
Shape of the term structure & Description & Sign sequence of derivative\\ 
\midrule 
\texttt{normal} & strictly increasing & $\pp$\\
\texttt{inverse} & strictly descreasing & $\mm$\\
\texttt{humped} & single local maximum & $\pp \mm$\\
\texttt{dipped} & single local minimum & $\mm \pp$\\
\texttt{hd} & hump-dip, i.e. local maximum followed by local minimum & $\pp \mm \pp$\\
\texttt{dh}, \texttt{hdh}, etc. & further sequences of multiple `dips'  and `humps'  & $\dotsc$ \\
\bottomrule
\end{tabular}
\end{center}
\caption{Shapes of the term structure; adapted from \cite{keller2021classification}.\label{tab:shape}}
\end{table}

    \begin{figure}
        \centering
        \includegraphics[width=\textwidth]{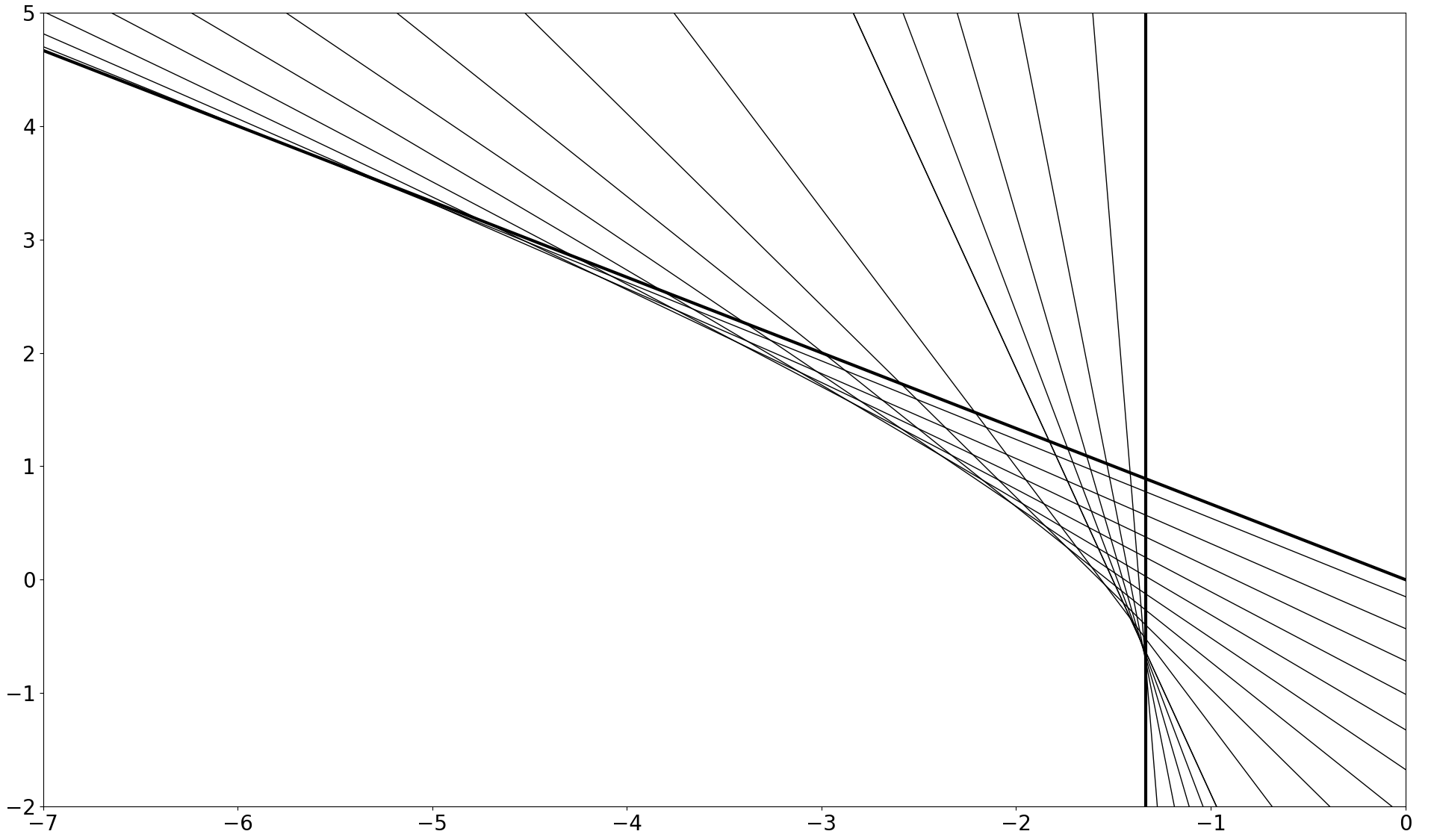}
        \caption{Illustration of the family $\cF = (\ell_x)_{x \in [0,\infty]}$ of lines associated the Vasicek model in the forward case. The limiting lines $\ell_0$ and $\ell_\infty$ are indicated in bold and the envelope $\eta$ of $\cF$ appears as the boundary of the region filled by $\cF$.}
        \label{fig:Envelope}
    \end{figure}
 Importantly, a non-linear contour appears at the boundary of the region covered by $\cF$ in \cref{fig:Envelope}. This contour, usually called the \emph{envelope} of $\cF$, is a classic object of interest in the geometry of lines and curves in the plane, see~\cite[Ch.~5]{bruce1992curves}. It can also be seen that this contour, together with the lines $\ell_0$ and $\ell_\infty$ partitions the plane in regions, which are crossed once, twice, or never by the family $\cF$. As discussed above, this corresponds to state-contingent term structure shapes with a single, two, or no local extremum. Extending this basic idea, it turns out that the envelope is indeed the key to a complete description of state-contingent term-structure shapes. Our main results, Theorem~\ref{thm:main} and \ref{thm:split}, can be summarized as follows:
    \begin{itemize}
 \item The shape of the forward (or yield) curve $x \mapsto f(x,Z_t)$, conditional on $Z_t = z$, is uniquely determined by the position of $z$ relative to $\ell_0$ and $\ell_\infty$, and by the winding number of the `augmented envelope' $\hat \eta$ (the envelope closed by appending segments of $\ell_0$ and $\ell_\infty$) around $z$.
\item The envelope $\eta$, together with the limiting lines $\ell_0$ and $\ell_\infty$, partitions the state space $Z = \RR^2$ into regions $(R_1, \dotsc, R_K)$, each of which corresponds to a certain shape of the forward (or yield) curve $x \mapsto f(x,Z_t)$ appearing conditional on $Z_t \in R_i$. 
 \end{itemize}
 
 We remark that Figure~\ref{fig:Envelope} shows the simplest situation and that the envelope can display self-intersections and other singular points in other cases; see Figure~\ref{fig_ss} and \ref{fig:sp}.

\section{The envelope and related concepts}\label{sec:def}
Although we are eventually interested in the Vasicek model, we will use a more general setup in this section, which may prove useful to analyze other affine term structure models.

\subsection{A family of lines}

    Let $a,b,c:[0,\infty)\rightarrow\RR$ be continuous and on $(0,\infty)$ twice differentiable functions which satisfy 
    \begin{enumerate}[\bf ({A}1)]

        \item $b(x) \neq 0$ and $c(x) \neq 0$ for all $x \in (0,\infty)$.\label{item:nonzero}
    \end{enumerate}
    We consider the family $\cF = (\ell_x)_{x\in (0,\infty)}$ of lines where the line $\ell_x$ consists of all points $z = (z_1, z_2)$ satisfying
    \begin{equation}\label{eq:family}
        F(x,z) := a(x) + b(x)z_1 + c(x) z_2 = 0.
    \end{equation}
	Each line $\ell_x$ also defines a half-space $\ell_x^+$ where
    \begin{align*}
        F(x,z) = a(x) + b(x)z_1 + c(x) z_2 > 0,
    \end{align*}
    and a half-space $\ell_x^-$ with the opposite inequality. 

We would like to have well-defined lines $\ell_0$ and $\ell_\infty$ in the limiting cases, as $x$ tends to the boundaries of $(0,\infty)$. To this end, we introduce the following assumption:
    \begin{enumerate}[resume*]
     \item For $a, b, c$ and their first derivatives the limits $x\rightarrow\delta$, $\delta \in \set{0,\infty}$ exist and are finite. Moreover, 
        $\lim_{x \to \delta}\left(b(x),c(x)\right) \neq 0 $ for $\delta \in \set{0,\infty}$.\label{item:limits}
        \end{enumerate}
        
\begin{rem}\label{rem:scaling}
Using a positive scaling function $\gamma: (0,\infty) \to (0,\infty)$ we can set 
\[\left(\hat a(x), \hat b(x), \hat c(x)\right) = \gamma(x) \left(a(x), b(x), c(x)\right).\]
Note that the corresponding $\hat F(x,z) = \gamma(x) F(x,z)$ defines the same family of lines (and the same half spaces) as $F(x,z)$. Thus, even if condition \ref{item:limits} is not satisfied for the original functions $(a,b,c)$, it may be satisfied after suitable rescaling. In concrete models, after verifying \ref{item:limits} by scaling, it can be convenient to continue working with the unscaled functions. 
\end{rem}

 Under \ref{item:limits}, we denote by $\ell_0$ and $\ell_\infty$ the lines described by the equations
\[F(\delta,z) = a(\delta) + z_1 b(\delta) + z_2 c(\delta) = 0, \qquad \delta \in \set{0,\infty},\]
and by $\ell_0^\pm$ and $\ell_\infty^\pm$ the corresponding half-spaces. We will need the following non-degeneracy condition:
\begin{enumerate}[resume*]
\item The lines $\ell_0$ and $\ell_\infty$ are not parallel,\label{item:intersection}
\end{enumerate}
under which the lines $\ell_0$ and $\ell_\infty$ have a unique intersection point $M$.

   \subsection{Envelope and augmented envelope}
   We now consider the envelope of the family $\cF$ and refer to \cite{bruce1981envelope} and \cite{bruce1992curves} for background and general theory on families of curves in the plane. 
   \begin{defn}\label{def:envelope}
   The \textbf{envelope} $\eta$ of the family $\cF$ consists of all points $z = (z_1,z_2)$, which satisfy \eqref{eq:family} and, in addition, 
    \begin{equation}\label{eq:envelope}
        \partial_x F(x,z) = a'(x) + b'(x)z_1 + c'(x) z_2 = 0.
    \end{equation}
    \end{defn}
    The boundary of the region covered by $\cF$, as indicated in Figure~\ref{fig:Envelope}, will always be part of the envelope, cf.~\cite[Ch.~5]{bruce1992curves}. Together, equations~\ref{eq:family} and \ref{eq:envelope} can be written as a system
    \begin{equation}\label{eq:system}
    \begin{pmatrix}b(x) & c(x) \\ b'(x) & c'(x) \end{pmatrix} \begin{pmatrix}z_1\\z_2\end{pmatrix} = - \begin{pmatrix}a(x)\\a'(x)\end{pmatrix}.
    \end{equation}
Introducing the notation
\[W(f_1, f_2, \dotsc, f_n)(x) = \det \begin{psmallmatrix} f_1(x) & f_2(x) & \dotsm & f_n(x)\\f'_1(x) & f'_2(x) & \dotsm & f'_n(x) \\ ... \\ f^{(n)}_1(x) & f^{(n)}_2(x) & \dotsm & f^{(n)}_n(x) \end{psmallmatrix}\]
for the \emph{Wronskian determinant} of a tuple of functions $f_1\dotsc, f_n$, the following assumption guarantees uniqueness for the solution of \eqref{eq:system}:            \begin{enumerate}[resume*]
        \item $W(b,c)(x) \neq 0$ for all $x > 0$.\label{item:Wbc}
    \end{enumerate}
    \begin{lem}\label{lem:envelope_slope} Under assumptions \ref{item:nonzero} and \ref{item:Wbc} the envelope $\eta = (\eta(x))_{x \in (0,\infty)}$ of $\cF$ is a continuously differentiable curve in $\RR^2$ given by 
        \begin{align*}
        \eta(x) = \left(-\frac{W(c,a)(x)}{W(b,c)(x)}, -\frac{W(a,b)(x)}{W(b,c)(x)}\right)
    \end{align*}
    and with tangent vector 
        \begin{align*}
        \eta'(x) = \frac{W(a,b,c)(x)}{W(b,c)(x)^2} \begin{pmatrix}-c(x)\\b(x)\end{pmatrix}.
    \end{align*}
    Moreover, the slope function $x \mapsto s(x) := -\frac{b(x)}{c(x)}$ of the lines $(\ell_x)_{x \in (0,\infty)}$ is monotone.
    \end{lem}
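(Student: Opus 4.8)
The plan is to read the two conditions \eqref{eq:family} and \eqref{eq:envelope} defining the envelope as the linear system \eqref{eq:system} and to solve it pointwise in $x$. Under assumption \ref{item:Wbc} the coefficient matrix of \eqref{eq:system} has determinant $W(b,c)(x)\neq0$, so for every $x\in(0,\infty)$ the system has a unique solution. Applying Cramer's rule expresses its two components as quotients of $2\times2$ Wronskian determinants, which is exactly the asserted formula for $\eta(x)$. Since the numerators and the denominator of these quotients involve only $a,b,c$ and their first derivatives, and since $W(b,c)$ is nowhere zero, $\eta$ is a well-defined differentiable curve; its continuous differentiability then follows from the closed form of $\eta'$ derived below together with the continuity of the second derivatives of $a,b,c$.

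For the tangent vector I would avoid differentiating these quotients directly and instead differentiate the two identities $F(x,\eta(x))=0$ and $\partial_xF(x,\eta(x))=0$, which $\eta$ satisfies by construction. Differentiating the first identity and cancelling the terms that reconstitute the second leaves the single relation $b(x)\eta_1'(x)+c(x)\eta_2'(x)=0$; hence $\eta'(x)$ is orthogonal to $(b(x),c(x))$ and must be a scalar multiple $\mu(x)\,(-c(x),b(x))$. To determine $\mu$, I would differentiate the second identity and insert this ansatz: the left-hand side collapses to $\mu(x)\,W(b,c)(x)$, while the right-hand side equals $-\bigl(a''+b''\eta_1+c''\eta_2\bigr)$. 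Substituting the explicit envelope coordinates, the numerator that appears is $a''W(b,c)+b''W(c,a)+c''W(a,b)$, and recognising this as the Laplace expansion of the $3\times3$ Wronskian $W(a,b,c)$ along its bottom row (using $-W(a,c)=W(c,a)$) gives $\mu=W(a,b,c)/W(b,c)^2$ and hence the stated tangent vector.

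I expect this cofactor-expansion step — the identification of that particular combination of second derivatives with $W(a,b,c)$ — to be the only genuinely clever point; the remainder is careful bookkeeping of signs and the verification that no term is lost when differentiating the constraints.

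Finally, the monotonicity of the slope is quick and essentially independent of the above. The quotient rule gives $s'=(bc'-cb')/c^2=W(b,c)/c^2$. By \ref{item:nonzero} the denominator $c^2$ is strictly positive and by \ref{item:Wbc} the numerator $W(b,c)$ is nowhere zero on $(0,\infty)$; as $W(b,c)/c^2$ is continuous (it involves only first derivatives) and never vanishes on the connected interval $(0,\infty)$, the intermediate value theorem forces it to keep a constant sign. Thus $s'$ has one sign throughout, so $s$ is strictly monotone.
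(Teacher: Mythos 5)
Your argument is correct in structure, but your derivation of the tangent vector takes a genuinely different route from the paper's. The paper differentiates the Cramer's-rule quotients directly and identifies the result via the Desnanot--Jacobi determinant identities $c\,W(a,b,c) = \tfrac{d}{dx}W(c,a)\cdot W(b,c) - W(c,a)\cdot\tfrac{d}{dx}W(b,c)$ and $-b\,W(a,b,c) = \tfrac{d}{dx}W(a,b)\cdot W(b,c) - W(a,b)\cdot\tfrac{d}{dx}W(b,c)$. You instead differentiate the two defining relations $F(x,\eta(x))=0$ and $\partial_x F(x,\eta(x))=0$ along the curve: the first differentiation yields $b\eta_1'+c\eta_2'=0$, i.e.\ the tangency of $\eta$ to $\ell_x$ (which the paper only records as a separate remark after the lemma), and the second pins down the scalar factor through the Laplace expansion of $W(a,b,c)$ along its bottom row. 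This buys a self-contained, elementary derivation that produces the tangency property as a by-product instead of outsourcing it to a quoted determinant identity; the paper's route is shorter on the page. The Cramer's-rule step and the monotonicity argument ($s'=W(b,c)/c^2$, with numerator continuous and nowhere zero by \ref{item:Wbc}) are identical to the paper's.

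One point of bookkeeping deserves attention. Carried through consistently, your computation gives $\mu\,W(b,c) = -\bigl(a''+b''\eta_1+c''\eta_2\bigr)$, and inserting the Cramer expressions $\eta_1 = W(c,a)/W(b,c)$, $\eta_2 = W(a,b)/W(b,c)$ --- which is what Cramer's rule applied to \eqref{eq:system} actually returns, \emph{without} the leading minus signs; compare \eqref{eq:initial_again} in the appendix --- yields $\mu = -W(a,b,c)/W(b,c)^2$ and hence $\eta'(x) = \tfrac{W(a,b,c)}{W(b,c)^2}\,(c(x),-b(x))^T$. Your final line silently flips this sign to match the displayed statement. The discrepancy is a global sign that appears to originate in the lemma as printed (a check on the family of tangent lines to a parabola confirms the unsigned Cramer formula and the $(c,-b)$ direction), so it is not a flaw in your method; but you should carry the signs of $\eta$ and $\eta'$ through explicitly so that the two formulas you arrive at are consistent with each other, since the orientation of $\eta$ matters later for the winding-number arguments.
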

    \begin{rem}
    Note that $\eta'(x)$ is parallel to $\ell_x$, i.e. the envelope meets $\ell_x$ \emph{tangentially} in $\eta(x)$. This is a general property of envelopes, cf.~\cite[Ch.~5]{bruce1992curves}.
    \end{rem}
    \begin{proof}
    The equation for $\eta$ follows by applying Cramer's rule to the system~\eqref{eq:system}. The equation for $\eta'$ follows by taking derivatives and applying the identities
    \begin{align*}
    c W(a,b,c) &= \frac{\mathrm{d}W(c, a)}{\mathrm{d}x}W(b,c) - W(c,a) \frac{\mathrm{d}W(b,c)}{\mathrm{d}x}\\
    -b W(a,b,c) &= \frac{\mathrm{d}W(a,b)}{\mathrm{d}x} W(b,c) - W(a,b) \frac{\mathrm{d}W(b,c)}{\mathrm{d}x},
    \end{align*}
    which follow from the Desnanot-Jacobi determinant identity (see \cite{brualdi1983determinantal}), but can also be verified by direct calculation. Finally, by \ref{item:nonzero}, the slope function is always well-defined. Taking derivatives, we obtain
    \[s'(x) = \frac{W(b,c)(x)}{c(x)^2}.\]
    By \ref{item:Wbc} the numerator has constant sign; therefore $s$ must be monotone.
    \end{proof}
The tangent vector $\eta'(x)$ of the envelope vanishes only when $W(a,b,c)(x) = 0$, or equivalently, when
    \begin{equation}\label{eq:regular}
        \partial_{xx} F(x,z) = a''(x) + b''(x)z_1 + c''(x) z_2 =  0.
    \end{equation}
    Points on $\eta$ where \eqref{eq:regular} (and, by definition, also \eqref{eq:family} and \eqref{eq:envelope}) holds, are called \emph{points of regression}. All other points of $\eta$ are called \emph{regular}. We assume that there are only finitely many points of regression, which is guaranteed by the following assumption:
    \begin{enumerate}[resume*]
        \item $W(a,b,c)(x) = 0$ for only finitely many $x > 0$.\label{item:Wabc}
\newcounter{saveenum}
  \setcounter{saveenum}{\value{enumi}}
    \end{enumerate}

Finally, we are interested in the limiting behaviour of $\eta(x)$ as $x$ tends to zero or infinity, i.e., how it `connects' to the lines $\ell_0$ and $\ell_\infty$.
\begin{lem}\label{lem:contact}Let $\delta \in \set{0,\infty}$ and assume \ref{item:nonzero}, \ref{item:limits} and \ref{item:Wbc}. Either the contact point $\eta(\delta) := \lim_{x \to \delta}\eta(x)$ exists in $\RR^2$ and is located on $\ell_\delta$ or 
$\lim_{\delta \to \infty} |\eta(x)| = \infty$ and $\eta(x)$ approaches $\ell_\delta$ asymptotically as $x \to \delta$, i.e. 
\[\lim_{x \to \delta} d(\eta(x),\ell_\delta) = 0,\]
with $d(.,.)$ denoting distance.
\end{lem}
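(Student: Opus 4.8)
The plan is to reduce both alternatives of the dichotomy to the single scalar $F(\delta,\eta(x))$ and to exploit the cross-product structure hidden in \cref{lem:envelope_slope}. Writing $p(x)=(a(x),b(x),c(x))$ and introducing the $\RR^3$-vector
\begin{equation*}
N(x)=\bigl(W(b,c),\,W(c,a),\,W(a,b)\bigr)(x)=p(x)\times p'(x),
\end{equation*}
the envelope formula reads $\eta(x)=\bigl(N_2(x),N_3(x)\bigr)/N_1(x)$ with $N_1=W(b,c)$, and one checks the incidence relation $F(x,\eta(x))=0$ is just $\langle p(x),N(x)\rangle=0$, i.e.\ $p\perp(p\times p')$. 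By \ref{item:limits} every entry of $p$ and $p'$ converges as $x\to\delta$, hence $N(x)\to N_\delta:=p(\delta)\times p'(\delta)$ is finite. Since the Euclidean distance of $z$ to $\ell_\delta$ is $|F(\delta,z)|/|(b(\delta),c(\delta))|$ and the denominator is a fixed positive constant by \ref{item:limits}, everything reduces to $F(\delta,\eta(x))$; using $N\perp p$ together with $\langle p(x),N(x)\rangle=0$ gives the key identity
\begin{equation*}
F(\delta,\eta(x))=\frac{\langle p(\delta),N(x)\rangle}{N_1(x)}=\frac{\langle p(\delta)-p(x),\,N(x)\rangle}{W(b,c)(x)}.
\end{equation*}

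First I would dispose of the case in which $\lim_{x\to\delta}\eta(x)$ exists in $\RR^2$: passing to the limit in $a(x)+b(x)\eta_1(x)+c(x)\eta_2(x)=0$ and using the convergence of the coefficients from \ref{item:limits} yields $a(\delta)+b(\delta)\eta_1(\delta)+c(\delta)\eta_2(\delta)=0$, i.e.\ $\eta(\delta)\in\ell_\delta$. In particular, whenever $N_{\delta,1}=\lim_{x\to\delta}W(b,c)(x)\neq0$ the formula for $\eta$ shows the limit exists and is finite, so this first alternative applies with no further work.

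For the blow-up alternative, suppose $N_\delta\neq0$ but $N_{\delta,1}=0$. Then $(N_2,N_3)(x)\to(N_{\delta,2},N_{\delta,3})\neq0$ while $N_1(x)\to0$, so $|\eta(x)|\to\infty$; letting $x\to\delta$ in $\langle p(x),N(x)\rangle=0$ gives $b(\delta)N_{\delta,2}+c(\delta)N_{\delta,3}=0$, so the escape direction $(N_{\delta,2},N_{\delta,3})$ is parallel to $\ell_\delta$, consistent with an asymptote. To upgrade this to $d(\eta(x),\ell_\delta)\to0$ I would bound the numerator in the key identity via the fundamental theorem of calculus, $p(\delta)-p(x)=\int_x^\delta p'(t)\,dt$, and the orthogonality $\langle p'(x),N(x)\rangle=0$:
\begin{equation*}
\langle p(\delta)-p(x),N(x)\rangle=\int_x^\delta\bigl\langle p'(t)-p'(x),\,N(x)\bigr\rangle\,dt,
\end{equation*}
whose modulus is at most $|\delta-x|$ times the modulus of continuity $\omega(x)$ of $p'$ near $\delta$ times $|N(x)|$; dividing by $W(b,c)(x)=N_1(x)$ then yields the estimate for $F(\delta,\eta(x))$.

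The main obstacle is precisely this final rate comparison: numerator and denominator both vanish, and one must show the numerator wins. The crude bound controls $|F(\delta,\eta(x))|$ by $|\delta-x|\,\omega(x)\,|N(x)|/|W(b,c)(x)|$, which closes only if $W(b,c)$ does not decay too fast as $x\to\delta$; pinning down the correct comparison in general — and, relatedly, ruling out the pathology that $\eta$ neither converges nor escapes to infinity (the case $N_\delta=0$) — is the delicate point. This is sharpened at $\delta=\infty$: since $p$ has a finite limit there, necessarily $p'(x)\to0$, whence $N_\delta=p(\delta)\times p'(\delta)=0$ and the clean dichotomy via $N_{\delta,1}$ degenerates entirely. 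Here I would first renormalise the family by a scaling function $\gamma$ as in \cref{rem:scaling}, chosen so that the rescaled coefficients and their first derivatives have nondegenerate finite limits, thereby reducing the behaviour at infinity to the finite-boundary analysis, and then read off both the dichotomy and the asymptotic-distance claim from the Wronskian asymptotics of the rescaled $(a,b,c)$.
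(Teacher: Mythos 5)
Your treatment of the first alternative (the limit $\eta(\delta)$ exists in $\RR^2$) coincides with the paper's: pass to the limit in $F(x,\eta(x))=0$ using the continuity supplied by \ref{item:limits}. Your reduction of the second alternative to the single scalar $F(\delta,\eta(x))$ divided by the constant $\sqrt{b(\delta)^2+c(\delta)^2}$ is also exactly the paper's reduction. The problem is that you then stop: you explicitly leave the central claim --- that $F(\delta,\eta(x))\to 0$ when $|\eta(x)|\to\infty$ --- as an unresolved ``delicate point'', concede that your bound $|\delta-x|\,\omega(x)\,|N(x)|/|W(b,c)(x)|$ need not close, observe that at $\delta=\infty$ your normal vector $N_\delta=p(\delta)\times p'(\delta)$ degenerates to zero, and defer to an unspecified rescaling that is never carried out. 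An argument that announces its key estimate as open does not establish the lemma; the second alternative remains unproven in your version. (Minor additional slips: your formula $\eta=(N_2,N_3)/N_1$ drops the minus signs from \cref{lem:envelope_slope}, and the exhaustiveness of the dichotomy, which you worry about, is not actually asserted or needed --- the lemma only describes what happens in each of the two cases.)

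For comparison, the paper dispatches the second case in one line: since $F(x,\eta(x))=0$ identically by the definition of the envelope, it writes $\lim_{x\to\delta}|F(\delta,\eta(x))|=\lim_{x\to\delta}|F(x,\eta(x))|=0$, justifying the replacement of $\delta$ by $x$ in the first argument by continuity of $(a,b,c)$ and \ref{item:limits}. You have in fact put your finger on the one point where that line is compressed: the difference $F(\delta,\eta(x))-F(x,\eta(x))=(a(\delta)-a(x))+(b(\delta)-b(x))\eta_1(x)+(c(\delta)-c(x))\eta_2(x)$ is a $0\cdot\infty$ form when $|\eta(x)|\to\infty$, so the interchange does hide a rate comparison of exactly the kind you describe (one that is easily verified from the explicit exponential coefficients in the Vasicek cases where the asymptotic alternative actually occurs, after the rescaling of \cref{rem:scaling}). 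So your diagnosis of the difficulty is sound, but your write-up neither closes that gap nor supplies an alternative route to the conclusion.
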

\begin{rem}\label{rem:asymp}
Even if $\lim_{\delta \to \infty} |\eta(x)| = \infty$, we will denote by $\eta(0)$ (or $\eta(\infty)$) the `asymptotic contact point' of $\eta$ and $\ell_0$ (or $\ell_\infty$). The correct interpretation will usually be clear from the context. An example of asymptotic contact at $\eta(\infty)$ is shown in Figure~\ref{fig_ss}.
\end{rem}
\begin{proof}If $\eta(\delta) := \lim_{x \to \delta}\eta(x)$ exists in $\RR^2$, then by continuity
\[F(\delta,\eta(\delta)) = \lim_{x \to \delta} F(x,\eta(x)) = 0,\]
showing that $\eta(\delta) \in \ell_\delta$. Even if $\lim_{x \to \delta}|\eta(x)| = \infty$, continuity of $(a,b,c)$ and \ref{item:limits} allows us to conclude that
\[\lim_{x \to \delta} d(\eta(x),\ell_\delta) = \lim_{x \to \delta} \frac{|F(\delta,\eta(x))|}{\sqrt{b(\delta)^2 + c(\delta)^2}} = \frac{\lim_{x \to \delta} |F(x,\eta(x))|}{\sqrt{b(\delta)^2 + c(\delta)^2}}  = 0. \qedhere\]
\end{proof}

In general, $\eta(0) \neq \eta(\infty)$, and the envelope is not a closed curve. However, we can close it by augmenting it with segments of the lines $\ell_0$ and $\ell_\infty$. 
\begin{defn}\label{defn:augmented}Let assumptions \ref{item:nonzero} - \ref{item:Wabc} hold, and let $M$ be the intersection point of $\ell_0$ and $\ell_\infty$. The \textbf{augmented envelope} $\hat \eta $ is the oriented curve with basepoint $M$, piecewise defined by 
\begin{enumerate}[(a)]
\item The line segment from $M$ to $\eta(0)$ (contained in $\ell_0$);
\item The envelope $\eta$; 
\item The line segment from $\eta(\infty)$ to  $M$ (contained in $\ell_\infty$).  
\end{enumerate}
\end{defn}
If necessary, we can parameterize $\hat \eta$ by a parameter $t$ in $I = [-1,0) \cup (0,1) \cup (1,2]$, such that the three parts of $I$ correspond to the pieces (a), (b) and (c) above. If $\eta(0)$ and $\eta(\infty)$ are proper points, we can replace $I$ by $\bar I = [-1,2]$ and $\hat \eta$ is a piecewise regular, continuous closed curve in the usual sense. If $\eta$ is only asymptotically approaching $\ell_0$ or $\ell_\infty$, then the line segments from (a) or (c) can be unbounded and $\hat \eta$ is not continuous (or closed) in the usual sense. However, for our purposes, no difference needs to be made between these cases. In particular the \emph{winding number} of $\hat \eta$ around any point $z \in \RR^2$ is always well-defined. Intuitively, the winding number is the number turns of $\hat \eta$ around $z$, with counterclockwise turns contributing positively, and clockwise turns negatively. The following rigorous definition is adapted from \cite[Ch.5-7]{do2016differential}:
 
    \begin{defn}\label{rem:wind}
    \begin{enumerate}[(a)]
    \item Let $z \in \RR^2$ and $z \not \in \hat \eta$. The \emph{position map} of $\hat \eta$ relative to $z$ is the continuous mapping $\phi$ from $\hat I = [-1,2]$ to the unit circle $S_1$ given by 
    \begin{align*}
            \phi(t) = \frac{\hat \eta(t) - z}{\left|\hat \eta(t)- z\right|}.
        \end{align*}
    \item The \textbf{winding number} $\wind_{\hat \eta}(z)$ of $\hat \eta$ around $z$ is the topological degree of the position map $\phi$.
     \end{enumerate}
     \end{defn}
In order for the topological degree to be well-defined we need $\phi$ to be closed ($\phi(-1) = \phi(2)$)  and continuous. The first property follows directly from Definition~\ref{defn:augmented}. For the second property we emphasize that the position map $\phi$ is \emph{always} continuous due to Lemma~\ref{lem:contact}, even when $\hat \eta$ contains asymptotic points. Therefore, the winding number is always well-defined. In particular, there is a unique (mod $2 \pi$) continuous representation $\theta$ of $\phi$ in polar coordinates,
\[\phi(t) = \left(\cos(\theta(t)), \sin(\theta(t)\right)),\quad t \in I = [-1,2],\]
and $\wind_{\hat \eta}(z) = \tfrac{1}{2\pi}(\theta(2) - \theta(-1))$.

    \begin{figure}
        \centering
        \includegraphics[width=\textwidth]{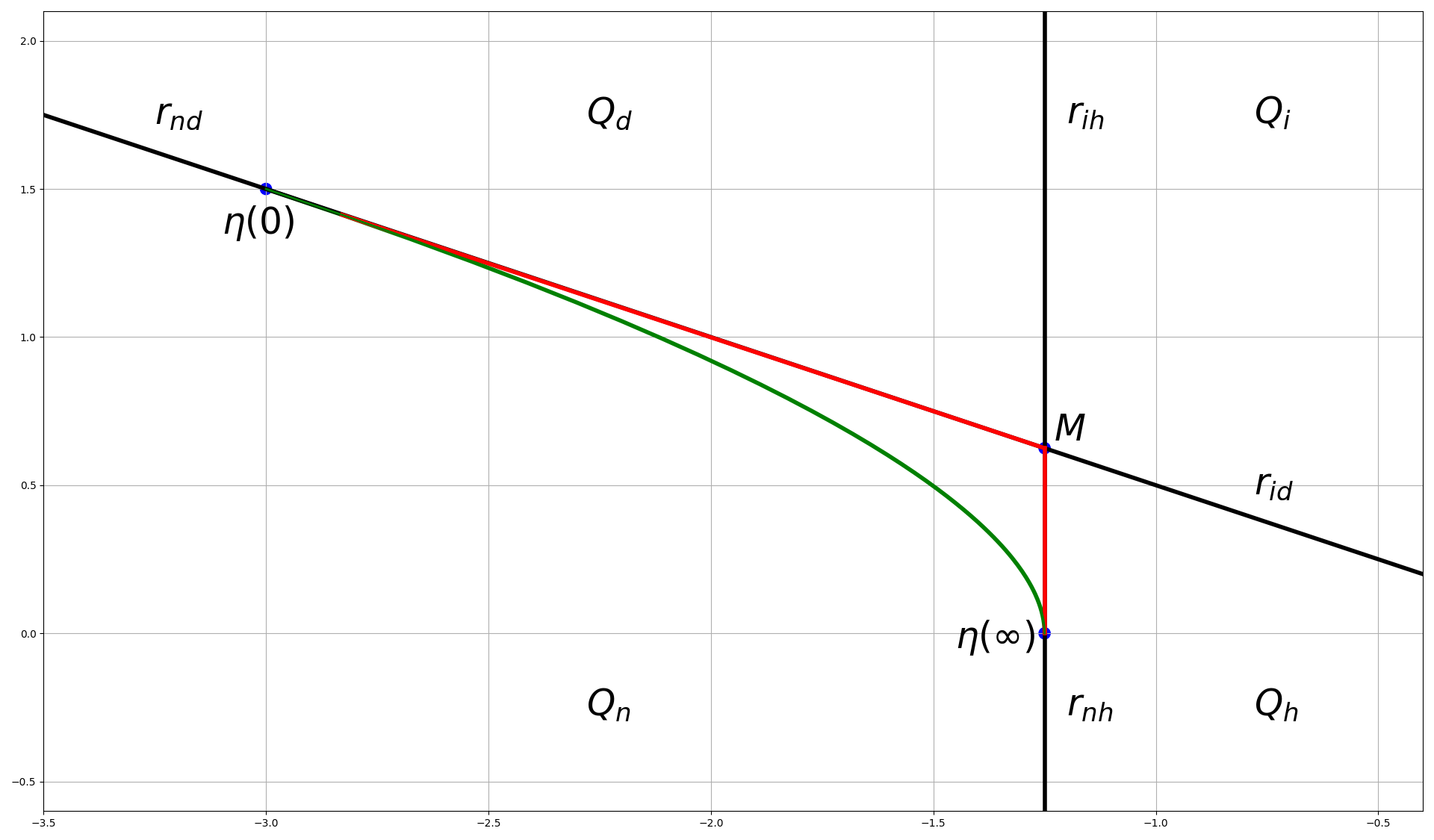}
        \caption{The envelope (green), augmented envelope (green + red) and augmented envelope with whiskers (green + red + black). Also indicated are the envelope's contact points $\eta(0)$ and $\eta(\infty)$, as well as our notation for the quadrants $Q$ and their boundaries $r$.}
        \label{fig:aeww}
    \end{figure}
    
    \subsection{Helper lines and tracking functions}\label{sec:helper}
    Some key arguments in the proof of our main results in Section~\ref{sec:main} will be based on observing a helper line $h$ not belonging to the family $\cF = (\ell_x)_{x \in [0,\infty]}$. We introduce some related concepts:
    
     \begin{defn}
        A line $h$ is in \textbf{oblique position} with respect to the family $\cF = (\ell_x)_{x \in [0,\infty]}$, if it is not parallel to $\ell_x$ for any $x \in [0,\infty]$.
    \end{defn}
 In order to find an oblique helper line, we require the following:
 
     \begin{enumerate}[resume*]
        \item There exists a line $h$ in oblique position with respect to $\cF$. \label{item:oblique}
    \end{enumerate}
    
    \begin{defn}
        A line $h$ is in \textbf{general position} with respect to the augmented envelope $\hat \eta$, if it meets $\hat \eta$ only in regular points, never tangentially, and not in the points $\eta(0)$, $\eta(\infty)$ or $M$.
    \end{defn}
    The above definition makes sure that in an intersection point of $h$ with $\hat{\eta}$, the line always cuts transversally through the augmented envelope. 

    \begin{defn}\label{def:leftright}
        We consider a \emph{reference point} $z \not \in \hat \eta$ and an (oriented) \emph{helper line} $h$ through $z$ with direction vector $v$, given in parametric form as
        \[h = \set{z(\tau) = z + \tau v: \quad \tau \in \RR}.\] 
        We say that a point on $h$ is `to the left of $z$' if it can be represented with $\tau < 0$ and `to the right of $z$' if $\tau > 0$.
        We assume that $h$ is in oblique position with respect to $\cF$. Under this assumption the intersection $\ell_x \cap h$ consists of a single point
        \[h \cap \ell_x = \set{z + \tau(x) v},\] 
        for any $x \in \RR$ and defines the \textbf{tracking function} $x \mapsto \tau(x)$, which `tracks' the evolution of $(\ell_x)_{x \in [0,\infty]}$ along $h$. 
    \end{defn}
    The tracking function has the explicit form
    \begin{align*}
        \tau(x) = \frac{a(x) + b(x)z_1 + c(x)z_2}{b(x)v_1 +c(x)v_2}
    \end{align*}
    and is differentiable since $a, b$ and $c$ are differentiable and $h$ is oblique, i. e. the denominator has no zeros. We make the following observation:
    \begin{lem}\label{lem:tracking}
    Let \ref{item:nonzero} - \ref{item:oblique} hold, and let $h$ be a helper line through $z \in \RR^2$ in oblique position with associated tracking function $\tau$. Then, for any $x \in (0,\infty)$,
    \begin{align}
        \tau'(x) = 0 &\qquad \Longleftrightarrow \qquad z + \tau(x)v = \eta(x).\\
        \intertext{In addition, if $\tau'(x) = 0$, then}
        \tau''(x) = 0 &\qquad \Longleftrightarrow \qquad z + \tau(x)v \text{ is a regression point of $\eta$.}
        \end{align}
    \end{lem}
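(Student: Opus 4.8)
The plan is to prove both equivalences by direct computation using the explicit form of the tracking function and the characterization of $\eta(x)$ and regression points already established. The key observation is that $\tau(x) = F(x,z)/g(x)$ where $g(x) := b(x)v_1 + c(x)v_2$ is the denominator, and by obliqueness of $h$ we have $g(x) \neq 0$ for all $x$. This means $\tau$ is a well-defined differentiable function and I may safely differentiate the quotient.

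For the first equivalence, I would compute $\tau'(x)$ via the quotient rule. The numerator of $\tau'$ (after clearing the positive denominator $g(x)^2$) is $\partial_x F(x,z) \cdot g(x) - F(x,z) \cdot g'(x)$. At a candidate point, I want to relate the vanishing of this expression to the condition that the intersection point $z + \tau(x)v$ lies on $\eta(x)$. The crucial geometric fact from Lemma~\ref{lem:envelope_slope} (and the following remark) is that $\eta'(x)$ is parallel to $\ell_x$, so the envelope touches $\ell_x$ tangentially. The point $w := z + \tau(x)v$ always lies on $\ell_x$ by construction (it is the intersection $h \cap \ell_x$), so $w = \eta(x)$ precisely when $w$ additionally satisfies the envelope equation $\partial_x F(x,w) = 0$. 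The bridge is therefore to show $\tau'(x) = 0$ iff $\partial_x F(x, z + \tau(x)v) = 0$. Expanding $\partial_x F(x,w) = a'(x) + b'(x)w_1 + c'(x)w_2$ with $w = z + \tau v$ and using $a(x) + b(x)w_1 + c(x)w_2 = 0$ (the line equation), I expect a clean identity showing that the numerator of $\tau'$ equals $g(x) \cdot \partial_x F(x,w)$ up to a nonzero factor.

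For the second equivalence, assuming $\tau'(x) = 0$ (so $w = \eta(x)$ and $\partial_x F(x,w) = 0$), I differentiate once more. Since $\tau'(x) = 0$ at this point, the second-derivative computation simplifies considerably: the term involving $\tau'$ drops out, and $\tau''(x)$ becomes proportional to $\partial_{xx} F(x,w)$ divided by $g(x)$. Invoking the definition of a regression point via equation~\eqref{eq:regular}, namely $\partial_{xx} F(x,w) = a''(x) + b''(x)w_1 + c''(x)w_2 = 0$, yields the stated equivalence. The key simplification is that at a critical point of $\tau$ the mixed terms vanish, so I only need the leading contribution.

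The main obstacle I anticipate is bookkeeping in the differentiation rather than any conceptual difficulty: I must carefully track how the substitution $w = z + \tau(x)v$ interacts with the $x$-dependence, since $w$ itself depends on $x$ through $\tau(x)$. The cleanest route is to avoid substituting the point and instead work with the identity $g(x)\tau(x) = F(x,z)$ directly, differentiating this implicit relation. Differentiating once gives $g'\tau + g\tau' = \partial_x F(x,z)$, and differentiating again gives $g''\tau + 2g'\tau' + g\tau'' = \partial_{xx} F(x,z)$; combined with the fact that at the point in question $F(x,z) = g(x)\tau(x)$ encodes the line condition, this yields both equivalences without ever expanding the substituted coordinates. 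I expect the first equivalence to follow immediately, with the role of assumption \ref{item:Wbc} being to guarantee $\eta(x)$ is the unique intersection solving both line and envelope equations, so that membership on $\ell_x$ plus the envelope condition genuinely pins down $w = \eta(x)$.
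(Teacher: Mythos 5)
Your proposal is correct and follows essentially the same route as the paper: both differentiate the defining identity of the tracking function (you write it as $g(x)\tau(x)=F(x,z)$, the paper as $F(x,z+\tau(x)v)=0$, which is the same relation) once and twice, use obliqueness to divide by the nonvanishing $g(x)=b(x)v_1+c(x)v_2$, and invoke assumption \ref{item:Wbc} so that lying on $\ell_x$ together with the envelope equation \eqref{eq:envelope} pins down $\eta(x)$, with \eqref{eq:regular} handling the regression-point case. The identities you anticipate ($g'\tau+g\tau'=\partial_x F(x,z)$ and $g''\tau+2g'\tau'+g\tau''=\partial_{xx}F(x,z)$) are exactly the paper's \eqref{eq:diff} and \eqref{eq:diff_twice} in rearranged form, so no gap remains.
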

In other words, the critical points of the tracking functions are precisely the intersection points of the envelope $\eta$ and $h$, and a critical point is extremal if and only if it is a regular point of $\eta$. 
\begin{proof}
By definition, the tracking function $\tau$ satisfies
\[a(x)  + b(x) (z_1 + \tau(x) v_1) + c(x)(z_2 + \tau(x) v_2) = 0\]
for all $x \in (0,\infty)$. Taking derivatives, we obtain
\begin{multline}\label{eq:diff}
a'(x)  + b'(x) \Big(z_1 + \tau(x) v_1\Big) + c'(x)\Big(z_2 + \tau(x) v_2\Big)  + \\ + \tau'(x)\Big(b(x)v_1 + c(x)v_2\Big) = 0.
\end{multline}
If $\tau'(x) = 0$, the last term vanishes, showing that $z + \tau(x) v$ is part of the envelope, i.e., that $z + \tau(x) v = \eta(x)$. On the other hand, $z + \tau(x) v = \eta(x)$ implies that $\tau'(x)(b(x)v_1 + c(x)v_2) = 0$. By the obliqueness condition, the second term can never vanish, implying that $\tau'(x) = 0$; showing the first claim. For the second claim, take another derivative in \eqref{eq:diff} to obtain
\begin{multline}\label{eq:diff_twice}
a''(x)  + b''(x) \Big(z_1 + \tau(x) v_1\Big) + c''(x)\Big(z_2 + \tau(x) v_2\Big)  + \\ + 2\tau'(x)\Big(b'(x)v_1 + c'(x)v_2\Big)  + \tau''(x)\Big(b(x)v_1 + c(x)v_2\Big) = 0.
\end{multline}
Since $\tau'(x) = 0$, the second-to-last term vanishes. If also $\tau''(x) = 0$, then also the last term vanishes, and -- comparing with \eqref{eq:regular} -- we conclude that $z + \tau(x) v = \eta(x)$ is a regression point. Conversely, if $z + \tau(x) v$ is a regression point of the envelope, then $\tau''(x)(b(x)v_1 + c(x)v_2) = 0$. The obliqueness condition now implies that $\tau''(x) = 0$, as claimed.
\end{proof}

\subsection{Verifying the assumptions for the Vasicek model}
We verify that all of the above assumptions are satisfied in the two-factor Vasicek model. We set
\[m_{\rm f}(x,\lambda)  = -\lambda e^{-\lambda x} \quad \text{and} \quad m_{\rm y}(x,\lambda) = \frac{e^{-\lambda x}(1 + \lambda x)  - 1}{\lambda x^2}.\]
From Section~\ref{sec:bg} we obtain that 
\[b_{\rm f}(x) = m_{\rm f}(x,\lambda_1) = -\lambda_1 e^{-\lambda_1 x}, \qquad c_{\rm f}(x) = m_{\rm f}(x,\lambda_2) = -\lambda_2 e^{-\lambda_2 x}\]
in the forward case, and 
\[b_{\rm y}(x) = m_{\rm y}(x,\lambda_1) = \frac{e^{-\lambda_1 x}(1 + \lambda_1 x)  - 1}{\lambda_1 x^2}, \quad c_{\rm y}(x) = m_{\rm y}(x,\lambda_2) = \frac{e^{-\lambda_2 x}(1 + \lambda_2 x)  - 1}{\lambda_2 x^2}\]
in the yield case. Defining the operator
\begin{multline}\label{eq:A_operator}
\left(\cA m \right)(x) := -\Big\{ \frac{\sigma_1^2}{2\lambda_1^2} m(x,2 \lambda_1)  + \frac{\sigma_2^2}{2\lambda_2^2} m(x,2 \lambda_2) + \\ + r m(x,\lambda_1 + \lambda_2) + u_1 m(x,\lambda_1) + u_2 m(x,\lambda_2)\Big\},
\end{multline}
we can write
\[a_{\rm f}(x) = \left(\cA m_{\rm f}\right)(x) \qquad \text{and} \qquad a_{\rm y}(x) = \left(\cA m_{\rm y}\right)(x).\]
Clearly \ref{item:nonzero} is satisfied in both cases of yield and forward curve. It is also easy to see that \ref{item:limits} is satisfied for $(b_{\rm y}, c_{\rm y})$. As for the forward curve case, scaling with $\gamma(x) = e^{\lambda_1 x}$ and using that $\lambda_1 < \lambda_2$ we see that
$\hat b_{\rm f}(x) = \gamma(x) b_{\rm f}(x) = 1$ and $\hat c_{\rm f}(x) = \gamma(x) c_{\rm f}(x) = -\lambda_2 e^{(\lambda_1 - \lambda_2)x}$ also satisfy \ref{item:limits}. As pointed out in Rem.~\ref{rem:scaling} we can continue working with the unscaled functions.  For $\ell_0$ we obtain the equation
\[\ell_0: \lambda_1 \theta_1 + \lambda_2 \theta_2 - \lambda_1 z_1 - \lambda_2 z_2 = 0\]
for both the yield and forward curve case. For $\ell_\infty$ we obtain
\begin{align*}
\ell_\infty^{\rm f}: 0 &= u_1 - z_1\\
\ell_\infty^{\rm y}: 0 &= \sigma_1^2 + \sigma_2^2 + \tfrac{r}{\lambda_1 + \lambda_2} + \tfrac{u_1}{\lambda_1} + \tfrac{u_1}{\lambda_1} - \tfrac{1}{\lambda_1} z_1 - \tfrac{1}{\lambda_2} z_2.
\end{align*}
In any case, \ref{item:intersection} is satisfied. Next, we need to compute the Wronskians $W(b,c)$ and $W(a,b,c)$.  For the forward curve case we obtain
    \begin{align*} 
        W(b_{\rm f}, c_{\rm f})(x) = \text{e}^{-(\lambda_1 + \lambda_2)x}\lambda_1\lambda_2(\lambda_1 - \lambda_2),
    \end{align*}
    which never vanishes; hence \ref{item:Wbc} holds. 
    In the yield case we have
    \[W(b_{\rm y}, c_{\rm y})(x) = \frac{e^{-x (\lambda_1 + \lambda_2)}}{x^2} \left(\tfrac{\lambda_1}{\lambda_2}  - \tfrac{\lambda_2}{\lambda_1} + (\lambda_1 - \lambda_2) x + \tfrac{\lambda_2}{\lambda_1}\text{e}^{\lambda_1x} - \frac{\lambda_1}{\lambda_2}\text{e}^{\lambda_2x}\right).\]
The term in brackets vanishes for $x = 0$; differentiating it with respect to $x$ gives
    \begin{align*}
        (\lambda_1 - \lambda_2)  + \lambda_2\text{e}^{\lambda_1x} - \lambda_1\text{e}^{\lambda_2x} = \lambda_1\lambda_2\sum_{k=1}^\infty \frac{x^k\left(\lambda_2^{k-1} - \lambda_1^{k-1}\right)}{k!} > 0,\ \forall x > 0.
    \end{align*}
We conclude that the Wronskian determinant does not have zeros for $x > 0$, and hence that \ref{item:Wbc} holds also in the yield curve case. Extending the Wronskian by $a$, we obtain
    \begin{align}
        \label{eq:detfor}
        W(a_{\rm f},b_{\rm f},c_{\rm f})(x) &= -\sigma_2^2(2\lambda_2 - \lambda_1)\text{e}^{-2\lambda_2x} - \rho\sigma_1\sigma_2(\lambda_1 + \lambda_2)\text{e}^{-(\lambda_1 + \lambda_2)x} -\\\notag&\quad - \sigma_1^2(2\lambda_1 - \lambda_2)\text{e}^{-2\lambda_1x}.
    \end{align}  This function has at most two zeros for $x > 0$;  this follows from the fact that the functions $(e^{-2\lambda_2x},e^{-2(\lambda_2  + \lambda_1)x}, e^{-2\lambda_1x})$ form a Descartes system, see \cite[Sec.~3.2]{borwein1995polynomials}, \cite[\S1.4]{karlin1968total} and also \cite{keller2021classification}. Hence \ref{item:Wabc} holds true in the forward curve case. In the yield curve case, $W(a_{\rm y},b_{\rm y},c_{\rm y})$ is much more difficult to compute. However, as shown in the appendix, an argument based on total positivity shows that it also has at most two zeros, and hence that \ref{item:Wabc} is satisfied. Finally we turn to \ref{item:oblique}, i.e., to the existence of an oblique line. Computing the slopes of the lines $\ell_x$, we obtain
    \begin{align*}
        -\frac{b_{\rm f}(x)}{c_{\rm f}(x)} = -\frac{\lambda_1}{\lambda_2}\text{e}^{(\lambda_2-\lambda_1)x}
    \end{align*}
    in the forward case, and 
    \begin{align*}
        -\frac{b_{\rm y}(x)}{c_{\rm y}(x)} = -\frac{\lambda_2\left((1+\lambda_1x)\text{e}^{-\lambda_1x} - 1\right)}{\lambda_1\left((1+\lambda_2x)\text{e}^{-\lambda_2x} - 1\right)},
    \end{align*}
    in the yield curve case. By Lemma~\ref{lem:envelope_slope}, the slopes are strictly monotone; in fact, the negative sign of $W(b,c)$ shows that they are strictly decreasing.
Evaluating the boundary values at $0$ and $\infty$, we conclude that the slope corresponding to the forward curve takes values in $[-\frac{\lambda_1}{\lambda_2}, -\infty]$ and the slope corresponding to the yield curve takes values in $[-\frac{\lambda_1}{\lambda_2}, -\frac{\lambda_2}{\lambda_1}]$. In both cases 0 is not an element of these intervals, so we can choose any horizontal line $z_2 = K \in \RR$ as our oblique line. We sum up our analysis in the following result:
    \begin{lem}\label{lem:vasicek}
    The families $(\ell^{\rm f}_x)_{x \in (0,\infty)}$ and $(\ell^{\rm y}_x)_{x \in (0,\infty)}$, associated to the forward curve resp.{} yield curve in the two-dimensional Vasicek model satisfy all properties \ref{item:nonzero} -- \ref{item:oblique}. Their slope functions $x \mapsto s(x) = -\tfrac{b(x)}{c(x)}$ are strictly decreasing and any horizontal line is oblique.
    \end{lem}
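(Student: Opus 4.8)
The statement is a verification lemma, so the plan is to check each of the structural assumptions \ref{item:nonzero}--\ref{item:oblique} directly from the explicit expressions for $(a,b,c)$ in the two cases, treating the forward and yield curves in parallel wherever possible. The building blocks are $b = m(\cdot,\lambda_1)$, $c = m(\cdot,\lambda_2)$ with $m \in \set{m_{\rm f}, m_{\rm y}}$, and $a = \cA m$. Establishing each assumption reduces to an elementary sign or nonvanishing statement about these concrete functions.

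For \ref{item:nonzero} I would argue that $m_{\rm f}(x,\lambda) = -\lambda e^{-\lambda x}$ is manifestly nonzero, while $m_{\rm y}(x,\lambda)$ is nonzero because $t \mapsto e^{-t}(1+t) - 1$ vanishes only at $t = 0$ (its derivative $-t e^{-t}$ is strictly negative for $t>0$), so $m_{\rm y}(x,\lambda) < 0$ for $x > 0$. Assumption \ref{item:limits} needs the rescaling device of Remark~\ref{rem:scaling} in the forward case, since $b_{\rm f}, c_{\rm f} \to 0$; multiplying by $\gamma(x) = e^{\lambda_1 x}$ produces a nonvanishing limiting pair, while in the yield case the finite limits of $m_{\rm y}$ can be read off directly. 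For \ref{item:intersection} I would compute $\ell_0$ and $\ell_\infty$ in closed form (as displayed above) and check that their normal directions $(\lambda_1,\lambda_2)$ and $(1,0)$ resp.\ $(1/\lambda_1, 1/\lambda_2)$ are not proportional, which holds because $\lambda_1 \neq \lambda_2$.

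The two determinant conditions are the heart of the matter. For \ref{item:Wbc} the forward Wronskian is a single exponential times $\lambda_1 \lambda_2 (\lambda_1 - \lambda_2) \neq 0$; the yield Wronskian reduces to a bracketed term which I would show is nonzero for $x>0$ by noting it vanishes at $x=0$ and has strictly positive derivative, via the power-series identity $\lambda_2 e^{\lambda_1 x} - \lambda_1 e^{\lambda_2 x} + (\lambda_1 - \lambda_2) = \lambda_1\lambda_2 \sum_{k \ge 1} x^k(\lambda_2^{k-1} - \lambda_1^{k-1})/k! > 0$. For \ref{item:Wabc} the forward case admits the explicit three-exponential formula for $W(a_{\rm f},b_{\rm f},c_{\rm f})$, whose at-most-two-zeros property follows from the Descartes-system structure of $(e^{-2\lambda_1 x}, e^{-(\lambda_1+\lambda_2)x}, e^{-2\lambda_2 x})$. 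Finally, \ref{item:oblique} follows once I compute the slope $s(x) = -b/c$ and show its range $[-\lambda_1/\lambda_2, -\infty]$ (forward) resp.\ $[-\lambda_1/\lambda_2, -\lambda_2/\lambda_1]$ (yield) excludes $0$, so that horizontal lines $\set{z_2 = K}$ are never parallel to any $\ell_x$; strict monotonicity of $s$ then follows from Lemma~\ref{lem:envelope_slope}, where $s'(x) = W(b,c)(x)/c(x)^2$ inherits the constant negative sign of $W(b,c)$.

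The main obstacle is the yield-curve instance of \ref{item:Wabc}: unlike the forward case there is no clean closed form for $W(a_{\rm y}, b_{\rm y}, c_{\rm y})$, so the Descartes-system bound is not directly available. I would instead invoke a total-positivity argument for the kernel underlying $m_{\rm y}$, deferring the technical estimate to the appendix, to conclude that this Wronskian changes sign at most twice and in particular has only finitely many zeros, completing the verification.
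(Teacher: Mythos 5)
Your proposal follows essentially the same route as the paper's own verification: checking \ref{item:nonzero}--\ref{item:intersection} directly from the explicit $m_{\rm f}, m_{\rm y}$, rescaling by $\gamma(x)=e^{\lambda_1 x}$ for \ref{item:limits} in the forward case, handling \ref{item:Wbc} via the power-series sign argument for the yield Wronskian, using the Descartes-system bound for $W(a_{\rm f},b_{\rm f},c_{\rm f})$ and deferring the yield instance of \ref{item:Wabc} to the appendix's total-positivity/comparison argument, and deducing obliqueness of horizontal lines from the slope range together with $s'(x)=W(b,c)(x)/c(x)^2$. The argument is correct and matches the paper's proof in structure and detail.
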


The envelopes can be explicitly calculated in both the forward and the yield case. Using the operator $\cA$ from \eqref{eq:A_operator}, we can express the envelope in the forward case as 
	\begin{align*}
	    \eta_1^{\rm f}(x) = -\frac{\left(\cA n_1^{\rm f}\right)(x)}{\lambda_1(\lambda_1 - \lambda_2)}, \qquad  \eta_2^{\rm f}(x) = -\frac{\left(\cA n_2^{\rm f}\right)(x)}{\lambda_2(\lambda_1 - \lambda_2)}
	\end{align*}
	where
	\begin{align*}
	    n_1^{\rm f}(x,\lambda) = \lambda(\lambda - \lambda_2)\text{e}^{(\lambda_1 - \lambda)x},\quad 
      n_2^{\rm f}(x,\lambda) = \lambda(\lambda - \lambda_1)\text{e}^{(\lambda_2 - \lambda)x}.
	\end{align*}
We omit the expression for the yield case, since it is quite complicated.

\subsection{Classification of singular points}

We give a classification of singular points of the envelope, which will provide some insight into the possible decompositions of the state space. As in \cite{keller2021classification}, we distinguish the following regimes of the model:

\begin{defn}The two-dimensional Vasicek model is called \label{def:regime}
\begin{itemize}
\item \textbf{scale-separated}, if $2 \lambda_1 < \lambda_2$,
\item \textbf{scale-proximal}, if $2 \lambda_1 > \lambda_2$, and
\item \textbf{scale-critical}, if $2 \lambda_1 = \lambda_2$.
\end{itemize}
\end{defn}
Moreover, the correlation parameter $\rho$ will play an important role. 

The following lemma is evident from the form of the tangent vector $\eta'$ as given in \cref{lem:envelope_slope}. Note that a point of a differentiable curve is called singular if the tangent vector vanishes, and such a point is called a cusp, if the tangent vector changes direction after passing through it. 
\begin{lem}\label{lem:singular0}
The following are equivalent for a point $z = \eta(x)$ on the envelope:
\begin{enumerate}[(a)]
\item $W(a,b,c)(x) = 0$;
\item $z$ is a point of regression;
\item $z$ is a singular point of $\eta$;
\end{enumerate}
Moreover, $z$ is a cusp point iff $x$ is a transversal zero of $W(a,b,c)$.
\end{lem}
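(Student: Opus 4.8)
The plan is to read the three equivalent conditions directly off the explicit tangent vector furnished by \cref{lem:envelope_slope},
\[\eta'(x) = \frac{W(a,b,c)(x)}{W(b,c)(x)^2} \begin{pmatrix}-c(x)\\b(x)\end{pmatrix},\]
and to handle (a)$\Leftrightarrow$(b) by a short rank argument for the associated Wronskian matrix. For (a)$\Leftrightarrow$(c) I would observe that $(-c(x),b(x)) \neq 0$ for every $x \in (0,\infty)$ by \ref{item:nonzero}, while $W(b,c)(x)^2 > 0$ by \ref{item:Wbc}. Hence the scalar prefactor is the only quantity in the formula that can vanish, so $\eta'(x) = 0$ if and only if $W(a,b,c)(x) = 0$; since $z = \eta(x)$ is singular exactly when $\eta'(x) = 0$, this is precisely the equivalence of (a) and (c).

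For (a)$\Leftrightarrow$(b) I would use that $z = \eta(x)$ already satisfies \eqref{eq:family} and \eqref{eq:envelope} by definition of the envelope, so that being a point of regression amounts to the single extra relation \eqref{eq:regular}. These three relations say exactly that the nonzero vector $(1,z_1,z_2)^\top$ lies in the kernel of
\[M(x) = \begin{pmatrix} a(x) & b(x) & c(x)\\ a'(x) & b'(x) & c'(x)\\ a''(x) & b''(x) & c''(x)\end{pmatrix}.\]
If $z$ is a regression point, this kernel is nontrivial, forcing $\det M(x) = W(a,b,c)(x) = 0$, which gives (b)$\Rightarrow$(a). Conversely, if $W(a,b,c)(x) = 0$, the rows of $M(x)$ are linearly dependent; but the $2\times 2$ minor formed by the first two rows and the last two columns equals $W(b,c)(x) \neq 0$ by \ref{item:Wbc}, so the first two rows are independent and the third is a linear combination of them. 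Since $(1,z_1,z_2)^\top$ annihilates the first two rows (because $z = \eta(x)$ solves \eqref{eq:family} and \eqref{eq:envelope}), it annihilates the third as well, which is exactly \eqref{eq:regular}; thus $z$ is a regression point and (a)$\Rightarrow$(b).

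For the cusp statement I would factor $\eta'(x) = g(x)\,u(x)$ with scalar $g(x) = W(a,b,c)(x)/W(b,c)(x)^2$ and vector $u(x) = (-c(x),b(x))$. Because $u$ is continuous and nowhere vanishing, near a singular point $x_0$ the oriented unit tangent equals $\operatorname{sign}(g(x))\,u(x)/\left|u(x)\right|$, and the factor $u(x)/\left|u(x)\right|$ passes continuously through $x_0$; hence the tangent reverses direction at $x_0$ precisely when $g$ changes sign there. As $W(b,c)^2 > 0$, this occurs exactly when $W(a,b,c)$ changes sign at $x_0$, i.e.\ when $x_0$ is a transversal zero of $W(a,b,c)$ (the zeros being isolated by \ref{item:Wabc}). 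I expect this last step to be the only delicate point: one must argue that the reversal is produced by the sign change of the scalar factor $g$ and not by any variation of the vector factor $u$, and then identify a sign-changing (transversal) crossing with the geometric notion of a cusp.
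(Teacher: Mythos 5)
Your proposal is correct and follows the same route the paper intends: the paper declares this lemma ``evident from the form of the tangent vector $\eta'$'' in \cref{lem:envelope_slope}, and your write-up simply supplies the details --- the nonvanishing of $(-c,b)$ and of $W(b,c)^2$ for (a)$\Leftrightarrow$(c), the kernel/rank argument for (a)$\Leftrightarrow$(b), and the observation that the unit tangent $\operatorname{sign}(g)\,u/|u|$ reverses exactly at sign-changing zeros of $W(a,b,c)$ for the cusp criterion. No gaps; this is a faithful expansion of the paper's one-line justification.
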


\begin{lem}\label{lem:singular}In the two-dimensional Vasicek model, consider the functions $(a,b,c)$, related to either the forward curve or to the yield curve. Let $N$ be the number of zeroes of $(0,\infty) \ni x \mapsto W(a,b,c)(x)$. 
\begin{enumerate} 
\item In the scale-proximal case with $\rho \ge 0$, it holds that $N = 0$. The envelope has no self-intersections and no cusps. It also has no intersections with $\ell_0$ or $\ell_\infty$.
\item In the scale-separated case, it holds that $N \le 1$. The envelope has no self-intersections and it has $N$ cusps. It has at most $N$ intersections with each of $\ell_0$ and $\ell_\infty$. 
\item In the scale-proximal case with $\rho < 0$ , it holds that $N \le 2$. The augmented envelope has at most $\lfloor N/2 \rfloor \le 1$ self-intersections and at most $N$ cusps. In addition it has at most $N$ intersections with each of $\ell_0$ and $\ell_\infty$
\end{enumerate}
The scale-critical case behaves like the scale-proximal case if $\rho \ge 0$ and like the scale-separated case if $\rho < 0$.
\end{lem}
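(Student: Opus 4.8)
The plan is to reduce every assertion to the location and number $N$ of zeros of the Wronskian $W(a,b,c)$ on $(0,\infty)$, since by \cref{lem:singular0} the regression points, the singular points, and (for transversal zeros) the cusps of $\eta$ are all governed by these zeros. First I would bound $N$ by regime. In the forward case the explicit formula \eqref{eq:detfor} expresses $W(a_{\rm f},b_{\rm f},c_{\rm f})$ as a linear combination of the exponentials $e^{-2\lambda_2 x}, e^{-(\lambda_1+\lambda_2)x}, e^{-2\lambda_1 x}$, which form a Descartes system; hence the number of zeros is bounded by the number of sign changes of the coefficient vector. The coefficient of $e^{-2\lambda_2 x}$ is always negative, that of the mixed term has sign $-\operatorname{sgn}(\rho)$, and that of $e^{-2\lambda_1 x}$ has sign $-\operatorname{sgn}(2\lambda_1-\lambda_2)$. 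Tabulating signs gives no sign change (hence $N=0$) in the scale-proximal case with $\rho\ge 0$, exactly one sign change (hence $N\le 1$) in the scale-separated case irrespective of the sign of $\rho$, and two sign changes (hence $N\le 2$) in the scale-proximal case with $\rho<0$; in the scale-critical case the outer term drops out, reducing to two exponentials and reproducing the proximal/separated dichotomy according to the sign of $\rho$. For the yield case the same bounds follow from the total-positivity argument deferred to the appendix, which is what establishes \ref{item:Wabc} there. With $N$ in hand, \cref{lem:singular0} immediately yields the cusp counts: at most $N$ cusps in general, and exactly $N$ in the scale-separated case because a single sign change forces the unique zero to be transversal.

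For the intersections of $\eta$ with $\ell_0$ and $\ell_\infty$ I would study, for $\delta\in\{0,\infty\}$, the scalar function $g(x)=F(\delta,\eta(x))=a(\delta)+b(\delta)\eta_1(x)+c(\delta)\eta_2(x)$, whose zeros on $(0,\infty)$ are exactly the interior intersections with $\ell_\delta$. Using the tangent formula of \cref{lem:envelope_slope} one computes $g'(x)=\frac{W(a,b,c)(x)}{W(b,c)(x)^2}\bigl(c(\delta)b(x)-b(\delta)c(x)\bigr)$, and the bracket is a nonvanishing multiple of the slope difference $s(x)-s(\delta)$, which by the strict monotonicity of $s$ (\cref{lem:vasicek}) has constant sign on $(0,\infty)$. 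Hence the critical points of $g$ are precisely the $N$ regression points, so $g$ has at most $N$ critical points; since \cref{lem:contact} forces $g(\delta)=0$ (the contact point lies on $\ell_\delta$, even asymptotically), Rolle's theorem bounds the interior zeros of $g$ by $N$. For $N=0$ this gives strict monotonicity and no interior intersection, recovering the scale-proximal $\rho\ge 0$ statement.

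The main work, and the expected main obstacle, is the count of self-intersections. Here I would exploit that in both the forward and yield cases $b,c<0$ throughout (immediate from the explicit $b,c$), so that on each of the $N+1$ arcs between consecutive regression points the coordinate derivatives $\eta_1'\propto -cW$ and $\eta_2'\propto bW$ are nonzero of opposite sign; thus each arc is a strictly decreasing graph $z_2=\phi(z_1)$ whose slope equals $s(x)$. Since $s$ is strictly decreasing and changes its role at each transversal cusp, consecutive arcs have opposite convexity, and they meet at the cusp with a common value and common slope; writing the difference of the adjacent graphs as (convex)$-$(concave) shows it is convex and vanishes to first order at the shared cusp, hence is sign-definite, so adjacent arcs never cross away from the cusp. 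For two non-adjacent arcs the slope ranges are disjoint and ordered by the global monotonicity of $s$, so the difference of the two graphs has a strictly signed derivative on any common $z_1$-range and therefore vanishes at most once; hence any two arcs meet at most once. In particular, when $N\le 1$ there is no self-intersection, and when $N=2$ only the outer pair of arcs can meet, and at most once, giving the bound $\lfloor N/2\rfloor\le 1$.

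The delicate points I would treat with care are: checking that the augmenting segments along $\ell_0$ and $\ell_\infty$ (\cref{defn:augmented}) create no further crossings, which follows from the $\ell_0,\ell_\infty$-intersection count above together with the contact behaviour of \cref{lem:contact}; handling non-transversal zeros of $W(a,b,c)$, where $\eta'$ does not reverse so the two neighbouring arcs merge into a single graph and the relevant counts can only decrease (this is exactly why the cusp and self-intersection bounds are stated with inequalities in the scale-proximal $\rho<0$ case); and, for the yield case, confirming that the total-positivity analysis gives not merely $N\le 2$ but the full regime-dependent refinement. This last point is where the yield case is genuinely harder than the forward case and where I would expect to invest the most effort.
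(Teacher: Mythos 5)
Your proposal is correct, and the bounds on $N$ (Descartes-system sign count on \cref{eq:detfor}) as well as the $\ell_0/\ell_\infty$-intersection counts (studying $x\mapsto F(\delta,\eta(x))$, whose critical points are forced to be zeros of $W(a,b,c)$ because the slope factor $c(\delta)b(x)-b(\delta)c(x)$ never vanishes, then anchoring with the contact point and applying Rolle) coincide with what the paper does. Where you genuinely diverge is the self-intersection count. The paper fixes a tangent line $\ell_\xi$ and analyzes $f_\xi(x)=a(\xi)+b(\xi)\eta_1(x)+c(\xi)\eta_2(x)$, whose derivative \cref{eq:fxi_diff} shows that any two self-intersection times must be separated by a cusp; it then rules out adjacent-arc crossings in case (2) by noting that the tangent line at the unique cusp is crossed only once, and in case (3) it bounds the count by $1$ by placing a line through two hypothetical self-intersection points and deriving a contradiction with the decreasing slope via Cauchy's mean value theorem. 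You instead decompose $\eta$ into the $N+1$ arcs between regression points, observe that $b,c<0$ makes each arc a strictly decreasing graph $z_2=\phi(z_1)$ with $\phi'=s$, and then kill adjacent-arc crossings by a convex-minus-concave, double-zero-at-the-cusp argument and non-adjacent crossings by disjointness of the slope ranges. Both arguments are sound and of comparable length; yours has the advantage of being pairwise-exhaustive over arcs (it immediately localizes the possible self-intersection to the outermost pair of arcs and handles non-transversal zeros by arc merging), while the paper's tangent-line argument avoids having to verify that each arc is a global graph over $z_1$ and reuses the same function $f_\xi$ for the $\ell_0/\ell_\infty$ counts.

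The one place where your proof is not yet closed is the one you flag yourself: transferring the regime-dependent bounds on $N$ to the yield curve. The appendix result you invoke for \ref{item:Wabc} only gives ``at most two zeros'' of $W(a_{\rm y},b_{\rm y},c_{\rm y})$, which is not enough for cases (1) and (2). The paper closes this with \cref{lem:start}(c): using the integral relation $g_{\rm y}(x)=x^{-2}\int_0^x\zeta g_{\rm f}(\zeta)\,\dif\zeta$ one shows that the number of zeros of $W(a_{\rm y},b_{\rm y},c_{\rm y})$ is at most that of $W(a_{\rm f},b_{\rm f},c_{\rm f})$ (a surplus cusp of $\eta_{\rm y}$ would, via Cauchy's mean value theorem, force $W(b_{\rm y},c_{\rm y})$ to vanish, contradicting \ref{item:Wbc}). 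You would need this comparison lemma, or an equivalent, to finish the yield case.
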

\begin{proof} First consider the Wronskian $W(a,b,c)(x)$ for the forward curve, as given in \cref{eq:detfor}. The functions $(e^{-2\lambda_2x},e^{-2(\lambda_2  + \lambda_1)x}, e^{-2\lambda_1x})$ form a Descartes system, see \cite[Sec.~3.2]{borwein1995polynomials}. By the variation-diminishing property of Descartes systems, see \cite[Thm.~3.2.4]{borwein1995polynomials}, the number $N$ of zeroes of $x \mapsto W(a,b,c)(x)$ is bounded by the number of sign changes in the sequence of coefficients of the Descartes system. Thus, the bounds for $N$ in the three cases of \cref{def:regime} follow immediately from \cref{eq:detfor}. The same bounds must holds for the yield curve, by \cref{lem:start}.\\
Next, we focus on self-intersections: For fixed $\xi \in (0,\infty)$, consider the line
\[\ell_\xi: a(\xi) + b(\xi)z_1 + c(\xi) z_2 = 0.\]
Clearly, $\ell_\xi$ meets $\eta$ at $\eta(\xi)$ and is tangent there. To evaluate the position of the full envelope $\eta$ relative to $\ell_\xi$ we set
\[f_\xi(x) = a(\xi) + b(\xi)\eta_1(x) + c(\xi) \eta_2(x),\]
which has a zero at some $x_0$ if and only if $\eta$ meets $\ell_\xi$ at time $x = x_0$. Note that all intersections at $x_0 \neq \xi$ are true crossings due to the strictly decreasing slope of $\eta$; cf. \cref{lem:envelope_slope}. Taking derivatives, we find
\begin{equation}\label{eq:fxi_diff}
f'_\xi(x) = \frac{W(a,b,c)(x)}{W(b,c)^2} m_\xi(x), \quad \text{where } m_\xi(x) = c(\xi)b(x) - b(\xi)c(x).
\end{equation}
Note that the only zero of $m_\xi(x)$ occurs at $x = \xi$, other zeroes are ruled out by the decreasing-slope property from Lemma~\ref{lem:envelope_slope}. Thus any zero of $f'_\xi$ at $x \neq \xi$ is also a zero of $W(a,b,c)$. 
Let now $z_*$ be point of self-intersection of $\eta$, i.e., there exist $0 < \xi_1 < \xi_2$ such that $\eta(\xi_1) = \eta(\xi_2) = z_*$. Then, also the tangent line $\ell_{\xi_1}$ has to be crossed again in $\eta(\xi_2)$, i.e. $f_{\xi_1}(\xi_1) = f_{\xi_1}(\xi_2) = 0$. By the mean value theorem there exists $\xi_* \in (\xi_1, \xi_2)$ such that $f'_{\xi_1}(\xi_*) = 0$ (the zero being transversal), which by \eqref{eq:fxi_diff} implies that $\eta$ has a cusp at time $\xi_*$. This shows that any two times of self-intersections must be separated by a cusp; in particular $\eta$ cannot have self-intersections in case (a). For case (b), assume that $\eta$ has a cusp at time $\xi_*$ (i.e., $W(a,b,c)(\xi_*) = 0$) and consider the tangent line $\ell_{\xi_*}$. From \eqref{eq:fxi_diff} we conclude that $f'_{\xi_*}({\xi_*}) = 0$ is the only zero of $f'_{\xi_*}$, since there are no other cusp points. Hence $\ell_{\xi_*}$ is crossed only once by $\eta$; more precisely the part before the cusp and the part after the cusp stay on opposite sides of $\ell_{\xi_*}$. This rules out self-intersections in case (b). Finally, consider case (c), where self-intersections are possible, iff $N=2$. Assume that there are at least two self-intersections points $z$ and $\zeta$. Putting a line $h$ through $z$ and $\zeta$ this line is intersected twice by $\eta$ before its first cusp, and twice after its last cusp. By Cauchy's mean value theorem, we conclude that the tangent vector of $\eta$ must be parallel to the line $h$ twice; once before the first cusp and once after the last cusp. This contradicts the decreasing-slope property and bounds the number of self-intersections by $1 = \lfloor 2/2 \rfloor$.\\
Applying similar arguments as above to the lines $\ell_0$ and $\ell_\infty$ (and using the corresponding functions $f_0(x)$ and $f_\infty(x)$) the number of intersections with $\ell_0$ and $\ell_\infty$ can be bounded analogously.
\end{proof}

\section{Main Results}\label{sec:main}

In this section we show the two main results on the state-conditional shape of the forward- and the yield-curve in the two-dimensional Vasicek model. To formulate the results we need to distinguish four subsets (\emph{`quadrants'}) of the state space $Z = \RR^2$; see \cref{fig:aeww} for an illustration.
    \begin{align*}
        Q_{n} &:= {\ell_0^+}\cap {\ell_\infty^+} \quad \text{(normal-like)}\\
        Q_{h} &:= \ell_0^+\cap\ell_\infty^- \quad \text{(hump-like)}\\
        Q_{i} &:= {\ell_0^-}\cap {\ell_\infty^-} \quad \text{(inverse-like)}\\
        Q_{d} &:= \ell_0^-\cap\ell_\infty^+ \quad \text{(dip-like)}
    \end{align*}
    If needed we add superscripts ($Q_n^{\rm f}$,$Q_n^{\rm y}$, \dots), to clarify whether we refer to the family of lines associated to the forward- or to the yield curve. The four quadrants are determined by the initial sign and the terminal sign of the forward (or yield) curve's derivative. In $Q_n^{\rm f}$, for example, the forward curve must be initially increasing (for $x$ in a right neighborhood of $0$) and also terminally increasing as $x \to \infty$. This behavior is the same as for a normal curve, hence we call it the `normal-like' quadrant, etc.  The boundaries between the quadrants are given by the rays from $M$ along $\ell_0$ and $\ell_\infty$, for which we introduce the following notation:
  \begin{align*}
r_{nh} &= \partial Q_n \cap \partial Q_h, \qquad &&r_{hi} = \partial Q_h \cap \partial Q_i,\\
r_{id} &= \partial Q_i \cap \partial Q_d, \qquad &&r_{dn} = \partial Q_d \cap \partial Q_n.
\end{align*}
    
Clearly, the unique common point of any two rays is the intersection point $M$ of $\ell_0$ and $\ell_\infty$. Our first main result links the shape of the forward- and the yield-curve to the winding number of the augmented envelope $\hat \eta$ (see \cref{defn:augmented}) and to the quadrants defined above:

        \begin{thm}\label{thm:main}Let $E(z)$ be the number of local extrema of the forward curve $x \mapsto f(x,Z_t)$ or the yield curve $x \mapsto y(x,Z_t)$ in the two-dimensional Vasicek model, conditional on $Z_t = z \in \RR^2$. Let $\hat \eta$ be the augmented envelope of the family $(\ell_x)_{x \in (0,\infty)}$ associated to $f$ or $y$, as defined in \cref{defn:augmented}. Then, for any $z \in \RR^2 \setminus (\eta \cup \ell_0 \cup \ell_\infty)$
        \begin{equation}\label{eq:main_formula}
            E(z) = 2 \abs{\wind_{\hat{\eta}}(z)} + \Ind{Q_{\texttt{h}}\cup Q_{\texttt{d}}}(z),
        \end{equation}
        where $\wind_{\hat{\eta}}(z)$ denotes the winding number of $\hat \eta$ around $z$.
    \end{thm}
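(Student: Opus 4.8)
The plan is to fix an oblique helper line $h$ through $z$ (which exists by \ref{item:oblique}; in the Vasicek case any horizontal line works by \cref{lem:vasicek}) and to transfer the counting of extrema to the tracking function $\tau$ of \cref{def:leftright}. Since $\tau(x) = F(x,z)/(b(x)v_1 + c(x)v_2)$ and the denominator has constant sign by obliqueness, the local extrema of $x \mapsto f(x,Z_t)$ (equivalently of $y$), which are the sign changes of $F(\cdot,z) = \partial_x f$, are exactly the zeros of $\tau$ on $(0,\infty)$. Because $z \notin \eta$, a common zero of $F(\cdot,z)$ and $\partial_x F(\cdot,z)$ is impossible, since it would force $z = \eta(x)$; hence every zero of $\tau$ is simple, and $E(z)$ equals the number of zeros of $\tau$, each of them a genuine sign change.

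Next I would record the endpoint signs. Since $z \notin \ell_0\cup\ell_\infty$, the limits $\tau(0)$ and $\tau(\infty)$ exist by \ref{item:limits} and are nonzero, and (up to the common sign of the constant-sign denominator) their signs equal those of $F(0,z)$ and $F(\infty,z)$, i.e.\ they encode whether $z$ lies in $\ell_0^{\pm}$ and $\ell_\infty^{\pm}$. Comparing with the definition of the quadrants, $\tau(0)$ and $\tau(\infty)$ have equal sign on $Q_n\cup Q_i$ and opposite sign on $Q_h\cup Q_d$. Consequently the number of sign changes of $\tau$, that is $E(z)$, is even on $Q_n\cup Q_i$ and odd on $Q_h\cup Q_d$. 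This already matches the parity of the right-hand side of \eqref{eq:main_formula} and reduces the theorem to the magnitude identity $\lfloor E(z)/2\rfloor = \abs{\wind_{\hat\eta}(z)}$.

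For the magnitude I would argue by a degree-theoretic wall-crossing. Both $E$ and $z\mapsto 2\abs{\wind_{\hat\eta}(z)} + \Ind{Q_h\cup Q_d}(z)$ are locally constant on $\RR^2\setminus(\eta\cup\ell_0\cup\ell_\infty)$: for $E$ this holds because all zeros of $\tau$ are simple and can only be created or destroyed when $z$ meets $\eta$ (two zeros coalescing into a double zero of $\tau$, by \cref{lem:tracking}) or when a zero escapes to the boundary $x\in\{0,\infty\}$, which happens exactly on $\ell_0$ or $\ell_\infty$; the winding number is locally constant off $\hat\eta$, and the indicator is constant on each quadrant. I would first establish that $\wind_{\hat\eta}$ has a globally constant sign $\varepsilon\in\{\pm1\}$: the slope $s(x)=-b(x)/c(x)$ is strictly monotone (\cref{lem:envelope_slope}, \cref{lem:vasicek}), so the tangent direction of $\hat\eta$ turns consistently in one sense, and the at-most-one self-intersection (\cref{lem:singular}) can only create nested, equally oriented loops; hence $\abs{\wind_{\hat\eta}} = \varepsilon\,\wind_{\hat\eta}$. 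It then suffices to prove the signed identity $E(z) - \Ind{Q_h\cup Q_d}(z) = 2\varepsilon\,\wind_{\hat\eta}(z)$, which removes the awkward absolute value. I would anchor it at a base point far out in $Q_n$, where the curve is monotone ($E=0$), $z$ lies outside $\hat\eta$ ($\wind_{\hat\eta}=0$), and $\Ind{Q_h\cup Q_d}=0$, so the identity holds; the task is then to verify that all three quantities change consistently across every wall separating adjacent components.

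The heart of the proof is this local wall analysis, and it is also where I expect the main obstacle to lie. Crossing a regular arc of $\eta$ changes $E$ by $\pm2$ (a double zero of $\tau$ splits into, or annihilates, a pair of simple zeros) and, since that arc belongs to $\hat\eta$, changes $\wind_{\hat\eta}$ by $\pm1$ while leaving the quadrant fixed. Crossing a ray of $\ell_0$ or $\ell_\infty$ changes $\Ind{Q_h\cup Q_d}$ by $\pm1$ and, via the boundary zero of $\tau$ entering or leaving at $x\in\{0,\infty\}$, changes $E$ by $\pm1$; here one must distinguish whether the crossing point lies on the augmenting segment of $\hat\eta$ from \cref{defn:augmented} (so that $\wind_{\hat\eta}$ also changes) or beyond $\eta(0)/\eta(\infty)$ (so that $\wind_{\hat\eta}$ is unchanged). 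In each case the three increments must satisfy $\Delta E = 2\varepsilon\,\Delta\wind_{\hat\eta} + \Delta\Ind{Q_h\cup Q_d}$ with correlated signs, and the delicate points are (i) pinning down the orientation of each crossing, whose sign flips across cusps because $\eta'(x)$ reverses where $W(a,b,c)$ changes sign, and (ii) the geometric bookkeeping at the augmenting segments, at $M$, and near self-intersections and points of regression, using \cref{lem:singular} to control their number and arrangement. Since the singular points of $\eta$ are isolated, generic wall crossings suffice, and these degenerate configurations need only be checked not to obstruct the adjacency argument.
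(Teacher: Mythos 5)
Your set-up is sound and parallels the paper's: the identification of $E(z)$ with the number of (necessarily simple) zeros of the tracking function $\tau$, and the observation that the signs of $\tau(0)$ and $\tau(\infty)$ encode the quadrant and force $E(z)$ to have the parity of $\Ind{Q_h\cup Q_d}(z)$, reproduce the content of \cref{lem:tracking}, \cref{lem:extrema} and the parity parts of \cref{lem:odd_even}. But the reduction to $\lfloor E(z)/2\rfloor = \abs{\wind_{\hat\eta}(z)}$ is then handled by a wall-crossing argument whose decisive step you only name, not prove. The identity $\Delta E = 2\varepsilon\,\Delta\wind_{\hat\eta} + \Delta\Ind{Q_h\cup Q_d}$ across each wall is precisely the theorem in local form: a priori, crossing a regular arc of $\eta$ gives $\Delta E = \pm 2$ and $\Delta\wind_{\hat\eta} = \pm 1$ with no correlation between the two signs, and if they ever disagree the formula fails on one side of that arc. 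You correctly flag this as ``the main obstacle'' and list the delicate points (orientation of crossings, cusps, the augmenting segments, $M$, self-intersections), but you do not supply the geometric input that resolves them. In the paper that input is \cref{lem:forbidden_crossing}: the strictly decreasing slope of $\ell_x$ together with Cauchy's mean value theorem forbids the envelope from crossing a horizontal helper line from below and then meeting it again to the right, and this single constraint is what excludes the bad sign configuration (the ``odd--odd'' outer blocks in the paper's $\tL/\tR$-sequence bookkeeping). Without an analogue of this lemma your induction over walls cannot start to close.

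A second, related gap is the claim that $\wind_{\hat\eta}$ has a globally constant sign $\varepsilon$, which your signed identity requires. The justification offered --- that the monotone slope makes the tangent direction of $\hat\eta$ ``turn consistently in one sense'' --- is not correct as stated, since $\eta'(x) = \tfrac{W(a,b,c)(x)}{W(b,c)(x)^2}(-c(x),b(x))$ reverses direction at every transversal zero of $W(a,b,c)$, i.e.\ at every cusp (as you yourself note two sentences later); and the assertion that a self-intersection ``can only create nested, equally oriented loops'' is unproven. The paper avoids this issue entirely by never working with the sign of the winding number: \cref{lem:block-parity-count} computes $\abs{\wind_{\hat\eta}(z)}$ pointwise as the number of odd-length $\tL$-blocks (equivalently $\tR$-blocks) via the non-zero winding rule, so no global sign coherence is needed. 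I would recommend either proving a version of \cref{lem:forbidden_crossing} and the constant-sign claim to complete your route, or switching to the paper's pointwise combinatorial count, which sidesteps both difficulties.
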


This result allows to completely determine the shape of the forward- and the yield-curve conditional on the current state $Z_t \in \RR^2$ of the stochastic model, extending and strengthening the results of \cite{keller2021classification}:
    
    \begin{cor}The shape, i.e., the number and the sequence of local minima and maxima, of $x \mapsto f(x,Z_t)$ and $x \mapsto y(x,Z_t)$, conditional on $Z_t = z \in \RR^2 \setminus (\eta \cup \ell_0 \cup \ell_\infty)$ is uniquely determined by $E(z)$ and by the location of $z$ within the quadrants $Q_n$, $Q_h$, $Q_i$ and $Q_d$.
    \end{cor}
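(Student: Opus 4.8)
The plan is to read the shape directly off the sign sequence of the derivative, exploiting that for $z$ outside the envelope this sequence must strictly alternate. Writing $g(x) := \partial_x f(x,z)$ (and analogously $\partial_x y(x,z)$ in the yield case), recall from \eqref{eq:fy_diff} that $g(x) = a(x) + b(x) z_1 + c(x) z_2 = F(x,z)$, so that $g$ vanishes at $x_0$ exactly when $z \in \ell_{x_0}$. First I would show that, for $z \in \RR^2 \setminus (\eta \cup \ell_0 \cup \ell_\infty)$, every zero of $g$ on $(0,\infty)$ is sign-changing and hence a genuine local extremum of the curve: if some zero $x_0$ were not sign-changing, then $g(x_0) = g'(x_0) = 0$, so $z$ would satisfy both \eqref{eq:family} and \eqref{eq:envelope} and thus lie on the envelope $\eta$ (\cref{def:envelope}), contradicting the hypothesis. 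It follows that $E(z)$ equals the number of sign changes of $g$, and that the sign sequence of $g$ consists of exactly $E(z) + 1$ maximal constant-sign blocks which strictly alternate between $\pp$ and $\mm$.

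The second step uses that an alternating $\pm$ sequence is completely determined by its length and its first entry. The length $E(z) + 1$ is supplied by \cref{thm:main}. The first entry is the initial sign of $g$, i.e. the sign of $g(x)$ for $x$ in a right neighbourhood of $0$; by definition of the half-spaces $\ell_0^{\pm}$ this sign is $\pp$ precisely when $z \in \ell_0^+$ and $\mm$ precisely when $z \in \ell_0^-$. Since $Q_n, Q_h \subset \ell_0^+$ and $Q_i, Q_d \subset \ell_0^-$, the quadrant containing $z$ fixes the initial sign. Hence the whole alternating sequence — and therefore the shape, obtained by placing a hump at each $\pp\mm$ transition and a dip at each $\mm\pp$ transition — is uniquely determined by $E(z)$ together with the quadrant of $z$, which is exactly the assertion.

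I expect the only genuinely delicate point to be the non-degeneracy argument of the first step, which is precisely where the hypothesis $z \notin \eta$ enters; everything after it is elementary bookkeeping about alternating sequences. As a sanity check one can also verify consistency with the terminal behaviour: the quadrant prescribes the terminal sign of $g$ through $\ell_\infty^{\pm}$, and this agrees with the parity of $E(z)$, since the term $\Ind{Q_h \cup Q_d}(z)$ in \eqref{eq:main_formula} makes $E(z)$ odd exactly in the two quadrants $Q_h, Q_d$ where the initial and terminal signs of $g$ differ, and even in $Q_n, Q_i$ where they coincide.
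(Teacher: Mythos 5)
Your argument is correct and is essentially the one the paper intends: the corollary is left as an immediate consequence of Theorem~\ref{thm:main}, and your transversality step (a non-sign-changing zero of $g=F(\cdot,z)$ would force $z$ to satisfy both \eqref{eq:family} and \eqref{eq:envelope}, i.e.\ $z\in\eta$) is the same mechanism that appears in the proof of Lemma~\ref{lem:extrema}, just phrased via $F(x,z)$ instead of the tracking function $\tau$. The bookkeeping with the alternating sign sequence, its first entry fixed by $\ell_0^\pm$ and its length by $E(z)+1$, matches the paper's reasoning exactly.
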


  Our second main result shows that the augmented envelope partitions the state space into finitely many regions of `constant shape':
    
    \begin{thm}\label{thm:split}The augmented envelope $\hat \eta$ partitions the state space $Z = \RR^2$ into finitely many open regions $(R_1, \dotsc, R_K)$, for which the following holds: 
    \begin{enumerate}
\item Within each region the shape, i.e., the number and the sequence of local minima and maxima, of the forward (or yield) curve, conditional on $Z_t = z \in R_i$ stays the same.
\item If two regions $R_i \neq R_j$ share a boundary containing a one-dimensional submanifold of $\RR^2$, then the shapes of the forward (or yield) curve on $R_i$ and $R_j$ must be different.
\end{enumerate}
    \end{thm}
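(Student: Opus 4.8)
The plan is to realise the regions $(R_1,\dots,R_K)$ as the connected components of $\RR^2\setminus(\eta\cup\ell_0\cup\ell_\infty)$, i.e. of the complement of the augmented envelope together with its \emph{whiskers} (the portions of the full lines $\ell_0,\ell_\infty$ lying outside the two segments that make up $\hat\eta$); this is precisely the set on which \cref{thm:main} is valid. First I would settle finiteness. By \cref{lem:singular} together with assumption~\ref{item:Wabc}, the envelope $\eta$ has only finitely many cusps and self-intersections and meets each of $\ell_0,\ell_\infty$ in finitely many points; adjoining $M=\ell_0\cap\ell_\infty$ and the contact points $\eta(0),\eta(\infty)$ yields a finite set $V$ of distinguished points. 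Removing $V$, the set $\eta\cup\ell_0\cup\ell_\infty$ decomposes into finitely many embedded $C^1$ arcs (each envelope piece between consecutive points of $V$ is embedded, since any two self-intersection times are separated by a cusp, cf.\ the proof of \cref{lem:singular}) and finitely many line segments and rays. Hence $\eta\cup\ell_0\cup\ell_\infty$ is the trace of a finite embedded graph in the one-point compactification $S^2=\RR^2\cup\{\infty\}$, and Euler's formula bounds the number of complementary faces, giving $K<\infty$.

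For claim (1) I would fix a region $R_i$. Being connected and disjoint from $\ell_0\cup\ell_\infty$, it lies in a single quadrant $Q\in\{Q_n,Q_h,Q_i,Q_d\}$, so $\Ind{Q_h\cup Q_d}$ is constant on $R_i$. Since $\hat\eta\subseteq\eta\cup\ell_0\cup\ell_\infty$, we have $R_i\cap\hat\eta=\emptyset$, and because the winding number $z\mapsto\wind_{\hat\eta}(z)$ is integer-valued and locally constant on $\RR^2\setminus\hat\eta$ (a standard property of the topological degree), it is constant on the connected set $R_i$. By \cref{thm:main}, $E(\cdot)=2\abs{\wind_{\hat\eta}(\cdot)}+\Ind{Q_h\cup Q_d}(\cdot)$ is therefore constant on $R_i$, and the Corollary to \cref{thm:main} then shows that the full shape is constant on $R_i$.

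For claim (2), suppose $R_i\neq R_j$ share a boundary containing a one-dimensional submanifold $S$. I would pick $p\in S$ that is a regular point of $\eta\cup\ell_0\cup\ell_\infty$, avoiding the finite set $V$, so that near $p$ the boundary is a single smooth embedded arc with $R_i$ and $R_j$ on opposite sides; that generic points of $S$ are unambiguously of one type uses that $\eta$ shares no subarc with $\ell_0$ or $\ell_\infty$, a consequence of the strictly monotone slope in \cref{lem:envelope_slope}. I then track the change of $E=2\abs{\wind_{\hat\eta}}+\Ind{Q_h\cup Q_d}$ across $S$ by cases. If $p$ lies on $\eta$ (hence off $\ell_0\cup\ell_\infty$), crossing the single local arc of $\hat\eta$ transversally changes $\wind_{\hat\eta}$ by exactly $\pm1$ while the quadrant is unchanged; since $\abs{n\pm1}-\abs{n}=\pm1$ for every integer $n$, the quantity $E$ changes by $2$. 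If $p$ lies on $\ell_0$ or $\ell_\infty$ but off $\hat\eta$ (a whisker), then $\wind_{\hat\eta}$ is unchanged while crossing the line flips the relevant half-space, so $\Ind{Q_h\cup Q_d}$ changes by $1$ and $E$ changes by $1$. If $p$ lies on the part of $\ell_0$ or $\ell_\infty$ belonging to $\hat\eta$, both effects occur, giving a change $2(\pm1)+(\pm1)$, which is odd and hence nonzero. In every case $E$, and therefore the shape, differs between $R_i$ and $R_j$.

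The main obstacle is claim (2): justifying that $\wind_{\hat\eta}$ jumps by exactly $\pm1$ across a transversal crossing of a regular arc, and organising the case distinction so that the combined parity argument $2(\pm1)+(\pm1)$ rigorously excludes $E$ being preserved, while also verifying the clean-type property of generic points of $S$. The finiteness step is routine given \cref{lem:singular}, and claim (1) is essentially an application of \cref{thm:main} once local constancy of the winding number is invoked.
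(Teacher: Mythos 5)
Your proposal is correct and follows essentially the same route as the paper: finiteness via the (configuration) graph embedded in the sphere and Euler's formula, constancy of the shape on each region via local constancy of the winding number combined with \cref{thm:main}, and distinctness across a shared one-dimensional boundary via the crossing/parity behaviour of $2\abs{\wind_{\hat\eta}}+\Ind{Q_h\cup Q_d}$, which is exactly what the paper invokes through \eqref{eq:main_formula} and \cref{thm:non-zero}. Your write-up merely makes the case distinction for claim (2) explicit where the paper compresses it into one sentence.
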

    
In particular, this result allows to quantify the probability of each shape to appear, and also to enumerate the possible transitions between different shapes, see \cref{sec:shape} below. Some examples of state-space decompositions are illustrated in Figures \labelcref{fig:sp0}, \labelcref{fig_ss} and \labelcref{fig:sp}.
    
            \begin{rem}\label{rem:online}A question left open by the results above,  is to determine the shape of the forward/yield curve, conditional on a state $z$ located on the boundary of two regions, or equivalently, for $z \in \eta \cup \ell_0 \cup \ell_\infty$. From the proofs given below it will become obvious that such states correspond to curves with either: critical points, which are not extremal; or a derivative that vanishes as $x \to 0$ or $x \to \infty$. In any case, such behavior does not add a local extremum to the curve, and we arrive at the following rule: The shape of a point $z$ at the boundary between two (or more) regions, is the shape of the adjacent region \emph{with the least number of local extrema}.
\end{rem}

                 \begin{figure}
            \centering
            \includegraphics[width=1\textwidth]{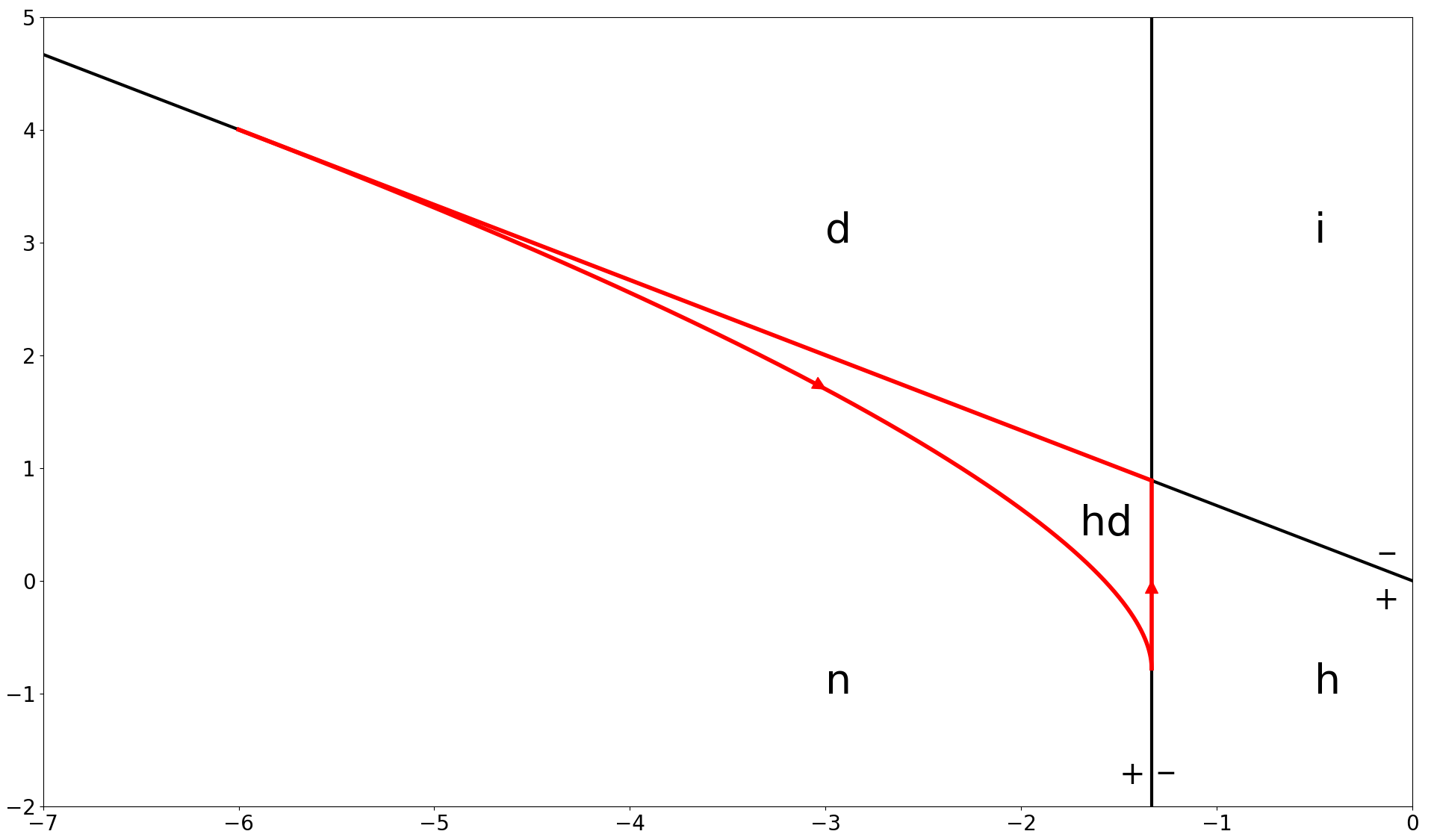}
            \caption{\label{fig:sp0} Example of augmented envelope (red) and state-space decomposition for the forward curve in the scale-proximal case with $\rho \ge 0$. Parameters used are: $\lambda_1 = 1.0$, $\lambda_2 = 1.5$, $\sigma_1 = 1.0$, $\sigma_2 = 1.0$, $\rho=0.5$, $\theta_1=0.0$, $\theta_2=0.0$.}

        \end{figure}
        \begin{figure}
            \centering
            \includegraphics[width=1\textwidth]{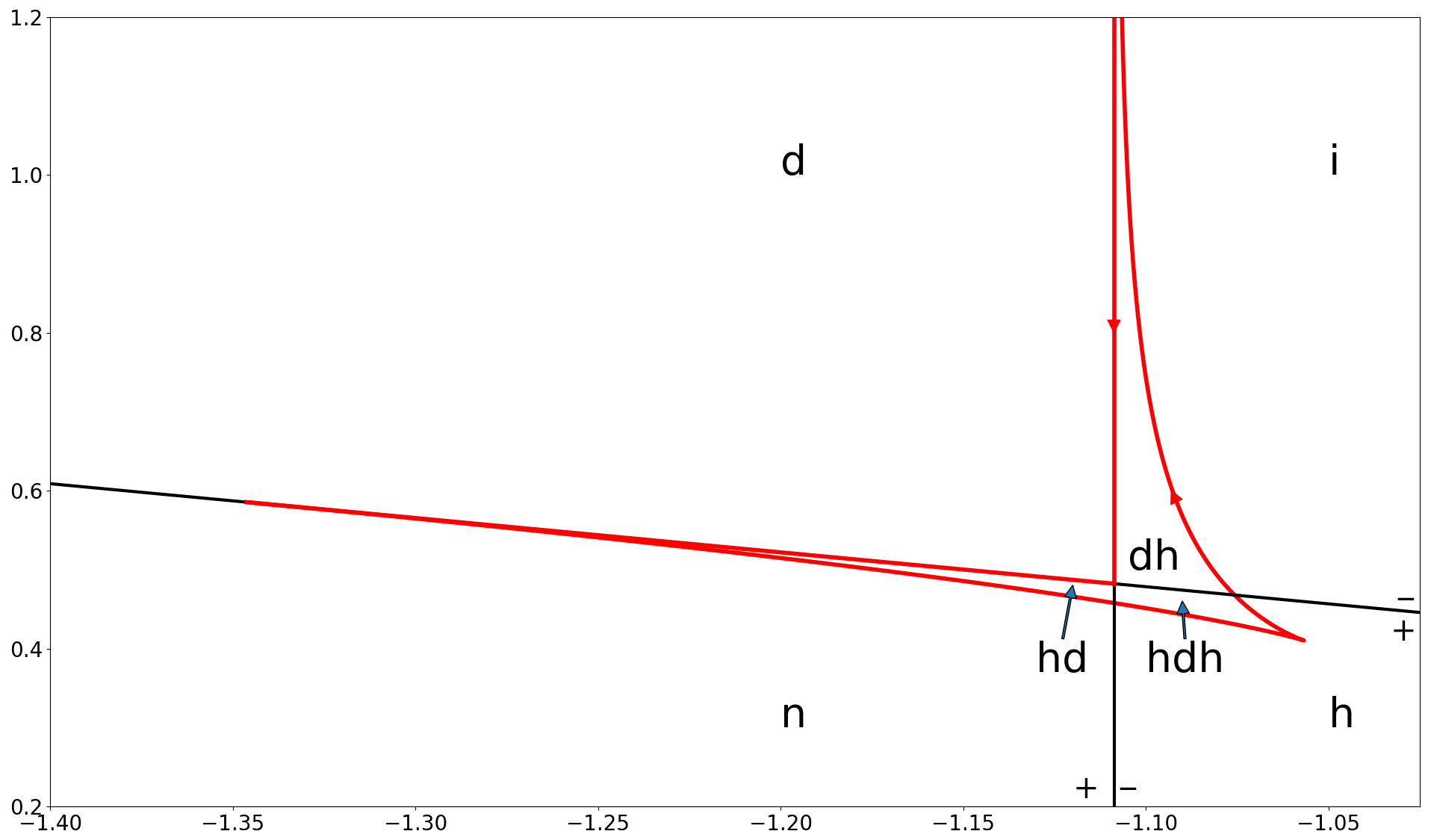}
            \caption{\label{fig_ss}Example of augmented envelope (red) and state-space decomposition for the forward curve in the scale-separated case. Parameters used are:   $\rho = 0.5$, $\lambda_1 = 1.0$, $\lambda_2 = 2.3$, $\sigma_1 = 1.0$, $\sigma_2 = 0.5$, $\theta_1 = 0.0$, $\theta_2 = 0.0$. Note that the augmented envelope contains an asymptotic point along the $z_2$-axis.}

        \end{figure}

        \begin{figure}
            \centering
            \includegraphics[width=1\textwidth]{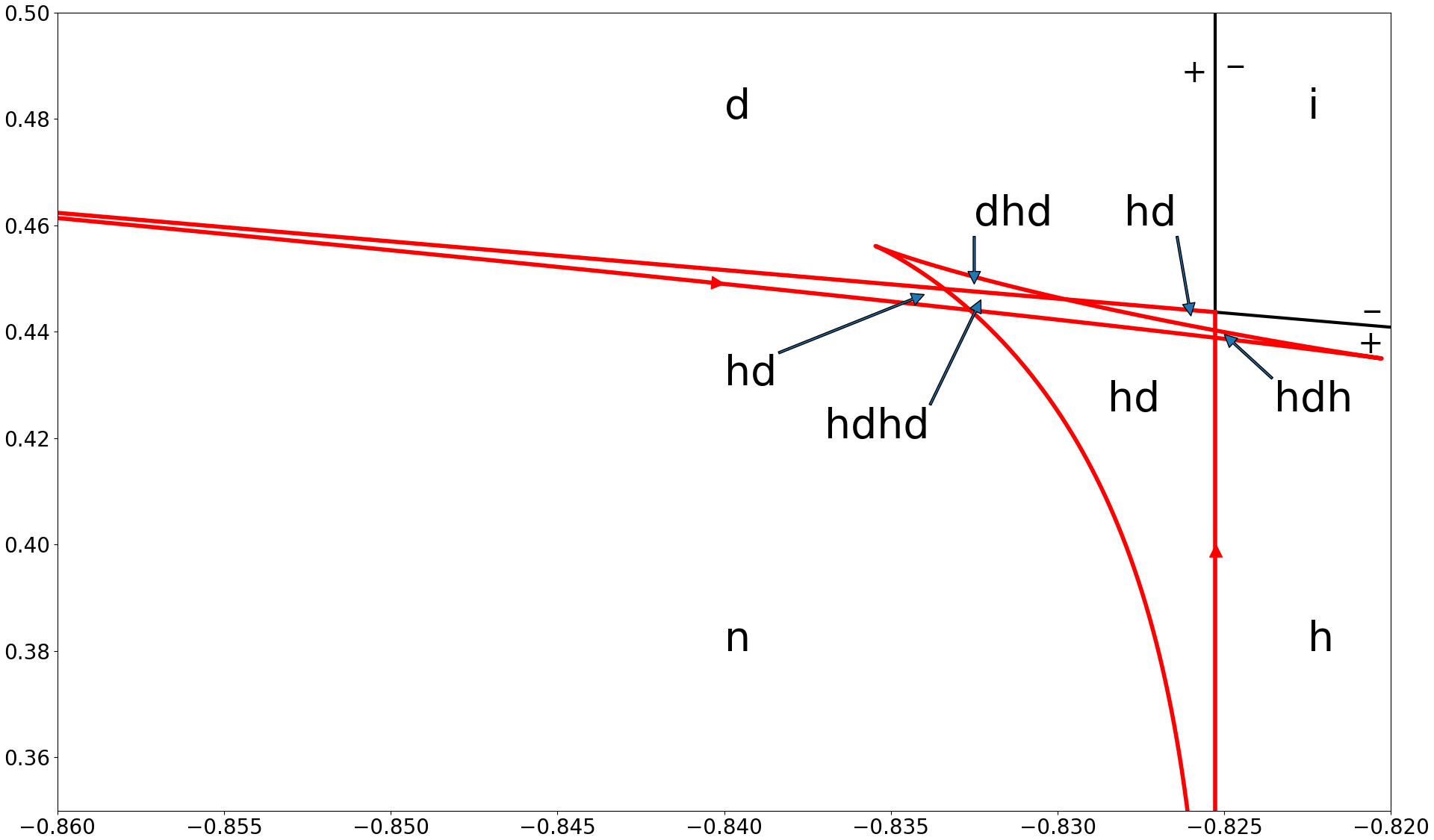}
            \caption{\label{fig:sp}Example of augmented envelope (red) and state-space decomposition for the forward curve in the scale-proximal case with $\rho < 0$. Parameters used are: $\rho = -0.5$, $\lambda_1 = 1.0$, $\lambda_2 = 1.86$,$\sigma_1 = 1.0$, $\sigma_2 = 0.65$, $\theta_1 = 0.0$, $\theta_2 = 0.0$. }
        \end{figure}

    We prepare for another corollary by making the following definitions.
    \begin{defn}\begin{enumerate}[(a)]\label{def:configuration}
    \item The (self-)intersection points of $\eta$, $\ell_0$ and $\ell_\infty$, including the contact points $\eta(0)$ and $\eta(\infty)$, are called \emph{special points} of the plane $\mathcal{Z} = \RR^2$. 
\item We call the configuration of $\eta$, $\ell_0$ and $\ell_\infty$ \emph{regular} if all special points are distinct. 
\item Adding a point $\infty$ at infinity, the plane $\mathcal{Z}$ becomes the Riemann sphere $\hat{\mathcal{Z}}$. We define the \textbf{configuration graph} as the embedded graph\footnote{Or `map', in the sense of \cite{lando2004graphs}} in $\hat{\mathcal{Z}}$, whose vertices are the special points and $\infty$, whose edges are all arcs of $\eta$, $\ell_0$ and $\ell_\infty$ connecting these points, and whose faces are the regions $R_1, R_2, \dotsc, R_K$ of Theorem~\ref{thm:split}.
 \end{enumerate}
\end{defn}
\begin{rem}
The additional point $\infty$ can be considered as the common endpoint of the rays $r_{nh}, r_{hi} ,r_{id}$ and $r_{dn}$; it is also identified with $\eta(\infty)$, if $\eta(\infty)$ is asymptotic.
\end{rem}    
    
 \begin{cor}\label{cor:configuration}
 Assume that the configuration of $\eta, \ell_0$ and $\ell_\infty$ is regular. Let $s$ be the number of self-intersections of $\eta$ and let $q_i$ be the number of intersections with $\ell_i$ where $i \in \set{0,\infty}$. Then the number $K$ of regions in Theorem~\ref{thm:split} is given by
 \begin{equation}\label{eq:count}
K = 5 + s+ q_0 + q_\infty \le 10.
\end{equation}
Exactly four of these regions correspond to the basic shapes $\texttt{normal}, \texttt{dipped}, \texttt{humped}$ and $\texttt{inverse}$. 
\end{cor}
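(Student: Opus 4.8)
I would prove the two assertions separately: the count \eqref{eq:count} comes from Euler's formula applied to the configuration graph, and the identification of the basic shapes comes from the winding-number formula of \cref{thm:main}. Throughout I use the regular-configuration hypothesis, which guarantees that all special points are distinct and hence that the configuration graph is a genuine cellularly embedded graph on the Riemann sphere $\hat{\mathcal{Z}}$, so that $V-E+F=2$ with $F=K$.

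First I would establish \eqref{eq:count} by bookkeeping of vertices and edges. The curves $\eta$, $\ell_0$ and $\ell_\infty$ are mutually connected through $M$, $\eta(0)$ and $\eta(\infty)$, so the graph is connected and Euler's formula applies. Counting vertices gives $V = 4 + s + q_0 + q_\infty$: namely $M$, $\eta(0)$, $\eta(\infty)$, the point $\infty$, the $s$ self-intersections of $\eta$, and the $q_0$ resp.\ $q_\infty$ transversal crossings of $\eta$ with $\ell_0$ resp.\ $\ell_\infty$. For the edges, the arc $\eta$ carries $2 + 2s + q_0 + q_\infty$ marked parameter values (its two endpoints, each self-intersection counted twice, and each line-crossing once) and hence contributes $1 + 2s + q_0 + q_\infty$ edges; each line $\ell_0,\ell_\infty$ becomes a circle through $M$ and $\infty$ on the sphere carrying $3+q_0$ resp.\ $3+q_\infty$ marked points, hence the same number of edges. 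This yields $E = 7 + 2s + 2q_0 + 2q_\infty$ and $F = 2 - V + E = 5 + s + q_0 + q_\infty$, which is exactly \eqref{eq:count}. I would remark that the identical value results in the asymptotic case, where $\eta(\infty)$ (or $\eta(0)$) is identified with $\infty$: one vertex is lost and one edge of $\ell_\infty$ is lost, leaving $F$ unchanged. The bound $K \le 10$ is then immediate from \cref{lem:singular}, since in the worst case (scale-proximal, $\rho<0$) one has $s\le 1$, $q_0\le 2$ and $q_\infty\le 2$.

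For the second assertion I would invoke \cref{thm:main}. Since $E(z) = 2\abs{\wind_{\hat\eta}(z)} + \Ind{Q_{\texttt{h}}\cup Q_{\texttt{d}}}(z)$, a region carries one of the basic shapes \texttt{normal}, \texttt{inverse}, \texttt{humped}, \texttt{dipped} (all having $E \le 1$) if and only if the winding number vanishes on it; the specific basic shape is then forced by the quadrant, with $Q_n,Q_i$ producing \texttt{normal},\texttt{inverse} and $Q_h,Q_d$ producing \texttt{humped},\texttt{dipped}. The four faces incident to the vertex $\infty$, one lying in each quadrant and bounded by the rays $r_{nh},r_{hi},r_{id},r_{dn}$, consist of points far from $M$; these are not enclosed by $\hat\eta$ and therefore have winding number $0$. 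This exhibits all four basic shapes on four distinct regions.

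The main obstacle I anticipate is the converse inclusion: showing that no \emph{other} region has winding number zero, equivalently that every region not incident to $\infty$ is bounded and is enclosed by $\hat\eta$ with $\abs{\wind_{\hat\eta}}\ge 1$. I would handle this using the monotone-slope property of \cref{lem:envelope_slope} together with the topological bounds of \cref{lem:singular}: because $\eta$ turns monotonically and $\hat\eta$ has at most one self-intersection together with a controlled number of crossings of $\ell_0$ and $\ell_\infty$, its bounded complementary components are genuine interior lobes, each swept with a definite orientation, so that the winding number there cannot cancel back to $0$. Combined with the previous paragraph this leaves precisely the four faces at $\infty$ as the winding-zero regions, giving exactly four basic-shape regions and completing the proof.
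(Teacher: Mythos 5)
Your proposal is correct and follows essentially the same route as the paper: the count comes from Euler's formula on the configuration graph embedded in the Riemann sphere (you tally edges curve-by-curve where the paper halves the total vertex degree, but the totals $V = 4+s+q_0+q_\infty$ and $E = 7+2(s+q_0+q_\infty)$ agree, including the adjustment when $\eta(\infty)$ is asymptotic), and the four basic-shape regions are identified exactly as in the paper, namely as the winding-number-zero faces meeting $\infty$, one per quadrant, via Theorem~\ref{thm:main}. Your closing paragraph on ruling out further winding-zero regions is only a heuristic sketch, but the paper's own proof does not argue that converse either, so your write-up is not missing anything relative to it.
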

 \begin{rem}Together with Lemma~\ref{lem:singular} we obtain a bound of $5$ regions for the scale-proximal case with $\rho \ge 0$; a bound of $7$ regions for the scale-separated case; and a bound of $10$ regions for the scale-proximal case with $\rho < 0$. Note that in the latter cases, multiple (non-adjacent) regions may correspond to the same shape.
 \end{rem}

\subsection{Proofs of the main results}
    
    The proofs of the main results will follow from a series of Lemmas. The first auxilliary result is related to the `non-zero winding rule' used in computer graphics: 
    \begin{thm}[{\cite[Prop.~3.4.4]{roe2015winding}, see also \cite{fritzsche2009grundkurs}}]\label{thm:non-zero}
        Let $\gamma:[a,b]\rightarrow \RR^2 \setminus \set{z}$ be a closed, piecewise continuously differentiable path, and let $r$ be a ray from $z$ to $\infty$, transverse to $\gamma$. Then, there are only finitely many parameter values $t = t_1, \dotsc, t_k$, where $\gamma(t)$ intersects $r$ and the winding number of $\gamma$ around $z$ is equal to 
        \[\wind_\gamma(z) = \sum_{j=1}^k i_{t_k}(\gamma, r), \]
where 
\[i(\gamma, r) = \begin{cases}+1 \quad &\text{if $r$ crosses $\gamma$ from left to right,}\\-1 \quad &\text{if $r$ crosses $\gamma$ from right to left.}\end{cases}\]
    \end{thm}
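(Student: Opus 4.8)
The plan is to reduce the statement to the continuous angle lift of the position map and to count its passages through one fixed direction. After translating so that $z$ is the origin and rotating so that the ray $r$ is the positive horizontal half-axis, write $\gamma(t)$ in polar form and let $\theta:[a,b]\to\RR$ be the continuous lift of its argument guaranteed by \cref{rem:wind}, so that $\wind_\gamma(z)=\tfrac{1}{2\pi}(\theta(b)-\theta(a))$. Since $\gamma$ is closed, $\gamma(a)=\gamma(b)$ forces $\theta(b)-\theta(a)\in 2\pi\mathbb{Z}$, so the winding number is an integer; the task is to express it as the stated signed crossing count. The key normalization is that, in these coordinates, $\gamma(t)\in r$ holds exactly when $\theta(t)\in 2\pi\mathbb{Z}$ (the argument is $0$ modulo $2\pi$ and the point lies on the positive, not the negative, axis).

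First I would establish finiteness of the crossing set. Because $\gamma$ is piecewise continuously differentiable, there is a finite partition of $[a,b]$ on each piece of which $\gamma$ is $C^1$. Transversality of $r$ to $\gamma$ means that at every intersection the tangent $\gamma'(t)$ is not parallel to $r$; equivalently the vertical coordinate $\gamma_2$ has nonvanishing derivative there, so by the implicit function theorem each crossing is isolated. Isolated points of a compact interval are finite in number, and there are finitely many pieces, giving finitely many crossing times $t_1,\dots,t_k$ in total (the finitely many corner parameters are not on $r$ by the general-position hypothesis, so they may be ignored or handled one-sidedly).

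Then comes the core counting argument. On any subinterval between two consecutive crossing times the curve avoids $r$, so $\theta$ avoids $2\pi\mathbb{Z}$ there; by continuity $\theta$ is confined to a single open strip $(2\pi m,2\pi(m+1))$. Hence each $t_j$ is precisely an instant at which $\theta$ passes through a multiple of $2\pi$, and transversality forces this passage to be strictly monotone ($\gamma_2$ changes sign), so $\theta$ moves to the adjacent strip either upward or downward. Consequently $\tfrac{1}{2\pi}(\theta(b)-\theta(a))$ equals the number of upward passages minus the number of downward passages. A short orientation computation comparing the sign of $\theta'(t_j)$ with the direction in which $\gamma$ traverses $r$ shows that this signed passage at $t_j$ is exactly the index $i_{t_j}(\gamma,r)$ defined in the statement; summing the telescoping increments yields $\wind_\gamma(z)=\sum_{j} i_{t_j}(\gamma,r)$.

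The main obstacle is orientation bookkeeping rather than deep theory: one must verify that the stated left-to-right / right-to-left convention for $i_{t_j}(\gamma,r)$ genuinely coincides with the sign of $\theta'(t_j)$, a short but sign-sensitive check, and must treat the piecewise-$C^1$ junctions and the basepoint $\gamma(a)=\gamma(b)$ cleanly, ensuring none lies on $r$ or splitting the sum there. A more conceptual alternative is to reparametrize $[a,b]$ as a circle, read $\wind_\gamma(z)$ as the topological degree of the direction map $\phi(t)=(\gamma(t)-z)/|\gamma(t)-z|$ from \cref{rem:wind}, note that transversality makes the direction of $r$ a regular value of $\phi$, and invoke the standard formula expressing the degree at a regular value as the signed number of preimages; each preimage is a ray crossing with local sign $i_{t_j}(\gamma,r)$. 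Either route turns on the same elementary identity: net signed ray crossings equal total turning divided by $2\pi$.
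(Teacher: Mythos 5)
This statement is not proved in the paper at all: it is imported verbatim from \cite[Prop.~3.4.4]{roe2015winding}, and the only thing the authors add is the remark that the polygonal/differentiable case treated there extends immediately to piecewise differentiable paths. So there is no in-paper argument to compare yours against; what you have written is a self-contained proof of the quoted result, and it is essentially the standard one (the same angle-lift mechanism that underlies Definition~\ref{rem:wind} in the paper). Your argument is sound: normalizing $r$ to the positive horizontal axis, confining the lift $\theta$ to a single strip $(2\pi m,2\pi(m+1))$ between consecutive crossings, and reading the winding number as the net number of strip changes is exactly the right reduction, and your alternative phrasing via the degree at a regular value of the position map is equally valid. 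Two places deserve one more line each. First, in the finiteness step, ``each crossing is isolated, hence there are finitely many'' is not quite a complete deduction on its own --- an infinite set of isolated points in $[a,b]$ is possible if its accumulation point lies outside the set; here you must observe that an accumulation point $t_*$ of crossing times satisfies $\gamma_2(t_*)=0$ and $\gamma_1(t_*)\ge 0$, so either $\gamma(t_*)=z$ (excluded by hypothesis) or $t_*$ is itself a transversal crossing and hence isolated, a contradiction. Second, the left-to-right sign convention really should be checked explicitly once (upward crossing $\gamma_2'(t_j)>0$ means $\theta$ increases through $2\pi m$, and relative to $\gamma$'s direction of travel the outward ray then points from the left half-plane to the right half-plane, giving $+1$); you correctly flag this but defer it. With those two sentences added, your proof is complete and in fact handles the piecewise-$C^1$ case directly, which is precisely the small extension the paper waves through as ``immediate.''
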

    \begin{rem}The result is shown in \cite{roe2015winding} for polygonal and for differentiable paths; it is immediate that it also holds for piecewise differentiable paths.
    \end{rem}

 As a technical tool, we will extend the augmented envelope by attaching `whiskers' at certain points:   
    
\begin{defn}The \textbf{augmented envelope with whiskers}, $\hat \eta_\times$, is an oriented curve with basepoint $M$, derived from the augmented envelope $\hat \eta$ by attaching four `whiskers' (semi-infinite line segments, traversed from contact point to infinity and back). It is defined piecewise as follows:
\begin{itemize}
\item[(w1)] A whisker extending from $M$ along $\ell_0$ and back, not containing $\eta(0)$;
\item[(a)] The line segment from $M$ to $\eta(0)$;
\item[(w2)] A whisker extending from $\eta(0)$ along $\ell_0$ and back, not containing $M$;
\item[(b)] The envelope $\eta$;
\item[(w3)] A whisker extending from $\eta(\infty)$ along $\ell_\infty$ and back, not containing $M$;
\item[(c)] The line segment from $\eta(\infty)$ to $M$;
\item[(w4)] A whisker extending from $M$ along $\ell_\infty$ and back, not containing $\eta(\infty)$.
\end{itemize}
See also \cref{fig:aeww} for an illustration.
\end{defn}
\begin{rem}
Note that Whisker (w3) can be empty if $\eta(\infty)$ is an asymptotic point.
\end{rem}
    
    We introduce some further definitions and notation associated to the augmented envelope with whiskers:  Let $z \not \in \hat \eta_\times$ be a reference point and let $h$ be an oriented helper line trough $z$ in general position and oblique to $\cF$ (see \cref{sec:helper} for the relevant definitions). We enumerate the intersection points of $h$ and $\hat \eta_\times$ in the order in which they appear on $\hat \eta_\times$, when starting from $M$, and obtain a finite sequence of intersection points $E_i$; by convention we put intersection points with $\ell_0$ and with $\ell_\infty$ in brackets:
        \[(E_0, E_1), E_2, \dots, E_{n-1}, (E_n).\]
    To each point we associate the letter $\tL$ when the point is to the left of $z$ (in the sense of \cref{def:leftright}) and the letter $\tR$ when the point is to the right of $z$. This gives a sequence of letters $\tL/\tR$, such as 
    \[(\tR \tR) \tL \tR \tR \tR \tL \tR (\tR).\]
    The sequence can be decomposed into \emph{blocks} of contiguous letters (brackets are ignored). A block is an \emph{inner block} if it is bordered by other blocks on both sides. A block is an \emph{outer block} if it is not an inner block. In the sequence above, the blocks are $\tR \tR$, $\tL$, $\tR \tR \tR$, $\tL$, $\tR \tR$.  Of these, $\tL$, $\tR \tR \tR$ and $\tL$ are inner blocks. We record some simple properties of the $\tL/\tR$-sequence:

    \begin{lem}\label{lem:odd_even}Consider the $\tL/\tR$-sequence relative to a helper line $h$ in oblique and general position.
    
    \begin{enumerate}[(a)]
    \item The $\tL/\tR$-sequence is never empty; in particular it contains two non-empty brackets.
    \item Brackets contain letters of a single kind only.
    \item The length of the $\tL/\tR$-sequence is always even.
    \item All inner blocks of the $\tL/\tR$-sequence are of odd length.
    \item The role of $\tR$'s and $\tL$'s in the sequence can be switched by reversing the orientation of the helper line $h$.
    \end{enumerate}
    \end{lem}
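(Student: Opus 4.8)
The plan is to translate the entire combinatorial structure of the $\tL/\tR$-sequence into statements about the \emph{tracking function} $\tau$ of the helper line $h$ (see \cref{def:leftright} and \cref{lem:tracking}). The dictionary I would set up is this. Since $h$ is oblique, the denominator $b(x)v_1+c(x)v_2$ of $\tau$ never vanishes and has constant sign; hence $\tau$ extends continuously to $[0,\infty]$ with finite boundary values $\tau(\delta)=\lim_{x\to\delta}\tau(x)$, and the unique intersection of $h$ with $\ell_\delta$ is the point $z+\tau(\delta)v$ for $\delta\in\{0,\infty\}$. By \cref{def:leftright}, an intersection point $z+\tau(x)v$ carries the letter $\tR$ iff $\tau(x)>0$ and $\tL$ iff $\tau(x)<0$, so \emph{every letter is the sign of $\tau$ at the corresponding sample point}. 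By \cref{lem:tracking}, the intersections of $h$ with the envelope $\eta$ (the letters from part (b) of $\hat\eta_\times$) are exactly the critical points $x_1<\dots<x_m$ of $\tau$ in $(0,\infty)$; as $h$ is in general position it meets $\eta$ only in regular points, so $\tau''(x_i)\neq 0$ and each $x_i$ is a strict extremum, and since $\tau$ is strictly monotone between consecutive critical points these extrema alternate between maxima and minima. Finally, because $z\notin\eta\cup\ell_0\cup\ell_\infty$, all sampled values $\tau(0),\tau(x_1),\dots,\tau(x_m),\tau(\infty)$ are nonzero, so every letter is well-defined.

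Parts (a), (b), (e) then follow quickly. For (b), the whole $\ell_0$-portion (w1)--(w2) of $\hat\eta_\times$ lies on the single line $\ell_0$, which the oblique line $h$ meets in the unique point $z+\tau(0)v$; hence every letter in the opening bracket equals $\operatorname{sign}\tau(0)$, and symmetrically the closing bracket is constant equal to $\operatorname{sign}\tau(\infty)$. For (a), the whiskers are introduced precisely so that (w1), (a), (w2) together cover \emph{all} of $\ell_0$ (and (w3), (c), (w4) all of $\ell_\infty$); the point $z+\tau(0)v$ is therefore traversed at least once, making the opening bracket non-empty, and likewise for the closing bracket, so the sequence is non-empty. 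For (e), reversing the orientation of $h$ replaces $v$ by $-v$ and thus $\tau$ by $-\tau$; this leaves the intersection points and their order along $\hat\eta_\times$ untouched but flips every sign, interchanging $\tR$ and $\tL$.

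For (c) I would give a parity argument that is insensitive to asymptotic points. Using the continuous, closed position map $\phi$ and its angular lift $\theta$ from \cref{rem:wind}, an intersection of $\hat\eta_\times$ with $h$ occurs exactly where $\theta$ crosses the lattice $\theta_v+\pi\mathbb{Z}$ (with $\theta_v$ the argument of $v$), all such crossings being transversal by general position. Setting $g(t)=\sin\!\big(\theta(t)-\theta_v\big)$, the intersections are precisely the simple sign changes of $g$, and since $\theta(2)-\theta(-1)=2\pi\,\wind_{\hat\eta_\times}(z)\in 2\pi\mathbb{Z}$ we obtain $g(2)=g(-1)\neq 0$. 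A continuous function that returns to its initial nonzero value has an even number of sign changes, so the number of intersection points, i.e. the length of the sequence, is even.

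The main obstacle is (d), where the monotonicity of $\tau$ does the real work. An inner block is bordered on both sides, hence separated from its neighbours by genuine letter changes; because the opening and closing brackets lie inside the (outer) first and last blocks, every inner block is a maximal run of envelope letters $\operatorname{sign}\tau(x_i),\dots,\operatorname{sign}\tau(x_j)$. Take such a run, say of $\tR$'s, so $\tau(x_i),\dots,\tau(x_j)>0$ while the sample value just before the run and just after it are negative. On the monotone piece joining the preceding sample point to $x_i$, $\tau$ passes from negative to positive and is therefore increasing, which forces $x_i$ to be a local \emph{maximum}; on the monotone piece leaving $x_j$, $\tau$ passes from positive to negative and is decreasing, forcing $x_j$ to be a local \emph{maximum} too. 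Since the critical points alternate between maxima and minima, a run beginning and ending at a maximum contains an odd number of them, so the block has odd length; the $\tL$-case is identical with maxima replaced by minima. I expect the only delicate bookkeeping to be verifying that the whisker-induced doubling of bracketed letters can only enlarge the two outer blocks and never an inner one, and that all sampled signs are defined — both guaranteed by $z\notin\eta\cup\ell_0\cup\ell_\infty$.
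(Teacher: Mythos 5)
Your proposal is correct and follows essentially the same route as the paper: parts (a), (b), (e) are handled identically, part (c) rests on the same fact that the closed curve $\hat\eta_\times$ meets $h$ transversally an even number of times (you merely make this explicit via the angular lift), and part (d) is the paper's argument with the alternating sign of $\tau'$ between consecutive critical points rephrased as alternation of maxima and minima of $\tau$. No gaps.
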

    \begin{proof}The helper line $h$ intersects both $\ell_0$ and $\ell_\infty$; Since the augmented envelope with whiskers $\hat \eta_\times$ transverses all points of $\ell_0$ and $\ell_\infty$ it must also intersect $h$ at least twice. Since these intersection points are with $\ell_0$ and $\ell_\infty$, they appear inside of brackets, showing (a). All intersection points in the first bracket correspond to the (unique) intersection of $h$ with $\ell_0$ and must therefore be on the same side ($\tL$ or $\tR$) of $z$. The same applies to the second bracket, showing (b). The augmented envelope with whiskers $\hat \eta_\times$ is a closed curve (up to asymptotic points, cf. \cref{rem:asymp}) and $h$ is in general position, i.e. all intersections are transverse and not tangential. Therefore the total number of intersections is even, showing (c). Consider an inner block $b$ of \tR's, which, by definition, must be bordered by blocks of \tL's. By (b) the inner block contains no intersections with $\ell_0$ or $\ell_\infty$, i.e. all intersection point in $b$ are intersections of the envelope $\eta$ with $h$ happening at distinct times $x_1 < x_2 < \dotsm < x_{|b|}$ in $(0,\infty)$. Following the tracking function $\tau$ along the helper line $h$ we deduce from \cref{lem:tracking} that $\tau'(x_i) = 0$ for each $i \in \set{1, \dotsc, |b|}$, but alternates between strictly positive and strictly negative sign between the points. Since the block $b$ is entered from the left and exited to the left, we have $\tau'(x_1 - \epsilon) > 0$ and $\tau'(x_{|b|} + \epsilon) < 0$ for $\epsilon > 0$ small enough. Applying the intermediate value theorem to the continuous function $\tau'$ it follows that the length $|b|$ of the block is odd, showing (d). Assertion (e) is trivial.
    \end{proof}
    
    The next Lemma connects the $\tL/\tR$-sequence to the extremal points of the forward or yield curve:
    \begin{lem}\label{lem:extrema}
    Let $E(z)$ be the number of local extrema of the forward curve $[0,\infty) \ni x \mapsto f(x,Z_t)$ or the yield curve $[0,\infty) \ni x \mapsto y(x,Z_t)$ in the two-dimensional Vasicek model, conditional on $Z_t = z \in \RR^2 \setminus \hat \eta$. Consider the $\tL/\tR$-sequence relative to a helper line $h$ through $z$ in oblique and general position. Then the number of blocks in the sequence is equal to $E(z) + 1$.
    \end{lem}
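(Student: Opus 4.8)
The plan is to express both sides in terms of the tracking function $\tau$ of \cref{def:leftright} and to reduce the combinatorics of the $\tL/\tR$-sequence to the sign pattern of $\tau$ sampled at an ordered list of parameters. First I would set up the dictionary between extrema and signs of $\tau$. The $x$-derivative of the forward (or yield) curve is exactly $F(x,z)=a(x)+b(x)z_1+c(x)z_2$, so $E(z)$ is the number of interior sign changes of $x\mapsto F(x,z)$ on $(0,\infty)$. Since $h$ is oblique, the denominator $b(x)v_1+c(x)v_2$ of $\tau$ never vanishes and keeps a constant sign; hence $\tau$ and $F(\cdot,z)$ have identical sign changes, and $E(z)$ equals the number of sign changes of $\tau$ on $(0,\infty)$. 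Because the construction requires $z\notin\hat\eta_\times=\eta\cup\ell_0\cup\ell_\infty$, we moreover have $\tau(0)\neq0$, $\tau(\infty)\neq0$, and $\tau(x)\neq0$ at every $x$ with $\eta(x)\in h$ (the last because $\tau(x)=0$ would force $z=\eta(x)\in\eta$).

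Second, I would identify the intersection points of $h$ with $\hat\eta_\times$ and their labels. By \cref{lem:tracking}, $\tau'(x)=0$ if and only if $\eta(x)=z+\tau(x)v\in h$, so the intersections of $h$ with the envelope piece are exactly the critical points $0<x_1<\dots<x_m<\infty$ of $\tau$, listed in traversal order. These are finite in number: the envelope has finitely many cusps, and on each cusp-free arc the function $g(x)=\langle\eta(x)-z,(-v_2,v_1)\rangle$ has derivative $g'(x)=\frac{W(a,b,c)(x)}{W(b,c)(x)^2}\,(b(x)v_1+c(x)v_2)$ of constant sign (by \cref{lem:envelope_slope} and obliqueness), hence at most one zero per arc. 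The intersection $\eta(x_i)=z+\tau(x_i)v$ receives the label $\tR$ when $\tau(x_i)>0$ and $\tL$ when $\tau(x_i)<0$, by \cref{def:leftright}; the two bracketed points arise from $h\cap\ell_0=z+\tau(0)v$ and $h\cap\ell_\infty=z+\tau(\infty)v$, carrying the labels $\mathrm{sgn}\,\tau(0)$ and $\mathrm{sgn}\,\tau(\infty)$ (each possibly recorded twice by a whisker, but always with the same letter, cf.\ \cref{lem:odd_even}). After merging contiguous equal letters, the block structure of the full sequence therefore coincides with the run structure of
\[ s=\big(\mathrm{sgn}\,\tau(0),\,\mathrm{sgn}\,\tau(x_1),\,\dots,\,\mathrm{sgn}\,\tau(x_m),\,\mathrm{sgn}\,\tau(\infty)\big), \]
so the number of blocks equals $1$ plus the number of sign changes in $s$.

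Third, I would check that the number of sign changes in $s$ equals the number of sign changes of $\tau$ on $(0,\infty)$. Since $x_1,\dots,x_m$ are all the zeros of the continuous function $\tau'$, the function $\tau$ is strictly monotone on each of $(0,x_1),(x_1,x_2),\dots,(x_m,\infty)$; on such an interval $\tau$ has at most one zero, and exactly one precisely when its endpoint signs disagree. As no zero of $\tau$ can sit at a sample point, every zero lies in the interior of one of these intervals and is a transversal crossing. Thus the zeros of $\tau$ on $(0,\infty)$ are in bijection with the adjacent sign disagreements of $s$, which yields (sign changes of $\tau$) $=$ (sign changes in $s$). Combining the three steps gives number of blocks $=1+(\text{sign changes of }\tau)=E(z)+1$.

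The hard part will not be the analysis but the bookkeeping in the second step: one must confirm that the bracket convention and the possible double-recording of the $\ell_0$ and $\ell_\infty$ intersection points by the whiskers never change the block count, and that traversing $\hat\eta_\times$ from $M$ visits the envelope crossings in exactly the order $x_1<\dots<x_m$, flanked by the two boundary samples at $x=0$ and $x=\infty$. Once this reduction to the sign sequence $s$ is secured, the remaining content is the elementary monotone-interval count. The same argument applies verbatim to the yield curve, since it uses only the structural properties collected in \cref{lem:vasicek}.
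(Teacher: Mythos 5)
Your proposal is correct and follows essentially the same route as the paper: identify local extrema with the (transversal) zeros of the tracking function $\tau$, use \cref{lem:tracking} to match the envelope crossings with the critical points of $\tau$, and count zeros between sign-constant blocks via monotonicity/the intermediate value theorem. Your write-up is somewhat more explicit than the paper's (notably the finiteness of the intersections and the careful matching of blocks to the sign runs of $\tau$ at $0, x_1,\dotsc,x_m,\infty$), but the underlying argument is the same.
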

    \begin{proof}Consider the case of the forward curve. The equation
    \[\partial_x f(x,z) = F(x,z) = a(x) + b(x)z_1 + c(x) z_2 = 0,\]
    defines, for each $x \in (0,\infty)$, the line $\ell_x$ in the state space $Z = \RR^2$. Thus, the forward curve $x \mapsto f(x,z)$ has a critical point at some $\xi \in (0,\infty)$ if and only if $z \in \ell_\xi$. Putting a helper line in oblique and general position through $z$, this means that the number of critical points of $x \mapsto f(x,z)$ is equal to the zeroes of the tracking function $\tau$. Let $\xi$ be such a zero of $\tau$, i.e. $\tau(\xi) = 0$. Since $z \not \in \hat \eta$, we have that $\tau'(\xi) \neq 0$, by \cref{lem:tracking}. Therefore, each critical point of $x \mapsto f(x,z)$ is also an extremal point and each zero of $\tau$ is a transverse (non-tangential) zero.  Finally, note that each block of $\tR$'s corresponds to a non-empty interval $I_+ \subset [0,\infty)$ where $\tau(x) > 0$, and each block of $\tL$'s to a non-empty interval $I_- \subset [0,\infty)$ where $\tau(x) < 0$. By the intermediate value theorem, the number of zeroes of $\tau$ is equal to the number of blocks, reduced by one.
    \end{proof}
    
    \begin{lem}\label{lem:block-parity-count}
        Let $S$ be a $\tL/\tR$-sequence relative to a point $z \in \RR^2 \setminus \hat \eta$. The following is true for the winding number of $\hat \eta$ around $z$:
        \begin{enumerate}
            \item The absolute value of the winding number of $\hat \eta$ around $z$ is equal to the number of odd-length blocks of \tL's.
            \item The absolute value of the winding number of $\hat \eta$ around $z$ is equal to the number of odd-length blocks of \tR's.
        \end{enumerate}
    \end{lem}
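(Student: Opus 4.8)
The plan is to compute the winding number through the \emph{non-zero winding rule} (\cref{thm:non-zero}), using the helper line $h$ itself to supply the ray. First I would pass from $\hat\eta$ to the augmented envelope with whiskers $\hat\eta_\times$: since each whisker is traversed out to infinity and back along a fixed ray, the position map of \cref{rem:wind} retraces its path there and contributes no net angle, so $\wind_{\hat\eta_\times}(z)=\wind_{\hat\eta}(z)$. Orienting $h$ through $z$ with direction vector $v$ as in \cref{def:leftright}, I take the ray $r_+=\set{z+\tau v:\tau>0}$; by obliqueness and general position $r_+$ meets $\hat\eta_\times$ transversally, and its intersection points are exactly those labelled $\tR$. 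Hence \cref{thm:non-zero} gives $\wind_{\hat\eta}(z)=\sum_{E_k=\tR} i_k$, where $i_k=\pm1$ is the signed crossing at the $k$-th intersection point.

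Next I would pin down the crossing signs. Writing $D(t)=\det\bigl(v,\hat\eta_\times(t)-z\bigr)$ for the (scaled) signed distance of the curve from $h$, every intersection point is a transverse zero of $D$, so $D$ changes sign there and $i_k=\operatorname{sign}\bigl(D'(t_k)\bigr)$ up to one global sign fixed by the orientation convention. Because $D$ has constant sign on each interval between consecutive transverse zeros, the quantities $\epsilon_k:=\operatorname{sign}\bigl(D'(t_k)\bigr)$ \emph{alternate along the entire sequence} $E_0,E_1,\dots,E_n$, whether a given crossing lies on the envelope or on $\ell_0$ or $\ell_\infty$. This global alternation, which needs only general position and not the finer envelope geometry, is the engine of the argument and lets me write $\wind_{\hat\eta}(z)=\pm\sum_{E_k=\tR}\epsilon_k$.

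It then remains a purely combinatorial statement about a $\pm$-alternating sequence. A maximal block of $\tR$'s of length $m$ starting at position $s$ contributes $\epsilon_s$ when $m$ is odd and $0$ when $m$ is even, so $\sum_{E_k=\tR}\epsilon_k=\sum_{\text{odd } \tR\text{-blocks}}\epsilon_s$. To conclude that the absolute value of this sum equals the number of odd $\tR$-blocks, I must show that all odd $\tR$-blocks begin at positions of the same parity, so that their contributions $\epsilon_s=\pm1$ cannot cancel. Here \cref{lem:odd_even} enters: since every inner block has odd length and the total length is even, the partial block-length sums $s_j-1=\sum_{i<j}m_i$ have controlled parity, and a short computation shows the start-parities of all odd $\tR$-blocks agree while even outer blocks contribute $0$ and are harmless. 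Running the identical argument with the opposite ray $r_-$, equivalently reversing the orientation of $h$ as in \cref{lem:odd_even}(e), expresses the same winding number with the $\tR$ and $\tL$ roles interchanged and yields the $\tL$-block count; claims (1) and (2) then coincide and equality of the two block counts follows a posteriori.

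The main obstacle is the boundary bookkeeping in this last step. The $\ell_0$- and $\ell_\infty$-crossings all sit inside the two outer blocks, whose lengths are \emph{not} constrained by \cref{lem:odd_even}, and it is precisely the doubled whisker-crossings (each an out-and-back pair of opposite sign) that repair the parity so that odd outer blocks line up with the inner ones. Making this parity computation airtight, and verifying that it is unaffected when $\eta(0)$ or $\eta(\infty)$ is only an asymptotic contact point so that a whisker degenerates and $\hat\eta_\times$ is not closed in the classical sense, is the delicate part; here I would rely on the fact, established after \cref{defn:augmented} via \cref{lem:contact}, that the position map stays continuous and the winding number well-defined even in the asymptotic case.
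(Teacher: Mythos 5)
Your proposal is correct and follows essentially the same route as the paper's proof: both apply the non-zero winding rule (\cref{thm:non-zero}) along the helper line, use continuity and general position to obtain a globally alternating sequence of crossing signs, and then reduce to the combinatorial fact that odd-length blocks contribute $\pm 1$ without cancellation because, by \cref{lem:odd_even}, every inner block has odd length, so even-length blocks are outer and contribute zero. The paper packages your start-position parity computation slightly differently (it deletes the even-length outer blocks and observes that the surviving odd blocks alternate in both letter and leading sign), but the substance is identical.
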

    \begin{proof}
The helper line $h$ cuts $Z = \RR^2$ into two half-spaces and we can name them $h^+$ and $h^-$ where $h^-$ shall lie to the right of $h$, meaning to the right  when traversing $h$ along its direction vector $v$. We can associate to $\hat \eta_\times$ a sequence of numbers $\pmOne$, where $\pOne$ indicates that $\hat \eta$ crosses $h$ from the right (i.e. from $h^-$ to $h^+$) and $\mOne$ indicates the opposite direction. Note that $\hat \eta_\times$ must cross $h$ (and cannot just touch $h$) at each intersection point, since $h$ is assumed to be in general position. Since $\hat \eta_\times$ is continuous, it must cross $h$ from alternating sides, that is the sequence of $\pmOne$s must be alternating. We are left with two sequences of equal length, one consisting of $\tL/\tR$, the other of alternating $\pmOne$. In the case of \cref{fig:crossings}, these sequences are
        \begin{center}
        \begin{tabular}{cccc}
             $\tR$ & $\tL$ & $\tR$ & $\tR$ \\
             $\pOne$ & $\mOne$ & $\pOne$ & $\mOne$
        \end{tabular}
        \end{center}
        \begin{figure}
            \centering
            \includegraphics[width=\textwidth]{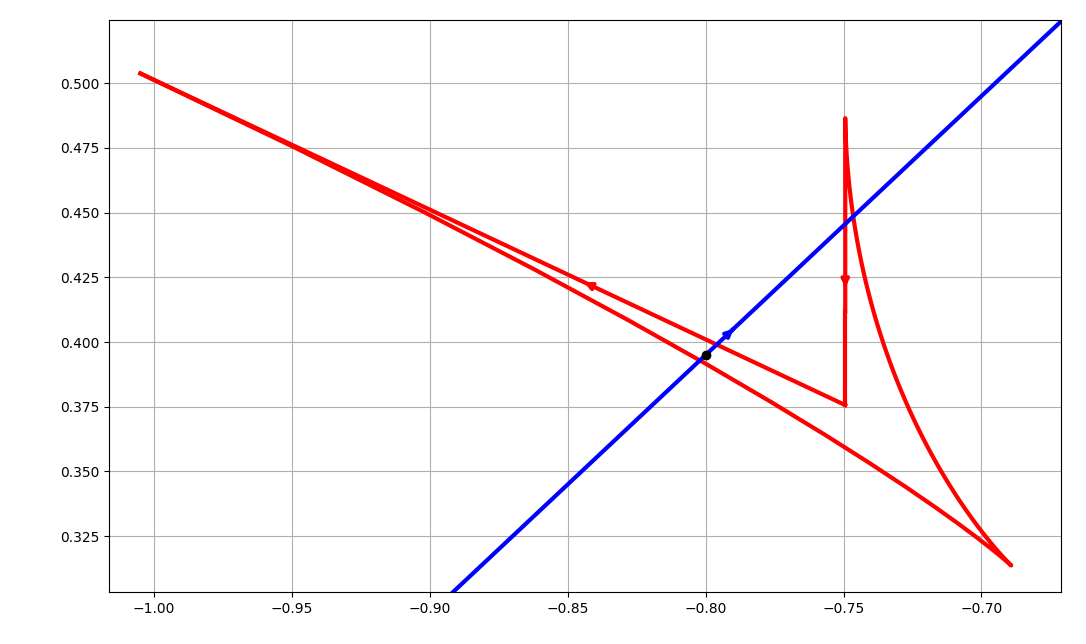}
            \caption{The augmented envelope crossing a line}
            \label{fig:crossings}
        \end{figure}
        The \emph{contribution} of a block in the $\tL/\tR$-sequence is the sum of the associated $\pmOne$'s. An odd-length block can contribute $\pOne$ or $\mOne$, and even-length block contributes $0$. Let $w$ be the winding number of $\hat \eta_\times$ around $z$. Shooting a ray from $z$ along $h$ to the right or left we see that by \cref{thm:non-zero} the total contribution of $\tR$-blocks must be equal to $w$, and the total contribution of $\tL$-blocks must be equal to $-w$ or vice versa. To complete the proof, we make the following observations: Even-length blocks contribute $0$ and can be dropped from the sequence. By \cref{lem:odd_even} all inner blocks are odd-length. Thus, dropping even-length blocks will not affect the length or structure of any other blocks. We are left with a sequence of alternating odd-length $\tL$- and $\tR$-blocks. The contribution of any $\tR$-block must be of the same sign; the same is true for $\tL$-blocks with regards to the opposite sign. It follows that the total contribution of $\tR$-blocks is equal to $+w$ or $-w$ and (a) is shown. By symmetry, the same is true for $\tL$-blocks, which shows (b).
  \end{proof}
    
 \begin{lem}\label{lem:forbidden_crossing}
 Let $h$ be a horizontal helper line. If the augmented envelope $\hat \eta$ crosses $h$ from below in a point $z$, then the next intersection with $h$ must be to the left of $z$. Conversely, if it crosses from above, then the next intersection must be to the right.
 \end{lem}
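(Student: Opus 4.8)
The plan is to translate the geometric claim into a monotonicity statement about the \emph{slope} of $\hat\eta$ and then into a one-line integral estimate. Orient the horizontal helper line $h=\{z_2=K\}$ by the direction $v=(1,0)$, so that ``to the left of $z$'' means strictly smaller first coordinate. A crossing ``from below'' is one at which $\hat\eta$ passes from $\{z_2<K\}$ to $\{z_2>K\}$, i.e.\ $z_2$ increases along $\hat\eta$; ``from above'' is the reverse. Since $h$ is in general position, two consecutive crossings are of opposite type and the arc of $\hat\eta$ between them stays on one side of $h$.

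First I would record the slope data. On the envelope part $\hat\eta$ meets $\ell_x$ tangentially at $\eta(x)$ (remark after \cref{lem:envelope_slope}), so its tangent there has slope $s(x)=-b(x)/c(x)$; on the two augmenting segments the slope equals that of $\ell_0$ resp.\ $\ell_\infty$, namely $s(0)$ resp.\ $s(\infty)$. By \cref{lem:vasicek} this slope is strictly negative throughout (vertical only in the limit $s(\infty)=-\infty$ of the forward case), so $\dot z_1$ and $\dot z_2$ always have opposite signs: a crossing from below forces motion up-and-to-the-left, a crossing from above motion down-and-to-the-right. Moreover, by the monotonicity of $s$ in \cref{lem:envelope_slope}, the reciprocal slope $\mu:=1/s=-c/b\le 0$ is monotone (increasing) along all of $\hat\eta$, with the single exception of the basepoint $M$, where it jumps \emph{down} from $1/s(\infty)$ to $1/s(0)$; note that $\mu$ is continuous across the two contact-point cusps $\eta(0),\eta(\infty)$, since the slope of $\hat\eta$ varies continuously there.

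The core estimate treats the arc $\Gamma$ of $\hat\eta$ from the upward crossing $z$ to the next crossing $z'$, which lies above $h$. Writing $dz_1=\mu\,dz_2$ (valid everywhere, with $\mu=0$ on the vertical segment) and integrating by parts, using $z_2=K$ at both endpoints of $\Gamma$, gives
\[
z_1'-z_1=\int_\Gamma \mu\,dz_2=\int_\Gamma (K-z_2)\,d\mu .
\]
Whenever $\Gamma$ avoids $M$, the measure $d\mu$ is non-negative while $K-z_2\le 0$ on $\Gamma$ (with strict inequality on the interior of any envelope subarc, where $d\mu>0$), so $z_1'-z_1<0$; that is, $z'$ lies strictly to the left of $z$. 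The converse (crossing from above) follows symmetrically: now $\Gamma$ lies below $h$, $K-z_2\ge 0$, and the inequality reverses to $z_1'-z_1>0$.

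The hard part is precisely the case $\Gamma\ni M$, where the down-jump of $\mu$ at $M$ contributes a term of the ``wrong'' (rightward) sign. This occurs only when $M$ lies above $h$, the entering crossing $z$ sits on the $\ell_\infty$-segment (so $z_1$ equals the first coordinate of $\ell_\infty\cap h$), and the exiting crossing $z'$ lies on the envelope. Here I would drop the integral and argue geometrically: by \cref{lem:contact} the envelope meets $\ell_\infty$ only at its contact point $\eta(\infty)$, which it approaches from one side in the regimes described by \cref{lem:singular}; hence every envelope point, in particular $z'$, lies strictly on the far side of $\ell_\infty$ from $z$, placing $z'$ to the left of $z$ directly. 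The mirror argument at $\ell_0$ covers the reflected configuration. I expect this seam analysis at $M$ — reconciling the single jump of the slope with the enclosing (counterclockwise) orientation of $\hat\eta$ — to be the main obstacle, with everything else reducing to the monotone-slope integral above.
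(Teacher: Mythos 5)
Your core computation is correct and takes a genuinely different route from the paper's. The paper argues \emph{locally} at the crossing point: normalizing $z$ to the origin, it compares the secant slope $\eta_2(x^*)/\eta_1(x^*)$ of the envelope just after the crossing with the tangent slope $\eta_2'(x^*)/\eta_1'(x^*)$ via Cauchy's mean value theorem and the strictly decreasing slope from \cref{lem:envelope_slope}, concluding that the tracking function $\tau$ turns negative immediately after the crossing; since by \cref{lem:tracking} the next intersection is the next critical point of $\tau$, it must lie to the left. Your argument is \emph{global} along the whole arc $\Gamma$ between consecutive crossings: the identity $z_1'-z_1=\int_\Gamma(K-z_2)\,d\mu$ with the reciprocal slope $\mu=1/s$ increasing along $\hat\eta$ (away from $M$) is a clean repackaging of the same monotone-slope input, and for arcs avoiding $M$ and the asymptotic point it yields the strict inequality at once (strictness because $\Gamma$ must contain a nontrivial envelope subarc in its interior -- a subarc of $\ell_0$ or $\ell_\infty$ alone cannot meet $h$ twice -- and there $d\mu>0$ while $K-z_2<0$). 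Both proofs ultimately rest on \cref{lem:envelope_slope} and \cref{lem:vasicek}; yours trades the pointwise mean-value estimate for an integral identity over the arc.

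Two points need repair. First, your treatment of the arc through $M$ rests on the claim that the envelope meets $\ell_\infty$ only at its contact point $\eta(\infty)$; this is not what \cref{lem:contact} says and is false in general, since by \cref{lem:singular} the envelope can cross $\ell_\infty$ up to $N\le 2$ times, so envelope points need not lie on one fixed side of $\ell_\infty$ -- and moreover the exit crossing $z'$ could sit on the $\ell_0$-segment rather than on the envelope, which your case analysis omits. This case is, however, vacuous for the purpose the lemma serves: in the proof of \cref{thm:main} the intersections are ordered along the curve starting from the basepoint $M$, so ``next'' never wraps around $M$ and the down-jump of $\mu$ never enters your integral. Second, in the scale-separated forward case the arc containing the asymptotic point $\eta(\infty)$ is unbounded, and your integration by parts then requires the boundary term $\mu\cdot(z_2-K)$ to vanish at infinity and $\int_\Gamma\abs{K-z_2}\,d\mu$ to converge. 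This does hold ($\mu$ decays like $e^{-(\lambda_2-\lambda_1)x}$ while $\eta_2$ grows only like $e^{(\lambda_2-2\lambda_1)x}$, so the product tends to zero), but it is not automatic and should be stated; the paper's local argument sidesteps the issue entirely because it never integrates beyond a bounded neighbourhood of the crossing.
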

 \begin{proof}By \cref{lem:vasicek} $h$ is in oblique position. Since the statement of the Lemma is unaffected by translation of the $(z_1, z_2)$-plane, we may assume without loss of generality that $z = 0$. Moreover, we may assume that the tracking function $\tau(x)$ of $h$ is given relative to $z = 0$, i.e., $h \cap \ell_x = \tau(x) \begin{psmallmatrix}1 \\ 0 \end{psmallmatrix}$ for all $x \in (0,\infty)$. Let $x_* \in [-1,\infty)$ be the time of the  first intersection, where we use $x_* \in [-1,0]$ to parameterize the `augmented part' $\hat \eta \cap \ell_0$. Clearly, we have $\eta(x_*) = z = 0$.  Let $x^* > \max(0,x_*)$ be small enough, such that no further intersection of $\hat \eta$ with $h$ takes place in $(x_*, x^*)$ and such that $\eta(x^*)$ is not a point of regression. In addition to $z = 0$, we consider the following two points in the $(z_1, z_2)$-plane:
 \[p := \eta(x^*), \qquad q := \tau(x^*) \begin{psmallmatrix}1 \\ 0 \end{psmallmatrix} = h \cap \ell_{x^*}.\]
 Both $p$ and $q$ are located on $\ell_{x^*}$. Moreover, $\eta$ is tangent to $\ell_{x^*}$ in $p$, such that the slope of $\vec{pq}$ is equal to the slope of the tangent vector $\eta'(x^*)$. By \cref{lem:vasicek} the slope of $\eta'$ is decreasing. Together with Cauchy's mean-value theorem, we obtain
 \[\frac{\eta_2(x^*)}{\eta_1(x^*)} = \frac{\eta_2(x^*) - \eta_2(x_*)}{\eta_1(x^*) - \eta_1(x_*)} > \frac{\eta'_2(x^*)}{\eta'_1(x^*)},\]
 where the right hand side is equal to the slope of $\vec{pq}$. Taking reciprocals, we obtain
 \[\frac{\eta_1(x^*)}{\eta_2(x^*)} < \frac{\eta'_1(x^*)}{\eta'_2(x^*)} = \frac{p_1 - q_1}{p_2 - q_2} = \frac{\eta_1(x^*) - \tau(x^*)}{\eta_2(x^*)}.\]
 We note that $\eta_2(x^*) > 0$, since $p$ is located above $h$. Rearranging yields $\tau(x^*) < 0$. Together with $\tau(x_*) = 0$ this shows that the next extremal point of $\tau$ (if it exist) must take place to the left of $z=0$, i.e., for the first $\xi > x_*$ with $\tau'(\xi) = 0$ it must hold that $\tau(\xi) < 0$. By \cref{lem:tracking}, the extremal points of $\tau$ are precisely the intersection points of $h$ and $\hat \eta$, showing the claim in the first case. The second case (crossing from above) can be treated analogously.   \end{proof}

    \begin{proof}[Proof of \cref{thm:main}]
    Let $z \not \in \hat \eta_\times$. By \cref{lem:vasicek}, the horizontal line $h$ through $z$ is oblique. Typically, it will also be in general position. If it is not, then an arbitrarily small perturbation of its slope will put it in general position, since the regression points of $\hat \eta_\times$ (which have to be avoided) are isolated. Thus, we assume that the horizontal $h$ is in general position and ignore the minor adaptations that have to be made if it is not. Consider the $\tL/\tR$-sequence $S$ relative to $h$, which is non-empty and of even length $L$ by \cref{lem:odd_even}. Let $K$ be the number of blocks in $S$. 
    \begin{itemize}
    \item If $K = 1$ then $x \mapsto f(x,z)$ has no extremal points, i.e. $E(z) = 0$, by \cref{lem:extrema}. The single block of $S$ has even length, so $\wind_{\hat \eta}(z) = 0$ by \cref{lem:block-parity-count}. Moreover, the first and the last letter of $S$ are the same; therefore $h$ intersects $\ell_0$ and $\ell_\infty$ on the same side of $z$. We conclude that $z \in Q_n \cup Q_i$, and \eqref{eq:main_formula} holds true. 
    \item If $K = 2$ then $x \mapsto f(x,z)$ has a single extremal point, i.e. $E(z) = 1$, by \cref{lem:extrema}. The first and the last letter of $S$ are different; therefore $h$ intersects $\ell_0$ and $\ell_\infty$ on different sides of $z$ and $z \in Q_d \cup Q_h$. Since the total length of $S$ is even, the lengths of the two blocks of $S$ are even-even or odd-odd. In the first case $\wind_{\hat \eta}(z) = 0$ by \cref{lem:block-parity-count}, in the second case $\wind_{\hat \eta}(z) = 1$. Thus, \eqref{eq:main_formula} holds true if the odd-odd case can be excluded.
    \end{itemize}
    If $K > 2$ then the sequence $S$ contains $K - 2 > 0$ inner blocks. Again, we have to distinguish odd and even $K$:
    \begin{itemize}
    \item If $K = 2J + 1$ with $J \in \NN$ then $E(z) = 2J$, by \cref{lem:extrema}. All $2J - 1$ inner blocks of $S$ are of odd length (see \cref{lem:odd_even}) and the total length of $S$ is even. Therefore the two outer blocks have lengths odd-even or even-odd. In any case the total number of odd-length blocks is $2J$ and $\wind_{\hat \eta}(z) = J$ by \cref{lem:block-parity-count}. Moreover, the first and the last letter of $S$ are the same; therefore $z \in Q_n \cup Q_i$, and \eqref{eq:main_formula} holds true. 
    \item If $K = 2J + 2$ with $J \in \NN$ then $E(z) = 2J + 1$, by \cref{lem:extrema}. The number of blocks is even, therefore the first and the last letter of $S$ are different and $z \in Q_d \cup Q_h$. All $2J$ inner blocks of $S$ are of odd length (see \cref{lem:odd_even}) and the total length of $S$ is even. Therefore the two outer blocks have lengths even-even or odd-odd. In the first case $\wind_{\hat \eta}(z) = J$ by \cref{lem:block-parity-count}, in the second case $\wind_{\hat \eta}(z) = J+1$. Thus, \eqref{eq:main_formula} holds true if the odd-odd case can be excluded.
    \end{itemize}
    To complete the proof, it remains to show that there can not be two outer blocks in $S$ which are both of odd length. By the parity arguments given above, this can only happen if $z \in Q_d \cup Q_h$. Assume that $z \in Q_d$; the case $z \in Q_h$ can be treated symmetrically. We orient the helper line from left to right (in the $(z_1, z_2)$-plane) such that the sequence $S$ starts with $\tL$ (corresponding to the intersection with $\ell_0$). As in the proof of \cref{lem:block-parity-count} we associate an alternating sequence of $\mOne/\pOne$ to the sequence $S$, corresponding to whether $h$ is crossed from the left or from the right. Since the base point $M$ of $\hat \eta_\times$ lies below $h$, this sequence has to start with $\pOne$, corresponding to a crossing from the right. Since the length of the first block is odd, it must end with the pattern
        \begin{center}
        \begin{tabular}{cccc}
$\dotsm$            $\tL$ & $\tR$ $\dotsm$\\
$\dotsm$              $\pOne$ & $\mOne$ $\dotsm$.
        \end{tabular}
    \end{center}
Since intersection points with whiskers come in pairs, both of the intersections represented by the above pattern correspond to intersections of the augmented envelope (without whiskers) $\hat \eta$ and $h$. The first intersection crosses $h$ from below and the second intersection takes place to the right of the first. This contradicts \cref{lem:forbidden_crossing}. Thus, the odd-odd case is ruled out, completing the proof. 
    \end{proof}

We complete the section with the proof of our second main result and its corollary.
    
 \begin{proof}[Proof of \cref{thm:split}]
 Consider the configuration graph $\cG$ (see \cref{def:configuration}) induced by $\eta, \ell_0$ and $\ell_\infty$. It has finitely many vertices of finite degree; therefore also the number of edges and the number of faces, i.e., of regions $R_1, \dotsc, R_K$ is finite. The winding number is constant on every connected component of $\RR^2 \setminus \hat \eta$, see e.g. \cite[Ch.~5.7]{do2016differential}. These connected components, possibly split by $\ell_0$ or $\ell_\infty$, are exactly the regions $R_1, \dotsc, R_K$. Together with \cref{thm:main} it follows that the shape of the forward- or yield-curve is an invariant of each such region $R_i$. If two distinct regions $R_i$ and $R_j$ share a boundary containing a one-dimensional submanifold, then this submanifold must be part of $\eta$, of $\ell_0$, or of $\ell_\infty$. In any case, \eqref{eq:main_formula}, together with \cref{thm:non-zero} shows that the shapes in $R_i$ and $R_j$ must be distinct. 
 \end{proof}
 \begin{proof}[Proof of \cref{cor:configuration}]
 Consider the configuration graph $\cG$ of $\eta, \ell_0$ and $\ell_\infty$,  embedded into the Riemann sphere. Assume first that $\eta(\infty)$ is non-asymptotic. The total number $v$ of vertices in $\cG$ is
\[v = 4 + s + q_0 + q_\infty,\]
where the four vertices that are always present are $M, \eta(0), \eta(\infty)$ and $\infty$. The degree of all vertices is $4$; except for $\eta(0)$ and $\eta(\infty)$, which have degree 3. Thus, the number $e$ of edges of $\cG$, which is half the total degree, is 
\[e = 7+ 2 (s + q_0 + q_\infty).\]
By the Euler-Poincar\'e formula (see e.g. \cite[Thm.~1.3.10]{lando2004graphs}) $v - e + K = 2$, where $2$ is the Euler characteristic of the sphere and $K$ the number of faces.  We conclude that
\[K = 2 + e - v = 5 + s + q_0 + q_\infty,\]
showing \cref{eq:count}. Finally, the winding number of $\hat \eta$ is zero a neighborhood  of $\infty$ in $\hat{\mathcal{Z}}$. This neighborhood intersects all quadrants $Q_n, Q_d, Q_i$ and $Q_h$. Thus, by \cref{thm:main} the shapes in the corresponding regions are $\texttt{normal}, \texttt{dipped}, \texttt{humped}$ and $\texttt{inverse}$.\\
If $\eta(\infty)$ is asymptotic, then the number of vertices is reduced by one. However, the total degree is also decreased by two, such that  \cref{eq:count} remains unchanged. The point $\infty$ borders five regions in this case, but four of them must still correspond to the shapes $\texttt{normal}, \texttt{dipped}, \texttt{humped}$ and $\texttt{inverse}$.
 \end{proof}
    
\section{Consequences and further results}\label{sec:shape}
We now look at several consequences from the results obtained in the previous section. Our first object of interest are the possible \emph{transitions} between different shapes of the forward- or yield-curve. 
\begin{defn}\label{def:transition}
The \textbf{transition graph} $\mathcal{T}$ is a (labelled and embedded into $\RR^2$) graph, whose 
vertices are the regions of \cref{thm:split}, labelled by the associated shape of the forward- (or yield-) curve. Two vertices are linked if the corresponding regions are adjacent in the sense of \cref{thm:split}.
\end{defn}
It is not difficult to see that the transition graph is the \emph{dual graph} (see also \cite{lando2004graphs}) of the configuration graph from \cref{def:configuration}. It is also clear that we can define an equivalence relation on the parameter space of the two-dim.{} Vasicek model, where we call two parameter tuples equivalent if they produce the same\footnote{In the sense of isomorphisms of embedded graphs, cf. \cite{lando2004graphs}} transition graphs (and consequently also configuration graphs). For such equivalent tuples the state-space picture of forward- (or yield-)curve shapes and their transitions is essentially the same, i.e., the same shapes with the same transitions appear. Here, we attempt to classify all possible equivalence classes and exhibit some of the associated transition graphs. In most cases the following strategy is sufficient to uniquely determine the transition graph of a certain parameter set: 
\begin{itemize}
\item Determine whether $\eta(0)$ is located on $r_{nd}$ or $r_{ih}$,
\item Determine the quadrant into which $\eta$ moves immediately after $x = 0$,
\item Count the number of cusps, self-intersections and $\ell_0/\ell_\infty$-intersections of $\eta$,
\item Determine last quadrant that $\eta$ visits as $x \to \infty$,
\item Determine whether $\eta(\infty)$ is located on $r_{di}$ or $r_{hn}$.
\end{itemize}
Some rare edge cases may appear, when the configuration is not regular in the sense of \cref{def:configuration}, for example when $\eta(0)$ or $\eta(\infty)$ coincides with the midpoint $M$.

\subsection{Transitions between shapes of the forward curve}    \label{sec:fw_trans}
In the forward case, the starting point of the envelope is given by
\[\eta_{\rm f}(0) = \left(\theta_1 + \frac{\sigma_1^2 + \sigma_2^2 + 2 \rho \sigma_1 \sigma_2}{\lambda_1(\lambda_1 - \lambda_2)}, \theta_2 + \frac{\sigma_1^2 + \sigma_2^2 + 2 \rho \sigma_1 \sigma_2}{\lambda_2(\lambda_2 - \lambda_1)}\right).\]
The endpoint $\eta_{\rm f}(\infty)$ is as follows 
\begin{itemize}
\item Scale-proximal case: $\eta_{\rm f}(\infty) = \left(\theta_i - \tfrac{\sigma_1^2}{\lambda_1^2} - \rho\tfrac{\sigma_1\sigma_2}{\lambda_1\lambda_2}, \theta_2 - \tfrac{\sigma_2^2}{\lambda_2^2} - \rho\tfrac{\sigma_1\sigma_2}{\lambda_1\lambda_2}\right)$
\item Scale-separated case: $\eta_{\rm f}(\infty)$ is asymptotic (along $\ell_\infty$).
\end{itemize}
The direction of the tangent vector $\eta'(x)$ as $x \to 0$ and $x \to \infty$ can be determined by \cref{lem:envelope_slope}.
Comparing $\eta(0)$ and $\eta'(0)$ relative to $\ell_\infty$ and $\eta(\infty)$ and $\eta'(\infty)$ relative to $\ell_0$ it is straight-forward to show the following results:
    \begin{lem}\label{lem:fw_initial}
        The envelope starts on $r_{nd}$ iff
        \begin{align}\label{eqn:start_comp}
            \sigma_1^2\left(\frac{2\lambda_1-\lambda_2}{\lambda_1}\right) + \rho\sigma_1\sigma_2\left(\frac{\lambda_1 + \lambda_2}{\lambda_2}\right) + \sigma_2^2 \ge 0
        \end{align}
        and on $r_{ih}$ iff the inequality is reversed. It moves on into $Q_n$ or $Q_h$ iff 
             \begin{align}\label{eqn:dir_start}
            \sigma_1^2(2\lambda_1 - \lambda_2) + \rho\sigma_1\sigma_2(\lambda_1 + \lambda_2) + \sigma_2^2(2\lambda_2 - \lambda_1) > 0,
        \end{align}
        and into $Q_d$ or $Q_i$ if the inequality is reversed. In the scale-proximal case with $\rho \ge 0$ the inequalities \eqref{eqn:start_comp} and \eqref{eqn:dir_start} hold true for all $\sigma_1, \sigma_2 \in (0,\infty)$. 
 \end{lem}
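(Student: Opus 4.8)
The plan is to treat the two assertions separately, reducing each to the sign of an explicit quantity at the contact point $\eta_{\rm f}(0)$, and then to read off the third assertion by inspection.

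For the first assertion, recall from \cref{lem:contact} that $\eta_{\rm f}(0)$ lies on $\ell_0$. The point $M$ splits $\ell_0$ into exactly two rays: $r_{nd}$, which lies in $\overline{\ell_\infty^+}$ (since both $Q_n$ and $Q_d$ are contained in $\ell_\infty^+$), and $r_{ih}$, which lies in $\overline{\ell_\infty^-}$. Hence ``$\eta$ starts on $r_{nd}$'' is equivalent to ``$\eta_{\rm f}(0) \in \overline{\ell_\infty^+}$'', and the first assertion reduces to deciding on which side of $\ell_\infty$ the contact point lies. Using the positive scaling $\gamma(x) = e^{\lambda_1 x}$ employed to verify \ref{item:limits}, one checks that $\ell_\infty^+ = \set{z_1 < u_1}$, so it suffices to evaluate the sign of $\eta_1^{\rm f}(0) - u_1$. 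Substituting the explicit $\eta_1^{\rm f}(0)$ and $u_1$ and simplifying, I expect to obtain
\[\eta_1^{\rm f}(0) - u_1 = \frac{\sigma_1^2\left(\tfrac{2\lambda_1-\lambda_2}{\lambda_1}\right) + \rho\sigma_1\sigma_2\left(\tfrac{\lambda_1+\lambda_2}{\lambda_2}\right) + \sigma_2^2}{\lambda_1(\lambda_1-\lambda_2)}.\]
Since $\lambda_1(\lambda_1-\lambda_2) < 0$, the condition $\eta_{\rm f}(0) \in \overline{\ell_\infty^+}$ holds iff the numerator \eqref{eqn:start_comp} is $\ge 0$, which is exactly the first claim.

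For the second assertion, the key observation is that $\eta$ meets $\ell_0$ \emph{tangentially} at $\eta_{\rm f}(0)$: the tangent vector $\eta'(0) \propto (-c(0), b(0))$ from \cref{lem:envelope_slope} is parallel to $\ell_0$. Consequently the first-order motion of $\eta$ is along $\ell_0$, and whether $\eta$ enters $\ell_0^+ \supset Q_n \cup Q_h$ or $\ell_0^- \supset Q_i \cup Q_d$ is a second-order question. To capture it I would track the sign of $g(x) := F(0,\eta(x))$ for small $x > 0$: expanding $F(\cdot, z)$ in its time argument about $x$ and using the two envelope identities $F(x,\eta(x)) = 0$ and $\partial_x F(x,\eta(x)) = 0$ gives $g(x) = \tfrac{x^2}{2}\,\partial_{xx}F(0,\eta_{\rm f}(0)) + O(x^3)$. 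Since $\ell_0^+ = \set{F(0,\cdot) > 0}$, the envelope moves into $Q_n$ or $Q_h$ precisely when $\partial_{xx}F(0,\eta_{\rm f}(0)) > 0$.

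It then remains to compute $\partial_{xx}F(0,\eta_{\rm f}(0)) = a_{\rm f}''(0) - \lambda_1^3\eta_1^{\rm f}(0) - \lambda_2^3\eta_2^{\rm f}(0)$, using $b_{\rm f}''(0) = -\lambda_1^3$, $c_{\rm f}''(0) = -\lambda_2^3$ and $a_{\rm f}''(0) = (\cA\,\partial_{xx}m_{\rm f})(0)$. After inserting the definitions of $r$, $u_1$, $u_2$ and the explicit $\eta_{\rm f}(0)$, I expect all $\theta$-dependent terms to cancel and the expression to collapse to exactly the left-hand side of \eqref{eqn:dir_start}. This bookkeeping --- pushing $\cA$ through the coefficients and watching the $\rho$-terms recombine via $(\lambda_1+\lambda_2)^3 - \lambda_1^3 - \lambda_2^3 = 3\lambda_1\lambda_2(\lambda_1+\lambda_2)$ --- is the main obstacle; everything else is structural. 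Finally, the third assertion follows by inspection: in the scale-proximal case $2\lambda_1 > \lambda_2$ one has $2\lambda_1 - \lambda_2 > 0$ and $\tfrac{2\lambda_1-\lambda_2}{\lambda_1} > 0$, while $2\lambda_2 - \lambda_1 > 0$ holds always and $\tfrac{\lambda_1+\lambda_2}{\lambda_2} > 0$; with $\rho \ge 0$ the mixed terms in both \eqref{eqn:start_comp} and \eqref{eqn:dir_start} are nonnegative, so both expressions are strictly positive for every $\sigma_1, \sigma_2 \in (0,\infty)$.
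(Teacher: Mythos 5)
Your proposal is correct, and it fills in exactly the computation the paper leaves implicit (the paper only remarks that the lemma follows by ``comparing $\eta(0)$ and $\eta'(0)$ relative to $\ell_\infty$''): the first claim reduces, as you say, to the sign of $\eta_1^{\rm f}(0)-u_1$, and your displayed identity checks out. For the second claim your second-order Taylor argument is the right way to make the tangency issue rigorous; note only that you could skip the coefficient bookkeeping entirely, since the envelope identities give $\partial_{xx}F(x,\eta(x)) = W(a,b,c)(x)/W(b,c)(x)$, so that $\partial_{xx}F(0,\eta_{\rm f}(0))$ equals the left-hand side of \eqref{eqn:dir_start} divided by $\lambda_1\lambda_2(\lambda_2-\lambda_1)>0$ --- a positive multiple of it rather than the expression itself, which is all the sign argument needs and ties the condition directly to \eqref{eq:detfor}.
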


    \begin{lem}\label{lem:fw_terminal}
    \flushleft
    \begin{enumerate}
    \item In the scale-separated case the envelope ends in $r_{id}$, coming from $Q_i$.
    \item In the scale-proximal case the envelope ends in $r_{hn}$ coming from $Q_n$, if 
    \begin{align}\label{eqn:end_comp}
        \frac{\sigma_1^2}{\lambda_1} + \rho\sigma_1\sigma_2\frac{\lambda_1 + \lambda_2}{\lambda_1\lambda_2} + \frac{\sigma_2^2}{\lambda_2} > 0,
    \end{align}
    and ends in $r_{id}$ coming from $Q_d$  if the inequality is reversed. If $\rho \ge 0$, then inequality \eqref{eqn:end_comp} holds true for all $\sigma_1, \sigma_2 \in (0,\infty)$.
\end{enumerate}
 \end{lem}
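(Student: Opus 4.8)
The plan is to determine the terminal behaviour of the forward envelope $\eta = \eta_{\rm f}$ from its asymptotic expansion as $x \to \infty$ and to locate the limiting contact point relative to $\ell_0$ and $\ell_\infty$. By \cref{lem:contact} the point $\eta(\infty)$ lies on $\ell_\infty$ (in the scale-separated case only asymptotically), so it sits on one of the two rays of $\ell_\infty$ issuing from $M$. Writing $F(0,\cdot)$ for the affine form defining $\ell_0$, so that $\ell_0^+ = \{F(0,\cdot) > 0\}$, the endpoint lies on $r_{hn}$ when $F(0,\eta(\infty)) > 0$ and on $r_{id}$ when $F(0,\eta(\infty)) < 0$. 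Similarly, the quadrant from which $\eta$ arrives is the one containing $\eta(x)$ for all large $x$, and is fixed once the signs of $F(0,\eta(x))$ (the $\ell_0$-side) and of $u_1 - \eta_1(x)$ (the $\ell_\infty$-side, recall $\ell_\infty^+ = \{z_1 < u_1\}$) are known near $x = \infty$. Thus the whole lemma reduces to two sign determinations.

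First I would expand $\eta$ near $x = \infty$ from the formula of \cref{lem:envelope_slope} (equivalently, Cramer's rule for \eqref{eq:system}) specialized to the forward data $b_{\rm f}, c_{\rm f}$ and $a_{\rm f} = \cA m_{\rm f}$; the two leading terms are
\[
\eta_1(x) = u_1 + \frac{\sigma_1^2(2\lambda_1-\lambda_2)}{\lambda_1^2(\lambda_1-\lambda_2)}\,e^{-\lambda_1 x} + o\!\left(e^{-\lambda_1 x}\right),\qquad
\eta_2(x) = u_2 - \frac{\sigma_1^2}{\lambda_2(\lambda_1-\lambda_2)}\,e^{(\lambda_2-2\lambda_1)x} + \dotsb,
\]
with $u_1, u_2$ the coefficients from \cref{sec:bg}. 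Two features drive everything: the exponent $\lambda_2 - 2\lambda_1$ in $\eta_2$ is negative in the scale-proximal and positive in the scale-separated regime; and the coefficient of the leading correction in $\eta_1$ is (since $\lambda_1 - \lambda_2 < 0$) positive in the scale-separated and negative in the scale-proximal regime. Crucially $\eta_1$ carries no exponentially growing term, so the $\ell_\infty$-side of $\eta$ near infinity is governed by this correction alone.

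Reading off the ray: in the scale-proximal case $\eta(\infty) = (u_1, u_2)$ is a proper point, and direct substitution gives
\[
F(0,\eta(\infty)) = \lambda_1\theta_1 + \lambda_2\theta_2 - \lambda_1 u_1 - \lambda_2 u_2 = \frac{\sigma_1^2}{\lambda_1} + \rho\sigma_1\sigma_2\frac{\lambda_1+\lambda_2}{\lambda_1\lambda_2} + \frac{\sigma_2^2}{\lambda_2},
\]
which is exactly the left-hand side of \eqref{eqn:end_comp}; hence $\eta(\infty)$ lies on $r_{hn}$ precisely when \eqref{eqn:end_comp} holds and on $r_{id}$ when it is reversed. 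In the scale-separated case the coefficient $-\sigma_1^2/(\lambda_2(\lambda_1-\lambda_2))$ is positive (as $\lambda_1 < \lambda_2$), so $\eta_2(x) \to +\infty$, whence $F(0,\eta(x)) \to -\infty$ and the asymptotic end runs along $r_{id}$. For the incoming quadrant I use the $\ell_\infty$-side: scale-separated gives $\eta_1(x) > u_1$, i.e. $\eta \in \ell_\infty^-$, so together with $\ell_0^-$ the envelope arrives through $Q_i = \ell_0^- \cap \ell_\infty^-$; scale-proximal gives $\eta_1(x) < u_1$, i.e. $\eta \in \ell_\infty^+$, so together with the $\ell_0$-side found above it arrives through $Q_n = \ell_0^+ \cap \ell_\infty^+$ under \eqref{eqn:end_comp} and through $Q_d = \ell_0^- \cap \ell_\infty^+$ otherwise. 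The final claim is then immediate: for $\rho \ge 0$ every summand of $F(0,\eta(\infty))$ is nonnegative and the two terms $\sigma_i^2/\lambda_i$ are strictly positive, so \eqref{eqn:end_comp} holds.

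The individual computations are routine; the real work lies in the bookkeeping of orientations — which side of each line is labelled $+$ and in which direction from $M$ each ray runs — and in checking that the advertised leading exponential genuinely dominates in each coordinate and each regime (in particular that $\eta_1$ has no growing term). The one delicate point is the regime boundary $2\lambda_1 = \lambda_2$, where both the $e^{(\lambda_2-2\lambda_1)x}$ term and the $\eta_1$-correction degenerate, so the argument must be carried out strictly inside each regime; similarly the strict inequalities in \eqref{eqn:end_comp} exclude the non-generic case $F(0,\eta(\infty)) = 0$ in which $\eta(\infty)$ coincides with $M$.
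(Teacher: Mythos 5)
Your proposal is correct and follows essentially the route the paper indicates (the paper only remarks that the lemma follows by "comparing $\eta(\infty)$ and $\eta'(\infty)$ relative to $\ell_0$" and leaves the computation as straightforward): you compute $\eta_{\rm f}(\infty)=(u_1,u_2)$, identify $F(0,\eta(\infty))$ with the left-hand side of \eqref{eqn:end_comp} to select the ray, and read off the incoming quadrant from the sign of the leading exponential correction of $\eta_1$ (equivalently, from the direction of $\eta'$ near infinity), with the scale-separated case handled by the growing $e^{(\lambda_2-2\lambda_1)x}$ term in $\eta_2$. The expansions and sign bookkeeping check out, so no gap.
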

    
Using these results we give a complete classification of transition graphs in the scale-proximal case with $\rho \ge 0$ and in the scale-separated case.
\subsubsection{The scale-proximal case with $\rho \ge 0$}    
In this case, \cref{lem:fw_initial} and \cref{lem:fw_terminal} imply that $\eta$ starts on $r_{nd}$, moving into $Q_n$, and ends on $r_{hn}$, coming from $Q_n$. By \cref{lem:singular} there are no self-intersections and no intersections with $\ell_0$ or $\ell_\infty$, such that $\eta$ never leaves $Q_n$. The unique transition graph is shown in \cref{fig:LSCG1} and an example of the state space decomposition is shown in \cref{fig:sp0}.

\subsubsection{The scale-separated case}
Here, by \cref{lem:fw_terminal}, the envelope $\eta$ always ends in $r_{di}$ coming from $Q_i$. The contact point $\eta(\infty)$, however, is an asymptotic point. By \cref{lem:singular} $\eta$ can have at most a single cusp and at most a single intersection point with each $\ell_0$ and $\ell_\infty$. Depending on the start of $\eta$ we distinguish the following cases; the corresponding transition graphs are shown in \cref{fig:LSCG2}.
\begin{enumerate}
\item The envelope starts in $r_{ih}$, moving into $Q_i$. In this case there can be no cusp, no self-intersections and no intersections with $\ell_0$ or $\ell_\infty$. Any such points, together with the increasing-slope property would lead to a contradiction. Thus, $\eta$ stays in $Q_i$ for all times. 
\item The envelope starts in $r_{ih}$, moving into $Q_h$. In this case $\eta$ must have a cusp, located in $Q_h$, and an intersection with $\ell_0$, but no other special points. A possible edge case appears if $\eta(0)$ coincides with $M$. 
\item The envelope starts in $r_{nd}$, moving into $Q_n$. In this case $\eta$ must have a cusp, located in $Q_h$, and an intersection with both $\ell_0$ and $\ell_\infty$, but no other special points. An example of this state-space decomposition is shown in \cref{fig_ss}.
\end{enumerate}

The case of starting in $r_{nd}$ and moving into $Q_d$ can be ruled out; together with the end on $r_{di}$ it poses a contradiction to the combination of decreasing slope and single cusp. Thus the list of cases is complete. Note that all cases can be attained by choosing a suitable ratio of $\sigma_2/\sigma_1$ in the inequalities \eqref{eqn:start_comp} and \eqref{eqn:end_comp}.
    
\subsubsection{Other cases}
The scale-critical cases can be analyzed in a similar way, with a complete classification of all subcases given in \cref{tab:classification}. The scale-proximal case with $\rho < 0$ is the most difficult case; \cref{tab:classification} lists all subcases that are possible a priori, i.e., that do not lead to any contraction to the restrictions imposed by Lemmas~\labelcref{lem:envelope_slope}, \labelcref{lem:singular}, \labelcref{lem:fw_initial}, and \labelcref{lem:fw_terminal}. However, we were not able to analytically show that these cases are indeed attainable, i.e., to guarantee the existence of parameter tuples that lead to the corresponding transition graphs and state-space decompositions. Some cases are easy to confirm numerically, for example a state space decomposition corresponding to the last row of \cref{tab:classification} is shown in \cref{fig:sp}. It remains an open problem to completely classify (including attainability) all transition graphs in the scale-proximal case with $\rho < 0$.

        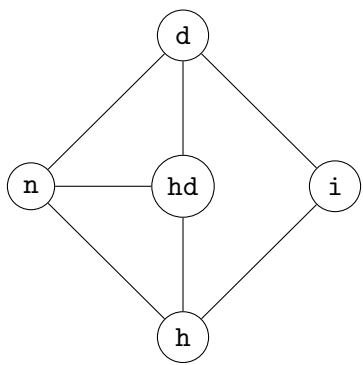
\begin{figure}
        \centering
        \begin{tikzpicture}[scale=2]
        	\node (A) at (0,0) [circle,draw] {\texttt{hd}};
        	\node (B) at (-1,0) [circle,draw] {\texttt{n}};
        	\node (C) at (0,1) [circle,draw] {\texttt{d}};
        	\node (D) at (1,0) [circle,draw] {\texttt{i}};
        	\node (E) at (0,-1) [circle,draw] {\texttt{h}};
        	
        	\draw[-] (A) to (B);
        	\draw[-] (A) to (C);
        	\draw[-] (A) to (E);
        	\draw[-] (B) to (C);
        	\draw[-] (C) to (D);
        	\draw[-] (D) to (E);
        	\draw[-] (E) to (B);
        \end{tikzpicture}
        \caption{Transition graph for the scale proximal case with $\rho \ge 0$}
        \label{fig:LSCG1}
    \end{figure}
    
    \begin{figure}
        \centering
        \begin{subfigure}{.45\linewidth}
            \begin{tikzpicture}[scale=1.7]
            	\node (A) at (0,0) [circle,draw] {\texttt{dh}};
            	\node (B) at (-1,0) [circle,draw] {\texttt{n}};
            	\node (C) at (0,1) [circle,draw] {\texttt{d}};
            	\node (D) at (1,0) [circle,draw] {\texttt{i}};
            	\node (E) at (0,-1) [circle,draw] {\texttt{h}};
            	
            	\draw[-] (A) to (D);
            	\draw[-] (A) to (C);
            	\draw[-] (A) to (E);
            	\draw[-] (B) to (C);
            	\draw[-] (D) to (E);
            	\draw[-] (E) to (B);
            \end{tikzpicture}
            \subcaption{Case (1)}
	    \end{subfigure}
	    \begin{subfigure}{.45\linewidth}
            \begin{tikzpicture}[scale=1.7]
            	\node (A) at (0,0) [circle,draw] {\texttt{dh}};
            	\node (B) at (-2,0) [circle,draw] {\texttt{n}};
            	\node (C) at (0,1) [circle,draw] {\texttt{d}};
            	\node (D) at (1,0) [circle,draw] {\texttt{i}};
            	\node (E) at (0,-1) [circle,draw] {\texttt{h}};
            	\node (F) at (-1,0) [circle,draw] {\texttt{hdh}};
            	
            	\draw[-] (A) to (D);
            	\draw[-] (A) to (C);
            	\draw[-, red] (A) to (E);
            	\draw[-] (B) to (C);
            	\draw[-] (D) to (E);
            	\draw[-] (E) to (B);
            	\draw[-] (A) to (F);
            	\draw[-] (F) to (E);
            \end{tikzpicture}
            \subcaption{Case (2) with edge case}
	    \end{subfigure}
	    
	    \begin{subfigure}{.45\linewidth}
            \begin{tikzpicture}[scale=1.7]
            	\node (A) at (-0.5,0) [circle,draw] {\texttt{hd}};
            	\node (B) at (-1.5,0) [circle,draw] {\texttt{n}};
            	\node (C) at (0,1) [circle,draw] {\texttt{d}};
            	\node (D) at (1.5,0) [circle,draw] {\texttt{i}};
            	\node (E) at (0,-1.5) [circle,draw] {\texttt{h}};
            	\node (F) at (0,-0.5) [circle,draw] {\texttt{hdh}};
            	\node (G) at (0.5,0) [circle,draw] {\texttt{dh}};
            	
            	\draw[-] (A) to (C);
            	\draw[-] (B) to (C);
            	\draw[-] (D) to (E);
            	\draw[-] (E) to (B);
            	\draw[-] (A) to (F);
            	\draw[-] (F) to (E);
            	\draw[-] (F) to (G);
            	\draw[-] (C) to (G);
            	\draw[-] (D) to (G);
            	\draw[-] (A) to (B);
            \end{tikzpicture}
            \subcaption{Case (3)}
	    \end{subfigure}
        \caption{Transition graphs for the scale-separated case}
        \label{fig:LSCG2}
    \end{figure}
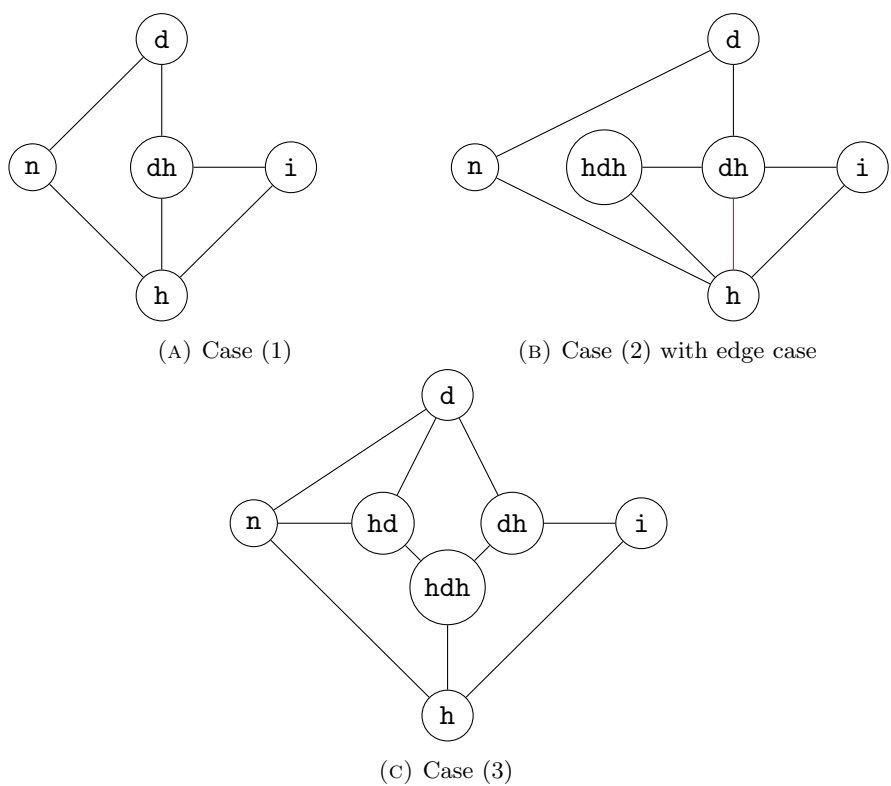

\subsection{Transitions between shapes of the yield curve}\label{sec:y_trans}
The case of the yield curve can in principle be analyzed in the same way as the forward curve case. By \cref{lem:start} we know that the number of cusps of $\eta_{\rm y}$ is less or equal than the number of cusps of $\eta_{\rm f}$, suggesting at first that no new transition graphs appear in the yield curve case. On the other hand the contact point $\eta_{\rm y}(\infty)$, which was asymptotic in the scale-separated forward curve case, is never asymptotic for the yield curve, see \cref{lem:start}(b). For this reason, two additional transition graphs \emph{do} appear in the yield curve case, one for the scale-separated case and one for the scale-critical case with $\rho < 0$, see \cref{tab:classification}. We report the analogue results of \cref{lem:fw_initial} and \cref{lem:fw_terminal}:
    
      \begin{lem}\label{lem:y_initital}
        The envelope $\eta_{\rm y}$ starts on $r_{nd}$ iff
                \begin{align*}
            &\sigma_1^2\left(\frac{4\lambda_1^2 + 4\lambda_1\lambda_2-3\lambda_2^2}{4\lambda_1^2}\right) + \rho \frac{\sigma_1\sigma_2}{\lambda_1\lambda_2}\left(\frac{\lambda_2^3 + 3\lambda_1\lambda_2^2 + \lambda_1^2\lambda_2}{\lambda_1 + \lambda_2}\right) +\\&\quad+ \sigma_2^2\left(\frac{4\lambda_2^2 + 4\lambda_1\lambda_2 - 3\lambda_1^2}{4\lambda_1\lambda_2}\right) \ge 0
        \end{align*}
        and on $r_{ih}$ iff the inequality is reversed. It moves on into $Q_n$ or $Q_h$ iff 
             \begin{align*}
            \sigma_1^2(2\lambda_1 - \lambda_2) + \rho\sigma_1\sigma_2(\lambda_1 + \lambda_2) + \sigma_2^2(2\lambda_2 - \lambda_1) > 0,
        \end{align*}
        and into $Q_d$ or $Q_i$ if the inequality is reversed. In the scale-proximal case with $\rho \ge 0$ both inequalities above hold true for all $\sigma_1, \sigma_2 \in (0,\infty)$. 
 \end{lem}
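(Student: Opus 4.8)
The plan is to reduce both assertions to the local behaviour of the yield envelope $\eta_{\rm y}$ at its contact point $\eta_{\rm y}(0)$, exactly in the spirit of \cref{lem:fw_initial}. By \cref{lem:contact} the point $\eta_{\rm y}(0)$ lies on $\ell_0$, and because the envelope meets each $\ell_x$ tangentially (see the remark following \cref{lem:envelope_slope}) the tangent $\eta_{\rm y}'(0)$ is parallel to $\ell_0$. Now $\ell_0$ is split by the intersection point $M$ into the two rays $r_{nd}\subset\overline{\ell_\infty^+}$ and $r_{ih}\subset\overline{\ell_\infty^-}$, so the first claim amounts to determining the sign of the $\ell_\infty$-defining function $F(\infty,\cdot)$ at $\eta_{\rm y}(0)$. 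Likewise $Q_n\cup Q_h=\ell_0^+$ and $Q_d\cup Q_i=\ell_0^-$, so the second claim amounts to deciding into which half-space $\ell_0^{\pm}$ the envelope curves, i.e.\ the sign of the second-order contact of $\eta_{\rm y}$ with $\ell_0$.

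For the first claim I would compute $\eta_{\rm y}(0)$ by solving the linear system \eqref{eq:system} at $x=0$. The required values $b_{\rm y}(0),c_{\rm y}(0),b_{\rm y}'(0),c_{\rm y}'(0)$ and $a_{\rm y}(0),a_{\rm y}'(0)$ are read off from the expansion $m_{\rm y}(x,\lambda)=-\tfrac{\lambda}{2}+\tfrac{\lambda^2}{3}x-\tfrac{\lambda^3}{8}x^2+\cdots$ of the yield generator near $x=0$, together with the operator $\cA$ from \eqref{eq:A_operator}. Substituting the resulting $\eta_{\rm y}(0)$ into the equation of $\ell_\infty^{\rm y}$ and simplifying produces the stated first inequality; the $\theta_i$-contributions must cancel, since $\eta$, $\ell_0$ and $\ell_\infty$ all translate together with $(\theta_1,\theta_2)$ and hence the relative quantity $F(\infty,\eta_{\rm y}(0))$ is translation-invariant. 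A positive value places $\eta_{\rm y}(0)$ on $r_{nd}$, a negative value on $r_{ih}$.

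For the second claim, set $f_0(x):=F(0,\eta_{\rm y}(x))=a_{\rm y}(0)+b_{\rm y}(0)\eta_{{\rm y},1}(x)+c_{\rm y}(0)\eta_{{\rm y},2}(x)$. As in the identity \eqref{eq:fxi_diff} with $\xi=0$, one has $f_0'(x)=\tfrac{W(a_{\rm y},b_{\rm y},c_{\rm y})(x)}{W(b_{\rm y},c_{\rm y})(x)^2}\bigl(c_{\rm y}(0)b_{\rm y}(x)-b_{\rm y}(0)c_{\rm y}(x)\bigr)$, whose bracket vanishes at $x=0$ (confirming $f_0'(0)=0$, i.e.\ tangency); differentiating once more shows that $f_0''(0)$ is a nonzero multiple of $W(a_{\rm y},b_{\rm y},c_{\rm y})(0)/W(b_{\rm y},c_{\rm y})(0)$, so the curving side is governed by the sign of this ratio. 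From the expansion one checks $a_{\rm y}(0)=\tfrac12 a_{\rm f}(0)$, $b_{\rm y}(0)=\tfrac12 b_{\rm f}(0)$, $c_{\rm y}(0)=\tfrac12 c_{\rm f}(0)$, so $\ell_0$ is literally the same line for yield and forward, and a direct evaluation shows $W(a_{\rm y},b_{\rm y},c_{\rm y})(0)$ is a positive multiple of the forward quantity \eqref{eq:detfor} at $x=0$; the direction criterion therefore reduces to the \emph{same} inequality \eqref{eqn:dir_start} as in the forward case. The concluding sentence (scale-proximal case with $\rho\ge0$) then follows because $2\lambda_1>\lambda_2$ makes $2\lambda_1-\lambda_2$, $\lambda_1+\lambda_2$, $2\lambda_2-\lambda_1$ and the corresponding coefficients of the first inequality all positive, so both inequalities hold for every $\sigma_1,\sigma_2>0$.

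The main obstacle is purely computational. The yield generator $m_{\rm y}(x,\lambda)$ has a removable $1/x^2$ singularity at $0$, so the coefficients $a_{\rm y},b_{\rm y},c_{\rm y}$ and their first two derivatives at $x=0$ must be obtained by carrying the Taylor expansion to the correct order before taking limits. The heaviest part is the algebraic simplification of $F(\infty,\eta_{\rm y}(0))$ into the displayed first inequality, together with verifying that $W(a_{\rm y},b_{\rm y},c_{\rm y})(0)$ collapses, after cancellation of the $\theta_i$- and several $\lambda$-terms, to a positive multiple of $\sigma_1^2(2\lambda_1-\lambda_2)+\rho\sigma_1\sigma_2(\lambda_1+\lambda_2)+\sigma_2^2(2\lambda_2-\lambda_1)$. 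Once these two reductions are secured, the lemma follows exactly as \cref{lem:fw_initial} did in the forward case.
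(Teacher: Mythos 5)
Your proposal is correct and follows essentially the same route the paper (implicitly) takes for this lemma: locate $\eta_{\rm y}(0)$ on $\ell_0$ via the Taylor coefficients $g_{\rm y}(0)=\tfrac12 g_{\rm f}(0)$ and $g_{\rm y}'(0)=\tfrac13 g_{\rm f}'(0)$, test its sign against $\ell_\infty^{\rm y}$ for the first claim, and read off the curving side from the sign of $W(a_{\rm y},b_{\rm y},c_{\rm y})(0)/W(b_{\rm y},c_{\rm y})(0)$ for the second. Your observation that $g_{\rm y}''(0)=\tfrac14 g_{\rm f}''(0)$, whence $W(a_{\rm y},b_{\rm y},c_{\rm y})(0)=\tfrac{1}{24}W(a_{\rm f},b_{\rm f},c_{\rm f})(0)$ and $W(b_{\rm y},c_{\rm y})(0)=\tfrac16 W(b_{\rm f},c_{\rm f})(0)$, is exactly the reason the second inequality coincides with \eqref{eqn:dir_start}; only the final sign bookkeeping in $f_0''(0)=\pm W(a_{\rm y},b_{\rm y},c_{\rm y})(0)/W(b_{\rm y},c_{\rm y})(0)$ remains to be pinned down.
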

    
   \begin{lem}\flushleft\label{lem:y_terminal}
The envelope $\eta_{\rm y}$ ends in $r_{hn}$ if 
    \begin{align}
        \frac{\sigma_1^2}{\lambda_1} + \rho\sigma_1\sigma_2\frac{\lambda_1 + \lambda_2}{\lambda_1\lambda_2} + \frac{\sigma_2^2}{\lambda_2} > 0,
    \end{align}
    and ends in $r_{id}$ if the inequality is reversed. It does so coming from $Q_n$ or $Q_d$ in the scale-proximal case, and from $Q_i$ or $Q_h$ in the scale-separated case. If $\rho \ge 0$, then the inequality above holds true for all $\sigma_1, \sigma_2 \in (0,\infty)$.
 \end{lem}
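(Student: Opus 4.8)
The plan is to follow the strategy already used for the forward curve in \cref{lem:fw_terminal}, adapted to the fact that---by \cref{lem:start}(b)---the contact point $\eta_{\rm y}(\infty)$ is always a \emph{proper} point of $\ell_\infty$, never asymptotic. By \cref{lem:contact} we have $\eta_{\rm y}(\infty)\in\ell_\infty$, so it lies on one of the two rays $r_{hn}$ or $r_{id}$ into which $M$ divides $\ell_\infty$. These two rays are distinguished by the side of $\ell_0$ on which they lie ($r_{hn}\subset\ell_0^+$ and $r_{id}\subset\ell_0^-$), while the quadrant \emph{from which} $\eta$ approaches $\eta_{\rm y}(\infty)$ is then fixed by the side of $\ell_\infty$ occupied by the terminal arc of $\eta$ (the terminal arc inherits the $\ell_0$-side of its endpoint, since by \cref{lem:singular} the envelope meets $\ell_0$ only finitely often and hence not near $x=\infty$). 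Thus the proof splits into two sign computations: one locating $\eta_{\rm y}(\infty)$ relative to $\ell_0$, and one locating the terminal arc relative to $\ell_\infty$.

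For the first, I would compute $\eta_{\rm y}(\infty)=\lim_{x\to\infty}\eta(x)$ from the Wronskian representation of \cref{lem:envelope_slope}, extracting the leading asymptotics of $W(c,a)$, $W(a,b)$ and $W(b,c)$ as $x\to\infty$ (all three decay at the same exponential rate $e^{-\lambda_1 x}$, so the limit is the ratio of their leading coefficients). Substituting the result into $F(0,z)=\lambda_1\theta_1+\lambda_2\theta_2-\lambda_1 z_1-\lambda_2 z_2$ produces $F(0,\eta_{\rm y}(\infty))$, and the claim is that its sign equals that of $\tfrac{\sigma_1^2}{\lambda_1}+\rho\sigma_1\sigma_2\tfrac{\lambda_1+\lambda_2}{\lambda_1\lambda_2}+\tfrac{\sigma_2^2}{\lambda_2}$; a positive value places $\eta_{\rm y}(\infty)$ in $\ell_0^+$, hence on $r_{hn}$, and a negative value on $r_{id}$. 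It is a useful consistency check that this is exactly the quantity appearing in \eqref{eqn:end_comp} for the forward curve, even though $\ell_\infty^{\rm y}\neq\ell_\infty^{\rm f}$.

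For the second computation I would use that $\eta$ meets $\ell_\infty$ tangentially at $\eta_{\rm y}(\infty)$ (the remark after \cref{lem:envelope_slope}) and track the signed position $g(x):=F(\infty,\eta(x))$, which satisfies $g(\infty)=0$ and, exactly as in \eqref{eq:fxi_diff} with $\xi=\infty$,
\[g'(x)=\frac{W(a,b,c)(x)}{W(b,c)(x)^2}\,m_\infty(x),\qquad m_\infty(x)=c(\infty)b(x)-b(\infty)c(x).\]
By the strictly monotone slope (\cref{lem:envelope_slope}) the factor $m_\infty$ has constant sign for all large $x$, so the side of $\ell_\infty$ occupied by the terminal arc is controlled by the sign of $W(a_{\rm y},b_{\rm y},c_{\rm y})$ near $x=\infty$. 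By \cref{lem:singular} (via \cref{lem:start}) this sign is regime-dependent, and it places the terminal arc in $\ell_\infty^+$ in the scale-proximal case and in $\ell_\infty^-$ in the scale-separated case; combined with the ray found above this gives $Q_n$/$Q_d$ respectively $Q_h$/$Q_i$, as stated. Finally, the $\rho\ge 0$ claim is immediate: then every term of $\tfrac{\sigma_1^2}{\lambda_1}+\rho\sigma_1\sigma_2\tfrac{\lambda_1+\lambda_2}{\lambda_1\lambda_2}+\tfrac{\sigma_2^2}{\lambda_2}$ is nonnegative and the outer two are strictly positive.

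The main obstacle is the first computation. Unlike the forward case, the yield envelope has no manageable closed form (the paper deliberately omits it), so $\eta_{\rm y}(\infty)$ must be obtained through a careful asymptotic expansion of the Wronskians rather than by reading off an explicit formula; particular care is needed because numerator and denominator in the envelope formula both vanish at the same exponential rate. Moreover, the near-infinity sign of $W(a_{\rm y},b_{\rm y},c_{\rm y})$ required in the second step is not available by the elementary Descartes-system argument used for the forward curve but must be imported from the total-positivity analysis of \cref{lem:start} in the appendix. Once these two sign facts are established, the remaining bookkeeping of rays and quadrants is routine.
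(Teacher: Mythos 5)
Your overall strategy matches what the paper intends: Lemma~\ref{lem:y_terminal} is stated without proof as an analogue of Lemma~\ref{lem:fw_terminal}, to be obtained by comparing the contact point $\eta_{\rm y}(\infty)$ and the direction of approach against $\ell_0$ and $\ell_\infty$, and your decomposition into (i) the side of $\ell_0$ on which $\eta_{\rm y}(\infty)$ lies (determining the ray) and (ii) the side of $\ell_\infty^{\rm y}$ occupied by the terminal arc (determining the quadrant) is exactly the right bookkeeping. However, as written the argument has two gaps. First, the decisive computation in (i) is only announced, not performed, and the route you propose --- asymptotics of the three Wronskians, which you yourself flag as the main obstacle --- is unnecessarily hard: Lemma~\ref{lem:start}(b) already identifies $\eta_{\rm y}(\infty)$ as the intersection of the two explicitly known lines $\ell_\infty^{\rm f}$ and $\ell_\infty^{\rm y}$, so the point is obtained by solving a $2\times 2$ linear system and substituting into $F(0,\cdot)$; without carrying this out, the claimed coincidence of the resulting sign condition with \eqref{eqn:end_comp} remains unverified.

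Second, and more seriously, step (ii) rests on the sign of $W(a_{\rm y},b_{\rm y},c_{\rm y})$ near $x=\infty$, which you propose to ``import from'' Lemma~\ref{lem:start}. That lemma only bounds the \emph{number} of zeros of $W(a_{\rm y},b_{\rm y},c_{\rm y})$ by that of $W(a_{\rm f},b_{\rm f},c_{\rm f})$; it carries no sign information, so the regime-dependent placement of the terminal arc in $\ell_\infty^+$ (scale-proximal, giving $Q_n$/$Q_d$) versus $\ell_\infty^-$ (scale-separated, giving $Q_h$/$Q_i$) is not established by anything you cite. To close this you would need either a direct asymptotic evaluation of $W(a_{\rm y},b_{\rm y},c_{\rm y})$ as $x\to\infty$, or an argument exploiting the determinant identities in the proof of Lemma~\ref{lem:start} that relate the yield-curve Wronskian to the forward-curve quantities, whose terminal sign can be read off from \eqref{eq:detfor}. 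The surrounding logic --- constant sign of $m_\infty$ from the strictly monotone slope, deducing the side of $\ell_\infty^{\rm y}$ from the sign of $g'$ near infinity, and the $\rho \ge 0$ claim --- is sound.
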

        
The individual cases can now be classified by arguments analogous to the forward curve case. Without giving full details, the results are summarized in \cref{tab:classification}.

    \begin{table}
    \scriptsize
     \captionsetup{width=0.9 \textwidth}
\begin{adjustbox}{center}
\begin{tabular}{|p{2cm}||c|c|c|c||c|c|c|c||c|c|c|}
        \hline
Regime \vspace{.3em}& \rotatebox{90}{$\eta$ starts on} & \rotatebox{90}{$\eta$ moves into} & \rotatebox{90}{$\eta$ ends on} & \rotatebox{90}{$\eta$ exits from} & \rotatebox{90}{\# cusps} & \rotatebox{90}{\# self-inters.} & \rotatebox{90}{\# inters.\ w/ $\ell_0$} & \rotatebox{90}{\# inters.\ w/ $\ell_\infty$} & \rotatebox{90}{$\eta(\infty)$ asymptotic} & \rotatebox{90}{\parbox{3cm}{additional shapes\\(besides \texttt{n}, \texttt{d},\texttt{i}, \texttt{h})}} & \rotatebox{90}{\parbox{3cm}{attainable for\\forward/yield curve}}\\
        \hline
        \hline
        scale-proximal with $\rho \ge 0$ & $r_{nd}$ & $Q_n$ & $r_{nh}$ & $Q_n$ & 0 & 0 & 0 & 0 & no & \texttt{hd} & f,y\\
        \hline
        scale-critical with $\rho \ge 0$  & $r_{nd}$ & $Q_n$ & $r_{nh}$ & $Q_n$ & 0 & 0 & 0 & 0 & no & \texttt{hd} & f,y\\
        \hline
        scale-separated & $r_{ih}$ & $Q_{i}$ & $r_{id}$ & $Q_{i}$ & 0 & 0 & 0 & 0 & f & \texttt{dh} & f,y\\
         & $r_{ih} $ & $Q_{h}$ & $r_{id}$ & $Q_{i}$ & 1 & 0 & 1 & 0 & f & \texttt{dh}, \texttt{hdh} & f,y\\
         & $r_{nd}$ & $Q_{n}$ & $r_{id}$ & $Q_{i}$ & 1 & 0 & 1 & 1 & f & \texttt{hd}, \texttt{dh}, \texttt{hdh} & f,y\\
         & $r_{nd}$ & $Q_{n}$ & $r_{nh}$ & $Q_{h}$ & 1 & 0 & 0 & 1 & no & \texttt{hd}, \texttt{hdh} & y\\
         \hline
        \multirow{2}{2cm}{scale-critical with $\rho < 0$} & $r_{ih}$ & $Q_{i}$ & $r_{id}$ & $Q_{i}$ & 0 & 0 & 0 & 0 & no & \texttt{dh} & f,y\\
         & $r_{ih} $ & $Q_{h}$ & $r_{id}$ & $Q_{i}$ & 1 & 0 & 1 & 0 & no & \texttt{dh}, \texttt{hdh} & f,y\\
         & $r_{nd}$ & $Q_{n}$ & $r_{id}$ & $Q_{i}$ & 1 & 0 & 1 & 1 & no & \texttt{hd}, \texttt{dh}, \texttt{hdh} & f,y\\
         & $r_{nd}$ & $Q_{n}$ & $r_{nh}$ & $Q_{h}$ & 1 & 0 & 1 & 1 & no & \texttt{hd}, \texttt{hdh} & f,y\\
         & $r_{nd}$ & $Q_n$ & $r_{nh}$ & $Q_n$ & 0 & 0 & 0 & 0 & no & \texttt{hd} & y\\
        \hline
        \multirow{2}{2cm}{scale-proximal with $\rho < 0$} & $r_{nd}$ & $Q_{d}$ & $r_{nh}$ & $Q_{n}$ & 1 & 0 & 1 & 0 & no & \texttt{hd}, \texttt{dhd} & f/\textbf{?}\\
         & $r_{ih}$ & $Q_{i}$ & $r_{nh}$ & $Q_{n}$ & 1 & 0 & 1 & 1 & no & \texttt{hd}, \texttt{dh}, \texttt{dhd} & f/\textbf{?}\\
         & $r_{ih}$ & $Q_{i}$ & $r_{id}$ & $Q_{d}$ & 1 & 0 & 0 & 1 & no & \texttt{dh}, \texttt{dhd} & \textbf{?}/y\\
         & $r_{nd}$ & $Q_{n}$ & $r_{id}$ & $Q_{d}$ & 2 & 0 & 1 & 2 & no & \texttt{hd}, \texttt{dh}, \texttt{hdh}, \texttt{dhd} & f/\textbf{?}\\
         & $r_{ih}$ & $Q_{h}$ & $r_{id}$ & $Q_{d}$ & 2 & 0 & 1 & 1 & no & \texttt{dh}, \texttt{hdh}, \texttt{dhd} & f/y\\
         & $r_{ih}$ & $Q_{h}$ & $r_{nh}$ & $Q_{n}$ & 2 & 0 & 2 & 1 & no & \texttt{hd}, \texttt{dh}, \texttt{hdh}, \texttt{dhd} & f/\textbf{?}\\
         \cline{2-12}
         & $r_{nd}$ & $Q_n$ & $r_{nh}$ & $Q_n$ & 0 & 0 & 0 & 0 & no & \texttt{hd} & \textbf{?}\\
         & $r_{nd}$ & $Q_n$ & $r_{nh}$ & $Q_n$ & 2 & 1 & 0 & 0 & no & \texttt{hd}, \texttt{hdhd} & \textbf{?}\\
         & $r_{nd}$ & $Q_n$ & $r_{nh}$ & $Q_n$ & 2 & 1 & 2 & 0 & no & \texttt{hd}, \texttt{dhd}, \texttt{hdhd} & \textbf{?}\\
         & $r_{nd}$ & $Q_n$ & $r_{nh}$ & $Q_n$ & 2 & 0 & 0 & 2 & no & \texttt{hd}, \texttt{dhd}, \texttt{hdhd} & \textbf{?}\\
         & $r_{nd}$ & $Q_n$ & $r_{nh}$ & $Q_n$ & 2 & 1 & 0 & 2 & no & \texttt{hd}, \texttt{hdh}, \texttt{hdhd} & \textbf{?}\\
         & $r_{nd}$ & $Q_n$ & $r_{nh}$ & $Q_n$ & 2 & 0 & 2 & 2 & no & \texttt{hd}, \texttt{hdh}, \texttt{dhd}, \texttt{hdhd}$^*$ & \textbf{?}\\
         & $r_{nd}$ & $Q_n$ & $r_{nh}$ & $Q_n$ & 2 & 0 & 2 & 2 & no & \texttt{hd}, \texttt{dh}, \texttt{hdh}, \texttt{dhd}, \texttt{hdhd}$^*$ & \textbf{?}\\
         & $r_{nd}$ & $Q_n$ & $r_{nh}$ & $Q_n$ & 2 & 1 & 2 & 2 & no & \texttt{hd}, \texttt{hdh}, \texttt{dhd}, \texttt{hdhd} & \textbf{?}\\
         & $r_{nd}$ & $Q_n$ & $r_{nh}$ & $Q_n$ & 2 & 1 & 2 & 2 & no & \texttt{hd}, \texttt{dh}, \texttt{hdh}, \texttt{dhd}, \texttt{hdhd} & \textbf{?}\\
         \hline
    \end{tabular}
    \end{adjustbox}
    \caption{\label{tab:classification}This table lists, row-by-row, all possible (i.e., allowed by our results) equivalence classes of transition graphs and their occurrent shapes for the two-dim.\ Vasicek model. The letters `f'. and `y' stand for forward- and yield-curve respectively; a question mark indicates an unknown property. For rows marked with `$*$' two (non-isomorphic) transition graphs exist. Additional explanations can be found in Sections~\labelcref{sec:fw_trans} and \labelcref{sec:y_trans}.}
    \end{table}
    
\subsection{Shapes for given short-rate}    
So far we have studied the shape of the forward and the yield curve given a certain state $Z_t = z \in \RR^2$ of the factor process. Now, we want to analyze the possible shapes, given the state $r_t = r \in \RR$ of the short rate. The short rate is related to $Z$ by 
\[r_t = \kappa + Z_t^1 + Z_t^2,\]
such that fixing the short rate amounts to fixing a line of slope $-1$ in the state space $\RR^2$. From \cite[Fig. 3]{diez2020yield} we know that there are short rates for which three different shapes of the yield curve occur. We give a more refined result, showing that three is in fact a lower bound for the number of shapes given \emph{any} level of $r \in \RR$.
    \begin{thm}Consider the two-dimensional Vasicek model in regular configuration. 
        The following properties hold for the kind and number of shapes that are attainable for a given short rate, for both the forward and yield curve:
        \begin{enumerate}[(a)]
            \item For every choice of parameters and every given short rate $r_t = r \in \RR$ at least the shapes \texttt{dipped} and \texttt{humped} occur.
            \item For every choice of parameters and every short rate $r_t = r \in \RR$ at least 3 different shapes occur.
            \item For every choice of parameters there exists a short rate $r_t = r \in \RR$ for which 4 different shapes occur.
        \end{enumerate}
    \end{thm}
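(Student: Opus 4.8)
The plan is to exploit the fact that fixing $r_t = r$ confines the state to the line $h_r = \{z : z_1 + z_2 = r - \kappa\}$, which has the fixed direction $v = (1,-1)$. The decisive first step is to locate $h_r$ relative to the four quadrants. Writing $F_0, F_\infty$ for the affine forms defining $\ell_0, \ell_\infty$, the explicit expressions derived above give $\nabla F_0 \propto (-\lambda_1,-\lambda_2)$ in both the forward and the yield case, while $\nabla F_\infty \propto (-1,0)$ for the forward curve and $\nabla F_\infty \propto (-1/\lambda_1,-1/\lambda_2)$ for the yield curve. Hence $v \cdot \nabla F_0 = \lambda_2 - \lambda_1 > 0$, whereas $v \cdot \nabla F_\infty < 0$ in both cases; the two directional derivatives have \emph{opposite} signs. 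Equivalently, the slope $-1$ of $h_r$ lies strictly between the slope $-\lambda_1/\lambda_2 \in (-1,0)$ of $\ell_0$ and the slope of $\ell_\infty$ (vertical, resp.\ $-\lambda_2/\lambda_1 < -1$). Consequently, moving along any $h_r$ that avoids $M$, the signs of $F_0$ and $F_\infty$ flip at two distinct points, cutting $h_r$ into two unbounded outer parts lying in the \emph{mismatched-sign} quadrants $Q_h$ and $Q_d$, and one bounded middle part lying in $Q_n$ (when $r < r_M$) or $Q_i$ (when $r > r_M$), where $r_M = \kappa + M_1 + M_2$ is the short rate through $M$. The line $h_{r_M}$ degenerates into $Q_h \cup \{M\} \cup Q_d$.

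Granting this decomposition, (a) is immediate: the two outer parts are unbounded and therefore reach the neighbourhood of $\infty$ on which $\wind_{\hat\eta} = 0$ (as in the proof of \cref{cor:configuration}); by \cref{thm:main} such states satisfy $E = 1$ and lie in $Q_h$ (shape \texttt{humped}) respectively $Q_d$ (shape \texttt{dipped}). For (b) I would read off the parity encoded in \eqref{eq:main_formula}: the indicator $\Ind{Q_h\cup Q_d}$ equals $1$ on the outer parts but $0$ on the middle part, so every shape occurring on the middle part has \emph{even} $E$ and is thus distinct from \texttt{humped} and \texttt{dipped}, which have $E = 1$. If $r \neq r_M$ the middle part is a nondegenerate segment meeting $\eta$ in finitely many points and hence realises at least one even-$E$ shape, a third shape. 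If $r = r_M$ the middle part degenerates, but the single state $z = M$ lies on the boundary of all four quadrants and so, by the boundary rule of \cref{rem:online}, carries the minimal adjacent shape ($E = 0$, \texttt{normal} or \texttt{inverse}), again a third shape.

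For (c) the aim is to find a short rate whose line additionally meets a region of nonzero winding number. The natural object is the unique $x_1 \in (0,\infty)$ with slope $s(x_1) = -1$, which exists because $s$ is continuous and strictly monotone (\cref{lem:vasicek}) with range containing $-1$ in both the forward and the yield case. At $x_1$ the tangent $\eta'(x_1)$ is parallel to $v$, so the height function $g(x) := \eta_1(x) + \eta_2(x)$ has $g'(x_1) = 0$; indeed $g'(x) = \tfrac{W(a,b,c)(x)}{W(b,c)(x)^2}\,(b(x)-c(x))$ by \cref{lem:envelope_slope}, and $b - c$ vanishes precisely where $s(x) = -b(x)/c(x) = -1$, i.e.\ only at $x_1$. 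In the regimes with no cusps (where $W(a,b,c)$ never vanishes, and the whole envelope lies in $Q_n$) $g$ thus has a strict extremum at $x_1$, so the short-rate line at level $g(x_1)$ is tangent to $\eta$ at $\eta(x_1) \in Q_n$, and for $r$ on the appropriate side of $r^\ast = \kappa + g(x_1)$ the line $h_r$ meets $\eta$ transversally at two nearby points. These two crossings change $\wind_{\hat\eta}$ by $\pm 1$, so the middle part now displays a winding-$0$ shape (\texttt{normal}) on either side of a winding-$1$ shape (\texttt{hd}); together with \texttt{humped} and \texttt{dipped} this produces four distinct shapes.

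The hard part will be making (c) uniform across \emph{all} parameter regimes. When $W(a,b,c)$ has one or two zeros (the scale-separated case and the scale-proximal case with $\rho < 0$), $g'$ can also vanish at cusps and the contact point $\eta(x_1)$ need no longer lie in $Q_n \cup Q_i$, so one cannot simply invoke a global extremum of $g$. The remedy I would pursue is again the strict monotonicity of the slope $s$ (\cref{lem:envelope_slope}): it caps the number of intersections of any slope-$(-1)$ line with $\eta$ and forces the tangency at $\eta(x_1)$ to be two-sided, so that nearby secant lines still cut $\eta$ in exactly two points and open a genuine winding-$1$ pocket. The delicate residual point is to check that the shape of this pocket is new rather than already present --- in particular when $\eta(x_1)$ lies in $Q_h$ or $Q_d$ --- and here I would combine the regime-by-regime data of \cref{lem:singular} with the explicit endpoint and initial/terminal-direction information of \cref{lem:fw_initial}, \cref{lem:fw_terminal} (and their yield-curve analogues) to exhibit the required secant short rate in each case.
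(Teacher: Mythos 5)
Your parts (a) and (b) follow the paper's argument essentially verbatim: the slope $-1$ of the short-rate line lies strictly between the slopes of $\ell_0$ and $\ell_\infty$ in both the forward and yield case, so the two unbounded ends of $h_r$ lie in $Q_h$ and $Q_d$ outside $\hat\eta$ (giving \texttt{humped} and \texttt{dipped} with $E=1$), and the bounded middle piece lies in $Q_n$ or $Q_i$ where $E$ is even, with the degenerate line through $M$ handled by the boundary rule of \cref{rem:online}. This is fine.

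Part (c), however, is where your route diverges from the paper's and where it is left with a genuine gap. You anchor the argument on the unique parameter $x_1$ with $s(x_1)=-1$ and on perturbing the tangent line of slope $-1$ at $\eta(x_1)$, and you yourself concede that when $W(a,b,c)$ has zeros the contact point $\eta(x_1)$ need not lie in $Q_n\cup Q_i$, that $g$ need not have an extremum there, and that the resulting ``pocket'' might not carry a new shape; your proposed fix is an unexecuted regime-by-regime analysis via \cref{lem:singular}, \cref{lem:fw_initial} and \cref{lem:fw_terminal}. That case analysis is precisely the part that would have to be done, and it is not done, so (c) is not proved. The paper sidesteps the tangency point entirely: by the strict monotonicity of the slope function (\cref{lem:vasicek}, \cref{lem:envelope_slope}) a line of slope $-1$ can be tangent to $\eta$ at \emph{only one} point, while $\eta$ always meets the interior of $Q_n$ or $Q_i$; choosing any \emph{other} regular point of $\eta$ in that interior and running the slope-$(-1)$ line through it yields a \emph{transversal} crossing of $\eta$ inside $Q_n$ (or $Q_i$), so the winding number jumps by one across it and the two sides carry two distinct even-$E$ shapes, which together with \texttt{humped} and \texttt{dipped} from (a) gives four. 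If you want to salvage your version of (c), the cleanest repair is to abandon the tangency construction and adopt this transversality argument, which is uniform over all parameter regimes; the only input you would still need to justify is that $\eta$ meets the interior of $Q_n\cup Q_i$, which follows from the start/end data of \cref{lem:fw_initial} and \cref{lem:fw_terminal} (and their yield analogues).
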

    \begin{proof}Fixing the short rate $r_t = r$ corresponds to choosing a line $h$ of slope $-1$ in the $(z_1,z_2)$-plane. The slopes of $\ell_0$ and $\ell_\infty$ are $(-\tfrac{\lambda_1}{\lambda_2},-\infty)$ in the forward case and $(-\tfrac{\lambda_1}{\lambda_2},-\tfrac{\lambda_2}{\lambda_1})$ in the yield case. Both intervals contain $-1$, such that $h$ intersects both $Q_h$ and $Q_d$, even on the outside of $\hat \eta$, showing (a). If it avoids the midpoint $M$ it also intersects a third quadrant and (b) is shown. If it meets $M$, then either $M \in \hat \eta$, in which case the shape at $M$ is either \texttt{n} or \texttt{i} by \cref{rem:online}, or $M \not \in \eta$, in which case $M$ borders four regions with different shapes by \cref{cor:configuration}, also showing (b). The envelope always intersects with the interior of $Q_n$ or $Q_i$ and a line with slope $-1$ can be tangent to it only at a single point. By taking some other regular point on the envelope in one of the mentioned quadrants, and running a line of slope $-1$ through it we get two distinct shapes that are not \texttt{d} or \texttt{h}, showing (c). 
    \end{proof}
We give an example which shows that even five different shapes can occur for a single given short rate.
        \begin{figure}
                \centering
                \includegraphics[width=1\textwidth]{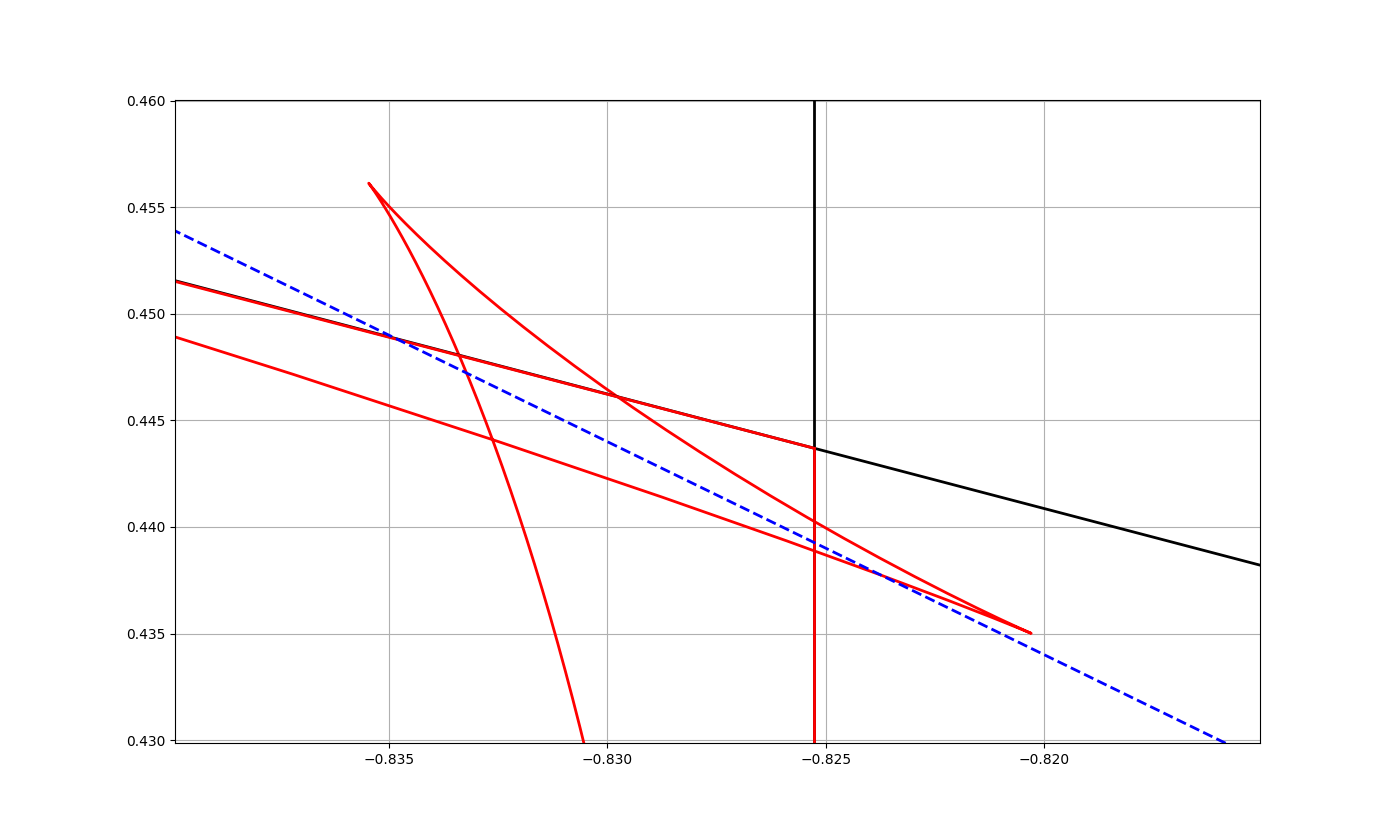}
                \caption{The blue line (dashed) corresponding to the chosen short rate $r= -0.386$ passes through 5 different regions of the state space. Parameter values are the same as in \cref{fig:sp} and $\kappa = 0.0$.}
                \label{fig:srl}
        \end{figure}
        \begin{figure}
                \centering
                \includegraphics[width=1\textwidth]{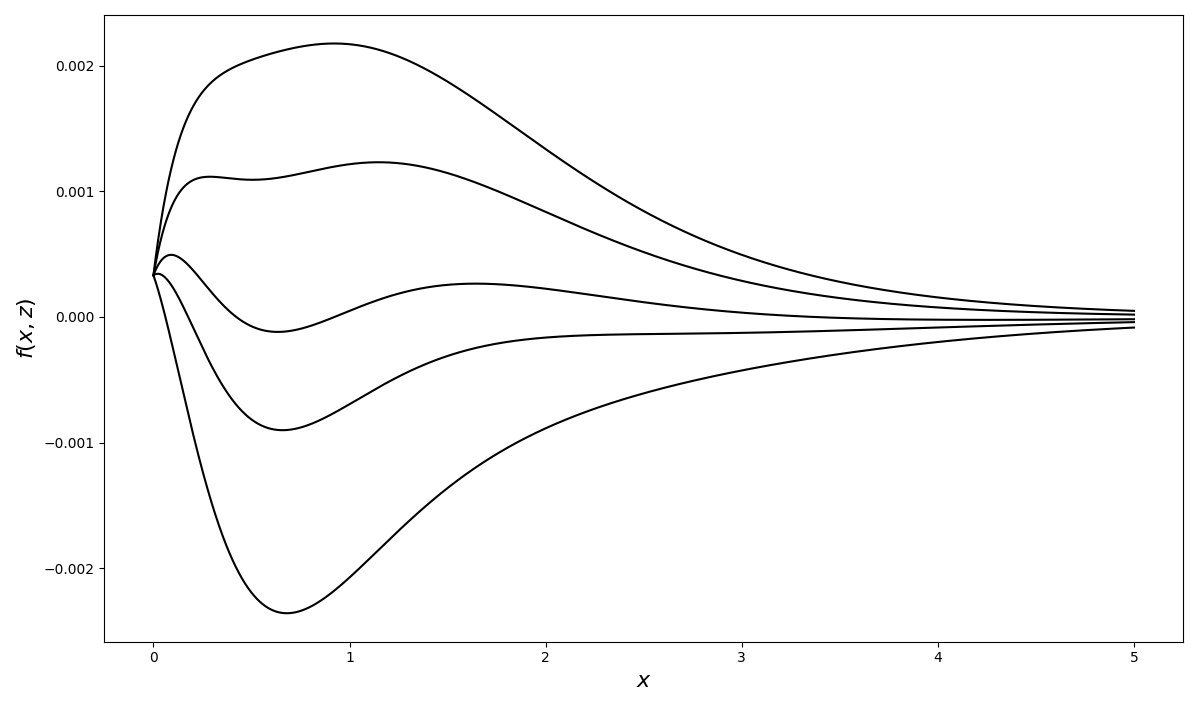}
                \caption{Different forward curves for the same short rate, with shapes from top to bottom: \texttt{h}, \texttt{hdh}, \texttt{hdhd}, \texttt{hd}, \texttt{d}}
                \label{fig:srf}
        \end{figure}
    
    \subsection{Computing probabilities of shapes}

 With the help of \cref{thm:main} we are able to tell which shape the forward curve has in a given point $z\in\RR^2$ of the state-space. We can use that to efficiently compute the probability of observing a certain shape (under both risk-neutral or physical measure) by Monte-Carlo simulation. For an example, we use parameters that were already used in \cite[Fig.3]{diez2020yield}, but change the value of $\sigma_2$ slightly for visualization purposes.
    \begin{align*}
        \rho = -0.996,\ \lambda_1 = 0.178,\ \lambda_2 = 0.401,\\ \sigma_1 = 0.0372,\ \sigma_2 = 0.037,\ \theta_1 = 0.0,\ \theta_2 = 0.01297.
    \end{align*}
    For this set of parameters the 5 different shapes $\texttt{n}$, $\texttt{h}$, $\texttt{d}$, $\texttt{i}$ and $\texttt{dh}$ occur. To determine the winding number of the augmented envelope around some sample point we use an axis-crossing algorithm from \cite{alciatore1995winding}, which is basically \cref{thm:non-zero} applied to polygons. We sample according to the risk-neutral stationary distribution of the Ornstein-Uhlenbeck process $(Z_t)_{t \ge 0}$, given our parameters. In \cref{fig:Samples}, 3000 samples, together with the state-space decomposition can be seen. Using 100.000 sample points we get the following estimated probabilities:
    \begin{align*}
        \texttt{n}: 0.09076,\ \texttt{h}: 0.4087,\ \texttt{d}: 0.25521,\ \texttt{i}: 0.2247,\ \texttt{dh}: 0.02063.
    \end{align*}
    For `rare' regions the efficiency of the Monte-Carlo estimation can be improved by using importance sampling.

    \begin{figure}
        \centering
        \includegraphics[width=1\textwidth]{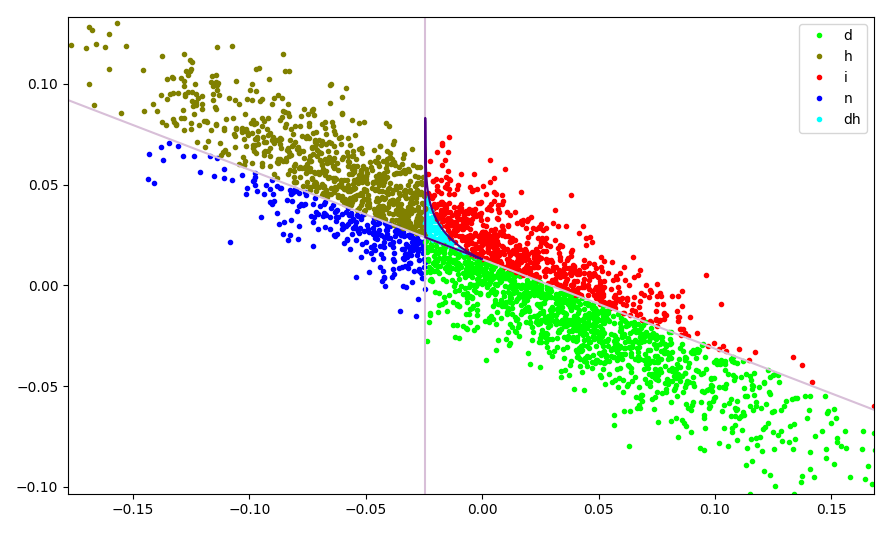}
        \caption{Samples from the risk-neutral stationary distribution, colored by shape of forward curve}
        \label{fig:Samples}
    \end{figure}
    
\section{Outlook}

We have introduced a new method, the method of envelopes, to determine all state-contingent shapes of the forward- and the yield-curve in the two-dimensional Vasicek model. Some minor questions related to the complete classification of shapes and transition graphs in the scale-proximal case with $\rho < 0$ remain open. In principle, our method could be applied to other two-dimensional affine term structure models (see \cite{dai2000specification}), but it remains to check conditions \ref{item:nonzero} -- \ref{item:oblique} from \cref{sec:def}, or -- if they are not satisfied -- to appropriately adapt the theory. Even non-affine models seem to be within the reach of the method, since the concept of envelopes can be applied to any family of differentiable curves, not just of lines, cf. \cite{bruce1992curves}. Finally, it remains open, whether (and how) our approach can be extended to dimensions higher than two, i.e., to interest rate models with three or more factors, when lines and curves have to be replaced by hyperplanes and hypersurfaces, and when no obvious analogue of a winding number exists. 
    
\appendix
\section{Additional results}
  The following result shows some important connections between the envelopes in the forward- and in the yield-curve case.

    \begin{lem}\label{lem:start}
    The following holds true for the envelopes $\eta_{\rm f}$ and $\eta_{\rm y}$ associated to the forward- and to the yield-curve respectively in the two-dimensional Vasicek model: 
    
    \begin{enumerate}[(a)]
    \item The envelopes $\eta_{\rm f}$ and $\eta_{\rm y}$ start in the same point, i.e., $\eta_{\rm f}(0) = \eta_{\rm y}(0)$;
    \item The endpoint of $\eta_{\rm y}$ is the intersection of $\ell_\infty^{\rm f}$ and $\ell_\infty^{\rm y}$;
    \item The number of zeros of $W(a_{\rm{y}}, b_{\rm{y}}, c_{\rm{y}})$ is less or equal than the number of zeros of $W(a_{\rm{f}}, b_{\rm{f}}, c_{\rm{f}})$;
    \item Every cusp point of $\eta_{\rm y}$ is also a point of $\eta_{\rm f}$.
    \end{enumerate}
    \end{lem}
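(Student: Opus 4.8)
The plan is to reduce all four claims to one structural fact: the yield coefficient functions are an explicit integral transform of the forward ones. Since $y(x,z)=\frac1x\int_0^x f(s,z)\,ds$, differentiating and integrating by parts gives $\partial_x y(x,z)=\frac1{x^2}\int_0^x s\,\partial_s f(s,z)\,ds$; matching coefficients in \eqref{eq:fy_diff} yields, with $T[g](x):=\frac1{x^2}\int_0^x s\,g(s)\,ds$,
\[(a_{\rm y},b_{\rm y},c_{\rm y})=T[(a_{\rm f},b_{\rm f},c_{\rm f})],\qquad\text{equivalently}\qquad g_{\rm y}'(x)=\tfrac1x\bigl(g_{\rm f}(x)-2g_{\rm y}(x)\bigr)\ \ (g\in\{a,b,c\}).\]
Only $m_{\rm f},m_{\rm y}$ need checking, by linearity of $\cA$. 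Substituting the pointwise identity into the Wronskians gives the two tools I will use: the mixed identity $W(u_{\rm y},v_{\rm y})=\frac1x(u_{\rm y}v_{\rm f}-v_{\rm y}u_{\rm f})$, and---using that identity and its derivative to row-reduce the lower two rows of $(g_{\rm y},g_{\rm y}',g_{\rm y}'')$---the factorization
\[W(a_{\rm y},b_{\rm y},c_{\rm y})(x)=\frac{1}{x^2}\det\begin{psmallmatrix}a_{\rm y}&b_{\rm y}&c_{\rm y}\\ a_{\rm f}&b_{\rm f}&c_{\rm f}\\ a_{\rm f}'&b_{\rm f}'&c_{\rm f}'\end{psmallmatrix}=\frac{W(b_{\rm f},c_{\rm f})(x)}{x^2}\,F_{\rm y}\bigl(x,\eta_{\rm f}(x)\bigr),\]
where $F_{\rm y}(x,z):=a_{\rm y}(x)+b_{\rm y}(x)z_1+c_{\rm y}(x)z_2$ and the last equality uses Cramer's rule on \eqref{eq:system} to write the cofactors $W(c_{\rm f},a_{\rm f}),W(a_{\rm f},b_{\rm f})$ as $\eta_{{\rm f},1}W(b_{\rm f},c_{\rm f}),\eta_{{\rm f},2}W(b_{\rm f},c_{\rm f})$.

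For (a) I would Taylor-expand the transform at $x=0$: $T[g](0)=\frac12 g(0)$ and $(T[g])'(0)=\frac13 g'(0)$, so $(a_{\rm y},b_{\rm y},c_{\rm y})(0)=\frac12(a_{\rm f},b_{\rm f},c_{\rm f})(0)$ and $(a_{\rm y}',b_{\rm y}',c_{\rm y}')(0)=\frac13(a_{\rm f}',b_{\rm f}',c_{\rm f}')(0)$. By continuity and \ref{item:limits}, $\eta(0)$ is the intersection of the lines $\{F(0,\cdot)=0\}$ and $\{\partial_xF(0,\cdot)=0\}$; since the yield equations are nonzero multiples of the forward ones, they define the same two lines, whence $\eta_{\rm y}(0)=\eta_{\rm f}(0)$. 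For (b), $\eta_{\rm y}(\infty)$ lies on $\ell_\infty^{\rm y}$ by \cref{lem:contact}, so it suffices to show its first coordinate is $u_1$, i.e.\ that it also lies on $\ell_\infty^{\rm f}$. From the mixed identity, $\eta_{{\rm y},1}=(c_{\rm y}a_{\rm f}-a_{\rm y}c_{\rm f})/(b_{\rm y}c_{\rm f}-c_{\rm y}b_{\rm f})$. Because $\lambda_1<\lambda_2$, the term $b_{\rm f}\sim-\lambda_1e^{-\lambda_1x}$ carries the slowest rate and $c_{\rm f}/b_{\rm f}\to0$, while the coefficient of $e^{-\lambda_1x}$ in $a_{\rm f}$ equals $u_1\lambda_1$, so $a_{\rm f}/b_{\rm f}\to-u_1$ for \emph{every} value of $u_1$; dividing numerator and denominator by $x^{-2}b_{\rm f}$ and using $x^2b_{\rm y}\to-1/\lambda_1$, $x^2c_{\rm y}\to-1/\lambda_2$ gives the finite limit $\eta_{{\rm y},1}(\infty)=u_1$. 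Hence $\eta_{\rm y}(\infty)$ is a proper point equal to $\ell_\infty^{\rm f}\cap\ell_\infty^{\rm y}$.

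The heart of the argument is one identity driving both (c) and (d). Put $\psi(x):=F_{\rm y}(x,\eta_{\rm f}(x))$, so by the factorization the zeros of $W(a_{\rm y},b_{\rm y},c_{\rm y})$ on $(0,\infty)$ are exactly the zeros of $\psi$. Differentiating only the explicit $x$-dependence and using the pointwise identity together with $F_{\rm f}(x,\eta_{\rm f}(x))=0$ (the forward envelope lies on its own line) collapses the coefficient derivative to
\[\partial_x F_{\rm y}\bigl(x,\eta_{\rm f}(x)\bigr)=a_{\rm y}'+b_{\rm y}'\eta_{{\rm f},1}+c_{\rm y}'\eta_{{\rm f},2}=\tfrac1x\bigl[F_{\rm f}(x,\eta_{\rm f}(x))-2\psi(x)\bigr]=-\tfrac2x\,\psi(x).\]
For (d): at a cusp of $\eta_{\rm y}$ at parameter $x_0$ we have $W(a_{\rm y},b_{\rm y},c_{\rm y})(x_0)=0$ by \cref{lem:singular0}, hence $\psi(x_0)=0$ and, by the identity above, $\partial_xF_{\rm y}(x_0,\eta_{\rm f}(x_0))=0$ too; thus $\eta_{\rm f}(x_0)$ solves the yield envelope system \eqref{eq:system} at $x_0$, which has a unique solution by \ref{item:Wbc}, so $\eta_{\rm f}(x_0)=\eta_{\rm y}(x_0)$ and the cusp is a point of $\eta_{\rm f}$. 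For (c), the chain rule $\psi'=\partial_xF_{\rm y}+\nabla_zF_{\rm y}\!\cdot\!\eta_{\rm f}'$, the tangent direction $\eta_{\rm f}'\parallel(-c_{\rm f},b_{\rm f})$ from \cref{lem:envelope_slope}, and the mixed identity combine to give $\Psi:=x^2\psi$ with
\[\Psi'(x)=x^2\,\nabla_zF_{\rm y}\!\cdot\!\eta_{\rm f}'=-\frac{x^3\,W(a_{\rm f},b_{\rm f},c_{\rm f})(x)\,W(b_{\rm y},c_{\rm y})(x)}{W(b_{\rm f},c_{\rm f})(x)^2}.\]
Since $x>0$ and $W(b_{\rm y},c_{\rm y}),W(b_{\rm f},c_{\rm f})$ never vanish by \ref{item:Wbc}, the critical points of $\Psi$ are exactly the zeros of $W(a_{\rm f},b_{\rm f},c_{\rm f})$. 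By (a) there is the boundary zero $\psi(0)=\frac12F_{\rm f}(0,\eta_{\rm f}(0))=0$, so $\Psi(0)=0$; if $\Psi$ had $m$ zeros in $(0,\infty)$, Rolle's theorem on the $m$ intervals cut out by $0$ and those zeros forces at least $m$ critical points of $\Psi$, i.e.\ $\#\{W(a_{\rm f},b_{\rm f},c_{\rm f})=0\}\ge m=\#\{W(a_{\rm y},b_{\rm y},c_{\rm y})=0\}$, which is (c).

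The main obstacle is organizational rather than conceptual: deriving the three-by-three factorization and the identity $\partial_xF_{\rm y}(x,\eta_{\rm f}(x))=-\frac2x\psi$ requires careful row reduction and consistent Wronskian sign conventions. The only genuinely analytic point is the limit in (b); its potential pitfall, the degenerate case $u_1=0$, is dissolved by the uniform statement $a_{\rm f}/b_{\rm f}\to-u_1$, valid because $b_{\rm f}$ alone carries the slowest rate $e^{-\lambda_1x}$.
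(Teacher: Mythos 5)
Your proof is correct, and its overall skeleton (the integral transform $g_{\rm y}=T[g_{\rm f}]$, the limits $g_{\rm y}(0)=\tfrac12 g_{\rm f}(0)$, $g_{\rm y}'(0)=\tfrac13 g_{\rm f}'(0)$ for (a), and the row reduction turning the $3\times 3$ yield Wronskian into $\tfrac{1}{x^2}\det(g_{\rm y};g_{\rm f};g_{\rm f}')$ for (d)) is essentially the paper's. Where you genuinely diverge is in (b) and, more interestingly, in (c). For (c) the paper argues by contradiction: if $\eta_{\rm y}$ had more cusps than $\eta_{\rm f}$, two consecutive contact times of the two envelopes would bracket an interval free of forward cusps, and Cauchy's mean value theorem applied to the two curve segments would force parallel tangents, i.e.\ $W(b_{\rm y},c_{\rm y})=0$, contradicting \ref{item:Wbc}. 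You instead introduce the scalar function $\Psi(x)=x^2 F_{\rm y}(x,\eta_{\rm f}(x))$, observe that its zeros are exactly those of $W(a_{\rm y},b_{\rm y},c_{\rm y})$ (via the factorization) and that $\Psi'$ is an explicit nonvanishing multiple of $W(a_{\rm f},b_{\rm f},c_{\rm f})$, and then apply Rolle anchored at the boundary zero $\Psi(0)=0$ inherited from (a); this replaces the somewhat delicate vector-valued mean value argument by a one-dimensional one and makes the zero count completely explicit --- I verified the key identities $\partial_xF_{\rm y}(x,\eta_{\rm f}(x))=-\tfrac2x\psi(x)$ and $\Psi'=-x^3 W(a_{\rm f},b_{\rm f},c_{\rm f})W(b_{\rm y},c_{\rm y})/W(b_{\rm f},c_{\rm f})^2$, and both hold. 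For (b) the paper reduces the yield envelope system to the pair $F_{\rm y}=F_{\rm f}=0$ and passes to the limit, whereas you compute $\lim_{x\to\infty}\eta_{{\rm y},1}(x)=u_1$ directly; your route is longer but has the side benefit of showing the endpoint is a proper (finite) point. One cosmetic remark: the cofactor relation you invoke, $W(c,a)=\eta_1 W(b,c)$ without a minus sign, is the one consistent with $F(x,\eta(x))=0$ and with \eqref{eq:initial_again}, although it disagrees in sign with the display in \cref{lem:envelope_slope}; this is an inconsistency internal to the paper and does not affect your argument.
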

    \begin{proof}
     By \cref{lem:envelope_slope}, the starting point of the envelope $\eta_{\rm f}$ can be written as
        \begin{equation}\label{eq:initial_again}
            \eta_{\rm f}(0) = \left(\frac{W(c_{\rm f}, a_{\rm f})(0)}{W(b_{\rm f}, c_{\rm f})(0)}, \frac{W(a_{\rm f}, b_{\rm f})(0)}{W(b_{\rm f}, c_{\rm f})(0)}\right).
        \end{equation}
    Let $g$ stand for any of $\set{a,b,c}$ and distinguish between $g_{\rm f}$ and $g_{\rm y}$ according to the case of forward- and yield-curve. From \eqref{eq:fy} it holds that 
        \begin{align*}
            g_{\rm y}(x) = \frac{1}{x^2}\int_0^x\zeta g_{\rm f}(\zeta)\dif\zeta
        \end{align*}
        for any choice of $g \in \set{a,b,c}$. 
        We are interested in the values of $g_{\rm y}(0)$ and $g_{\rm y}'(0)$. Using L'H\^opital's rule twice yields
        \begin{align*}
            g_{\rm y}(0) = \frac{g_{\rm f}(0)}{2}.
        \end{align*}
        Differentiating $g_{\rm y}$ yields
        \begin{align*}
            g_{\rm y}'(x) = \frac{1}{x}\left[g_{\rm f}(x) - \frac{2}{x^2}\int_0^x\zeta g_{\rm f}(\zeta)\dif\zeta\right]
        \end{align*}
        and using L'H\^opital's rule again we get
        \begin{align*}
            g_{\rm y}'(0) = \lim_{x\downarrow 0}\left(g_{\rm f}'(x) + \frac{2}{x}\left[\frac{2}{x^2}\int_0^x\zeta g_{\rm f}(\zeta)\dif\zeta - g_{\rm f}(x)\right]\right) = g_{\rm f}'(0) - 2g_{\rm y}'(0), 
        \end{align*}
and therefore $g_{\rm y}'(0) = \tfrac{g_{\rm f}'(0)}{3}$. Applying these results to $a,b$ and $c$ in \eqref{eq:initial_again} yields $\eta_{\rm y}(0) = \eta_{\rm f}(0)$, showing (a). 
Differentiating $g_{\rm y}$ once again, we obtain the identities
\begin{align*}
            g_{\rm y}'(x) &= \frac{1}{x} g_{\rm f}(x) - \frac{2}{x}g_{\rm y}(x)\\
            g_{\rm y}''(x) &= -\frac{3}{x^2}g_{\rm f}(x) + \frac{1}{x}g_{\rm f}'(x) + \frac{6}{x^2}g_{\rm y}(x).
        \end{align*}
Applying this to $a,b$ and $c$ yields
        \begin{align*}
            \begin{pmatrix} a_{\rm y}(x) & b_{\rm y}(x) & c_{\rm y}(x)\\
                            a_{\rm y}'(x) & b_{\rm y}'(x) & c_{\rm y}'(x)\\
                            a_{\rm y}''(x) & b_{\rm y}''(x) & c_{\rm y}''(x) \end{pmatrix}
            \begin{pmatrix} 1 \\ z_1 \\ z_2\end{pmatrix} = 0 \Leftrightarrow
            \begin{pmatrix} a_{\rm y}(x) & b_{\rm y}(x) & c_{\rm y}(x)\\
                            a_{\rm f}(x) & b_{\rm f}(x) & c_{\rm f}(x)\\
                            a_{\rm f}'(x) & b_{\rm f}'(x) & c_{\rm f}'(x) \end{pmatrix}
            \begin{pmatrix} 1 \\ z_1 \\ z_2\end{pmatrix} = 0.
        \end{align*}
If the left hand side is satisfied by a point $z = (z_1, z_2) = \eta_{\rm y}(x)$, then this point is a cusp point of $\eta_{\rm y}$ and it holds that $W(a_{\text{y}}, b_{\text{y}}, c_{\text{y}})(x) = 0$. The right hand side then implies that $z$ is also part of $\eta_{\rm f}$, i.e., $z = \eta_{\rm f}(x)$, showing (d).  Now suppose that $\eta_{\rm y}$ has more cusp points than $\eta_{\rm f}$. Then there must be times $0 \le x_1 < x_2$ (either the starting time or times of cusps of $\eta_{\rm y}$) such that $\eta_{\rm y}$ touches $\eta_{\rm f}$ at these points, and that $\eta_{\rm f}$ has \emph{no} cusp point in $(x_1, x_2)$. Cauchy's mean value theorem applied to the corresponding segments of $\eta_{\rm f}$ and $\eta_{\rm y}$ implies that their tangent vectors must be parallel at some $\xi \in (x_1, x_2)$; equivalently it must hold that 
        \begin{align*}
            \begin{vmatrix} b_{\rm f}(\xi) & c_{\rm f}(\xi) \\ b_{\rm y}(\xi) & c_{\rm y}(\xi)\end{vmatrix} = 0. 
        \end{align*}
Applying the identities from above this is equivalent to 
        \begin{align*}
            W(b_{\rm y}, c_{\rm y})(\xi) = 0, 
        \end{align*}
        which contradicts \ref{item:Wbc}, and completes the proof of (c). To show (b), note that the above identities imply that        \begin{align*}
            \begin{pmatrix} a_{\rm y}(x) & b_{\rm y}(x) & c_{\rm y}(x)\\
                            a_{\rm y}'(x) & b_{\rm y}'(x) & c_{\rm y}'(x) \end{pmatrix}
            \begin{pmatrix} 1 \\ z_1 \\ z_2\end{pmatrix} = 0 \Leftrightarrow
            \begin{pmatrix} a_{\rm y}(x) & b_{\rm y}(x) & c_{\rm y}(x)\\
                            a_{\rm f}(x) & b_{\rm f}(x) & c_{\rm f}(x) \end{pmatrix}
            \begin{pmatrix} 1 \\ z_1 \\ z_2\end{pmatrix} = 0.
        \end{align*}
        and (b) follows by taking the limit $x\rightarrow\infty$.
    \end{proof}

\bibliographystyle{alpha}
\bibliography{references}
\end{document}